\title{Ensemble Kalman Filters with Resampling}
\author{Omar Al-Ghattas\thanks{Department of Statistics, University of Chicago, Chicago, IL (\email{ghattas@uchicago.edu,sanzalonso@uchicago.edu}}).
     \and Jiajun Bao\thanks{Committee on Computational and Applied Mathematics, University of Chicago, Chicago, IL (\email{jiajunb@uchicago.edu}}).
  \and Daniel Sanz-Alonso\footnotemark[1]
 }
\newcommand{\R}{\mathbb{R}}
\newcommand{\N}{\mathbb{N}}
\newcommand{\E}{\operatorname{\mathbb{E}}}
\newcommand{\Sam}{N}
\newcommand{\sam}{n}
\renewcommand{\H}{\mathcal{H}}
\newcommand{\msK}{\mathscr{K}}
\newcommand{\Nc}{\mathcal{N}}
\newcommand{\iid}{\stackrel{\text{i.i.d.}}{\sim}}
\renewcommand{\phi}{\varphi}
\pgfplotsset{compat=newest}
\newcommand{\comments}[1]{}
\newcommand{\upperRomannumeral}[1]{\uppercase\expandafter{\romannumeral#1}}
\renewcommand{\hat}{\widehat}
\newcommand{\tdiag}{\mathrm{diag}}
\newcommand{\ttrace}{\mathrm{Tr}}
\newcommand{\indicator}{{\bf{1}}}
\renewcommand{\H}{\mathbb{H}}
\renewcommand{\P}{\mathbb{P}}
\newcommand{\hatSigma}{\widehat{\Sigma}}
\renewcommand{\H}{H}
\newcommand{\mfq}{\mathfrak{q}}
\newcommand{\whatC}{\widehat{C}}
\newcommand{\whatO}{\widehat{O}}
\newcommand{\hatO}{\hat{O}}
\newcommand{\hatsigma}{\hat{\sigma}}
\newcommand{\hatGamma}{\hat{\Gamma}}
\newcommand{\overbar}[1]{\mkern 1.5mu\overline{\mkern-1.5mu#1\mkern-1.5mu}\mkern 1.5mu}
\newcommand{\barX}{\overbar{X}}
\newcommand{\bareta}{\overbar{\eta}}
\newcommand{\mcS}{\mathcal{S}}
\newcommand{\msC}{\mathscr{C}}
\newcommand{\msM}{\mathscr{M}}
\newcommand{\inparen}[1]{\left(#1\right)}             
\newcommand{\insquare}[1]{\left[#1\right]}             
\newcommand{\norm}[1]{\ensuremath{\left\lVert #1 \right\rVert}}
\newcommand{\normn}[1]{\ensuremath{\lVert #1 \rVert}}
\newcommand{\abs}[1]{\ensuremath{\left\lvert #1 \right\rvert}}
\newcommand{\tvar}{\mathrm{var}}
\newcommand{\mpost}{\mu}
\newcommand{\mpr}{m}
\newcommand{\hatmpost}{\widehat{\mu}}
\newcommand{\hatmpr}{{\widehat{m}}}
\newcommand{\Cpost}{\Sigma}
\newcommand{\Cpr}{C}
\newcommand{\hatCpost}{\widehat{\Sigma}}
\newcommand{\hatCpr}{\widehat{C}}
\newcommand{\upr}{u}
\newcommand{\xiCovariance}{\Xi}
\renewcommand{\phi}{\varphi}
\newcommand{\barxi}{\bar{\xi}}
\newcommand{\hatxiCovariance}{\hat{\xiCovariance}}
\begin{document}

\maketitle

\begin{abstract}
Filtering is concerned with online estimation of the state of a dynamical system from partial and noisy observations. In applications where the state of the system is high dimensional, ensemble Kalman filters are often the method of choice. These algorithms rely on an ensemble of interacting particles to sequentially estimate the state as new observations become available. Despite the practical success of ensemble Kalman filters, theoretical understanding is hindered by the intricate dependence structure of the interacting particles. This paper investigates ensemble Kalman filters that incorporate an additional resampling step to break the dependency between particles. 
The new algorithm is amenable to a theoretical analysis that extends and improves upon those available for filters without resampling, while also performing well in numerical examples.
\end{abstract}

\begin{keywords}
Ensemble Kalman filter, resampling, non-asymptotic error bounds
\end{keywords}

\begin{AMS}
    62F15, 68Q25, 60G35, 62M05
\end{AMS}

\section{Introduction}\label{sec:introduction} 
The filtering problem of estimating a time-evolving state from partial and noisy observations arises in numerous applications, including numerical weather prediction, automatic control, robotics, signal processing, machine learning, and finance \cite{sarkka2023bayesian,crisan2011oxford,reich2015probabilistic,asch2016data,law2015data,majda2012filtering,sanzstuarttaeb2023}. 
When the state is high dimensional and the dynamics governing its evolution are complex, the method of choice is often the ensemble Kalman filter ($\mathsf{EnKF}$) \cite{evensen1994sequential,evensen1996assimilation,evensen2009data,houtekamer2016review,evensen2022data}. 
In this filtering algorithm, a Kalman gain matrix defined via the first two moments of an ensemble of particles determines the relative importance given to the dynamics and the observations in estimating the state.
The size of the ensemble controls both the accuracy and the computational cost of the algorithm. Operational implementations of $\mathsf{EnKF}$ give accurate state estimation with a moderate ensemble size, significantly smaller than the state dimension \cite{houtekamer2016review}. However, non-asymptotic theory that explains the successful performance of $\mathsf{EnKF}$ with moderate ensemble size is still not fully developed. An important impediment to such a theory is the presence of correlations between particles, since the Kalman gain used to update each particle depends on the entire ensemble. 
This paper investigates a modification of $\mathsf{EnKF}$ that incorporates a resampling step to break these correlations. The new algorithm
is amenable to a theoretical analysis that extends and improves upon those available for filters without resampling, while also maintaining a similar empirical performance.

\subsection{Resampling in Filtering Algorithms}
Resampling techniques are routinely employed to enhance particle filtering algorithms which assimilate observations by weighting particles according to their likelihood \cite{del2004feynman,doucet2009tutorial}. For particle filters, resampling converts weighted particles into unweighted ones to alleviate weight degeneracy and achieve variance reduction at later times \cite[Chapter 9]{chopin2020introduction}. In contrast, $\mathsf{EnKF}$ assimilates observations by using \emph{unweighted} particles and relying on a Gaussian \emph{ansatz} and Kalman-type formulae.  $\mathsf{EnKF}$ avoids weight degeneracy by design, but remains vulnerable to filter divergence and ensemble collapse \cite{harlim2010catastrophic,kelly2015concrete}; several works have proposed using resampling to remedy these issues.

An early discussion of resampling for $\mathsf{EnKF}$ can be found in \cite{anderson1999monte}, which replaces the standard Gaussian \emph{ansatz} with a more flexible sum of Gaussian kernels. The paper \cite{zhang2010improving} introduced bootstrap methods for identifying and alleviating the impact of spurious correlations, thereby enhancing the robustness of the Kalman gain. The work \cite{lawson2004implications} proposed a resampling scheme to improve the performance of deterministic filters in nonlinear settings. This method involves periodically resampling the ensemble based on a ``bootstrapping'' approach as suggested by \cite{anderson1999monte}, which is fundamentally based on a kernel density technique taken from the particle filtering literature. Closest to our work is the paper \cite{myrseth2013resampling}, which demonstrates that resampling the Kalman gain in the conditioning step of $\mathsf{EnKF}$ can help prevent the ensemble from collapsing over time, consequently enhancing ensemble stability and reliability. The numerical experiments in \cite{myrseth2013resampling} suggest that relative to the non-resampled setting, $\mathsf{EnKF}$ algorithms that employ resampling give more reliable prediction intervals with a slight trade-off in the accuracy of their point predictions. 

Ensemble Kalman methods are also used for offline parameter estimation and, relatedly, as numerical solvers for inverse problems, see e.g. \cite{gu2007iterative,aanonsen2009ensemble,li2007iterative,iglesias2013ensemble,chada2021iterative}. While not the focus of this paper, we point out that resampling techniques have also been investigated in this context. For instance,
\cite{wu2022resampled} removes particles that significantly deviate from the posterior distribution via a resampling procedure, thus improving the performance of standard implementations. A similar idea is also considered in 
\cite{wu2019improving}, which proposes adding an extra resampling step in each iterative cycle. This method improves the convergence of the iterative $\mathsf{EnKF}$ by perturbing the shrinking ensemble covariances to prevent early stopping while preserving the consistent Kalman update direction of standard implementations.

\subsection{Our Contributions}
Whereas previous work investigates resampling from a methodological viewpoint \cite{anderson1999monte,zhang2010improving,lawson2004implications,myrseth2013resampling}, the primary objective of this paper is to demonstrate that resampling strategies provide a promising approach to the design of ensemble Kalman algorithms with non-asymptotic theoretical guarantees. We consider a simple parametric resampling scheme: at the beginning of each filtering step, members of the ensemble are independently sampled from a Gaussian distribution whose mean and covariance match those of the ensemble at the previous time-step. Thereafter, the filtering step can be carried out using any of the numerous existing $\mathsf{EnKF}$ variants \cite{evensen2009data,tippett2003ensemble}. For the resulting algorithm, which we term $\mathsf{REnKF}$, we establish theoretical guarantees that extend and improve upon those available for filters without resampling. 

Our theoretical guarantees hold in the linear-Gaussian setting in which we provide a detailed error analysis of the ensemble mean and covariance as estimators of the mean and covariance of the filtering distributions, given by the Kalman filter \cite{kalman1960new}. Our theory covers both stochastic and deterministic dynamical systems; in addition, it covers both stochastic implementations based on \emph{perturbed observations} \cite{evensen2003ensemble} and deterministic implementations based on \emph{square-root} filters \cite{tippett2003ensemble,anderson2001ensemble,bishop2001adaptive}. Importantly, our error-bounds are \emph{non-asymptotic} and \emph{dimension-free}: they hold for any given ensemble size and are written in terms of the \emph{effective-dimension} of the covariance of the initial distribution, and of the dynamics and observation models. The non-asymptotic and dimension-free analysis of ensemble Kalman updates has recently been considered in \cite{ghattas2022non}, which demonstrated rigorously the success of ensemble Kalman updates whenever the ensemble size scaled with the effective dimension of the state as opposed to its ambient dimension. Given that ensemble Kalman algorithms are often employed in problems where the state dimension is very large, our results also contribute to the theoretical understanding of why ensemble methods are able to perform well even when the ensemble size is taken to be much smaller than the state dimension. This paper extends the results in \cite{ghattas2022non} by providing new bounds over multiple assimilation cycles. Our work may also be compared to \cite{majda2018performance}, which puts forward a non-asymptotic and dimension-free analysis of a multi-step $\mathsf{EnKF}$ that utilizes a different modification than the one used to define $\mathsf{REnKF}$. Specifically, \cite{majda2018performance} employs an additional projection step that determines the effective dimension of the method.

Other multi-step analyses were limited to square-root filters with deterministic dynamics \cite{kwiatkowski2015convergence,ghattas2022non} and to \emph{asymptotic} analysis of stochastic implementations \cite{kwiatkowski2015convergence}, which, while ensuring consistency of the filters, do not explain their practical success when deployed with a small ensemble size. The key reason why existing non-asymptotic analyses \cite{ghattas2022non} do not extend to stochastic implementations and dynamics is that these additional sources of randomness further complicate the correlations between particles, which we break via resampling. 

We numerically illustrate the theory in a linear setting and also demonstrate the successful performance of $\mathsf{REnKF}$ on the Lorenz 96 equations \cite{lorenz1996predictability}, a simplified model for atmospheric dynamics widely used to test filtering algorithms \cite{majda2006nonlinear,majda2012filtering,law2016filter,sanz2015long}. In our experiments, $\mathsf{REnKF}$ performs similarly to standard, non-resampled $\mathsf{EnKF}$ in fully and partially-observed settings. Moreover, the results are robust to the noise level in the dynamics and in the observations. Python code to reproduce all numerical experiments is publicly available at \href{https://github.com/Jiajun-Bao/EnKF-with-Resampling}{https://github.com/Jiajun-Bao/EnKF-with-Resampling}.

\subsection{Outline} The rest of the paper is organized as follows. Section \ref{sec:formulation} formalizes the problem setting and provides necessary background on $\mathsf{EnKF}$. Section \ref{sec:algorithms} introduces and analyzes the new $\mathsf{REnKF}$ algorithm. The main result, Theorem \ref{thm:MultiStepPOEnKFBounds}, gives non-asymptotic and dimension-free error bounds. We report numerical results that confirm and complement the theory in Section \ref{sec:numerics}. Proofs are collected in Section \ref{sec:proofmaintheorem}. We close in Section \ref{sec:Conclusions} with a discussion of our results and directions for future research.

\subsection{Notation} For a vector $u = \bigl(u(1),\dots, u(d)\bigr)^\top$ and $q \ge 1$, $|u|_q = (\sum_{i=1}^d |u(i)|^q)^{1/q}$ and $|u| = |u|_2$. For a random variable $X$ and $q \ge 1$, we write $\normn{X}_q = (\E|X|^q)^{1/q}$ and $\normn{X}=\normn{X}_2$. $X \sim \Nc(m, C)$ denotes that $X$ is a Gaussian random vector with mean $m$ and covariance $C$, and we denote its density at a point $x$ by $\Nc(x; m, C)$. $\mcS^d_+$ denotes the set of $d\times d$ symmetric positive-semidefinite matrices, and $\mcS^d_{++}$ denotes the set of $d\times d$ symmetric positive-definite matrices. For two $d\times d$ matrices $A,B$, $A \succ B$ implies $A-B \in \mcS^d_{++}$ and $A \succeq B$ implies $A-B \in \mcS^d_{+}$, and similarly for $\prec, \preceq$. For a $n \times m$ matrix $A = (A_{ij})^{n, m}_{i=1,j=1}$, the operator norm is given by $|A| = \sup_{\normn{v}_2=1} |Av|_2$. $\indicator \{S\}$ denotes the indicator of the set $S$. The identity matrix will be denoted by $I,$ and on occasion its dimension will be made explicit with a subscript. The $n\times m$ zero matrix will be denoted by $O_{n \times m}$.

\section{Problem Setting and Ensemble Kalman Filters}\label{sec:formulation}
We consider a $d$-dimensional unobserved \emph{state process} $\{u^{(j)}\}_{j \ge 0}$ and a $k$-dimensional \emph{observation process} $\{y^{(j)}\}_{j \ge 1}$ whose relationship over discrete time $j$ is governed by the following hidden Markov model:
\begin{alignat}{4}
&\text{(Initialization)} &  \qquad     u^{(0)} &\sim \Nc( \mu^{(0)}, \Sigma^{(0)}), \label{eq:HMMinitialization}\\
&\text{(Dynamics)}& \qquad         u^{(j)} &= \Psi(u^{(j-1)}) + \xi^{(j)},  \quad \quad  &\xi^{(j)} \iid \Nc(0, \Xi), \qquad j = 1, 2, \ldots \label{eq:HMMdynamics} \\
&\text{(Observation)}  & \qquad      y^{(j)} &= H u^{(j)} + \eta^{(j)},  \quad \quad  &\eta^{(j)} \iid \Nc(0, \Gamma),     \qquad j = 1, 2, \ldots \label{eq:HMMobservations}
\end{alignat}
We assume that the initial distribution $\Nc( \mu^{(0)}, \Sigma^{(0)}),$ where $\mu^{(0)}\in \R^d$, $\Cpost^{(0)}\in \mcS^d_{++}$, the model dynamics map $\Psi : \R^d \to \R^d,$ the observation matrix $H \in \R^{k \times d},$ and the dynamics and observation noise covariance matrices $\Xi \in \mcS^d_+, \Gamma \in \mcS^k_{++}$ are known; otherwise, they may be estimated from the observations, see e.g. \cite{evensen2022data,chen2022autodifferentiable,chen2023reduced}. We further assume that the random variables $u^{(0)},$  \(\{\xi^{(j)}\}_{j \ge 1}\), and  \(\{\eta^{(j)}\}_{j\ge 1}\) are mutually independent. 
All methods and theory presented in this paper extend immediately to dynamics and/or observation models that are not time homogeneous at the expense of a more cumbersome notation. Additionally, nonlinear observations can be dealt with by augmenting the state, see e.g. \cite{anderson2001ensemble}.

For a given time index $j \in \N,$ the \textit{filtering} goal is to compute the \emph{filtering distribution} $p \bigl(u^{(j)} | Y^{(j)}\bigr)$, where  $Y^{(j)}:= \{y^{(1)}, \ldots,y^{(j)}\}.$ The filtering distribution provides a probabilistic summary of the state  $u^{(j)}$ conditional on observations up to time $j$.
Given access to the filtering distribution at the preceding time-step $j-1$, $p \bigl(u^{(j)} | Y^{(j)}\bigr)$ may be obtained by the following two-step procedure:
\begin{alignat}{4}
  &\text{(Forecast)} \quad \quad   &p \bigl(u^{(j)} | Y^{(j-1)}\bigr) &&&= \int \Nc (u^{(j)}; \Psi( u^{(j-1)}), \Xi) p\bigl(u^{(j-1)} |Y^{(j-1)} \bigr) \, d u^{(j-1)},     \label{eq:TrueForecastStep}               \\
  &\text{(Analysis)}  &p \bigl(u^{(j)} | Y^{(j)}\bigr) &&& \propto  \Nc( y^{(j)}; H u^{(j)}, \Gamma) p \bigl(u^{(j)} | Y^{(j-1)} \bigr)\label{eq:TrueAnalysisStep}.
\end{alignat}
The \emph{forecast distribution} $p \bigl(u^{(j)} | Y^{(j-1)}\bigr)$ represents our knowledge of the state at time $j$ given observations up to time $j-1$, and its computation in \eqref{eq:TrueForecastStep} utilizes the dynamics model \eqref{eq:HMMdynamics}. In the analysis step \eqref{eq:TrueAnalysisStep}, the new observation $y_{j}$ is assimilated through an application of Bayes formula with prior given by the forecast distribution and likelihood determined by the observation model \eqref{eq:HMMobservations}. Closed-form expressions for the filtering and forecast distributions are only available for a small class of hidden Markov models \cite{papaspiliopoulos2014optimal}. For problems outside this class, many algorithms have been developed to approximate the filtering distributions, or, if this is too costly, to find point estimates of the state \cite{sarkka2023bayesian,sanzstuarttaeb2023}.   

This paper is concerned with $\mathsf{EnKF}$ algorithms that belong to the larger family of Kalman methods. These methods invoke a Gaussian \emph{ansatz} for the forecast distribution, so that Bayes formula in the analysis step can be readily applied using the conjugacy of the Gaussian forecast distribution and the Gaussian likelihood model \eqref{eq:HMMobservations}.
The distinctive feature of $\mathsf{EnKF}$ is that the Gaussian approximation is defined using the first two moments of an ensemble of particles. Then, in the analysis step each individual particle is updated with a Kalman gain matrix which incorporates the forecast covariance. Several stochastic and deterministic implementations for the analysis step have been proposed in the literature, see e.g. \cite{houtekamer2016review,tippett2003ensemble,evensen2009data}. In Algorithm~\ref{algEnKF}, an example of a stochastic implementation of $\mathsf{EnKF}$ ---commonly referred to as the Perturbed Observation $\mathsf{EnKF}$--- is provided for reference, and will be our focus for this work. 
At time $j=0,$ an \textit{initial ensemble} of $N$ particles are independently drawn from the initial distribution in \eqref{eq:HMMinitialization}. These ensemble members are then sequentially passed through forecast and analysis steps: In the forecast step, the ensemble is propagated through the system dynamics yielding the $j$-th \textit{forecast ensemble}. In the analysis step, the new observation $y^{(j)}$ is assimilated by updating each ensemble member according to a Kalman-type formula, yielding the $j$-th \textit{analysis ensemble}. Although the initial ensemble members are mutually independent, the dependence structure of the ensemble is highly non-trivial beginning at the analysis step at time $j=1$. Indeed, note that the Kalman Gain $K^{(1)}$ is a nonlinear transformation of the entire forecast ensemble, and this matrix is used to update each of the ensemble members when constructing the analysis ensemble. The recursive nature of the algorithm further complicates the dependence structure of the ensemble, rendering a non-asymptotic analysis highly challenging. 

The stochastic variant of $\mathsf{EnKF}$ in Algorithm~\ref{algEnKF} is arguably the most popular in applications \cite{evensen1994sequential,van2020consistent}. Unfortunately, as noted in \cite{furrer2007estimation, ghattas2022non} and further discussed in Section~\ref{sec:algorithms}, it is harder to analyze from a non-asymptotic viewpoint than deterministic variants of the $\mathsf{EnKF}.$

\begin{algorithm} 
\caption{\label{algEnKF} Ensemble Kalman Filter ($\mathsf{EnKF}$)}
\begin{algorithmic}[1]

\STATE {\bf Input}:  $\Psi, H, \Xi, \Gamma, \mu^{(0)}, \Sigma^{(0)}, N.$  Sequentially acquired data $\{y^{(j)}\}_{j\ge 1}.$
\STATE {\bf Initialization}: 
    $u_n^{(0)} \iid \Nc(\mu^{(0)},\Sigma^{(0)}), \quad 1\le n \le N.$
\STATE For $j = 1, 2, \ldots$ do the following forecast and analysis steps:
\STATE {\bf Forecast}: 
\begin{align}\label{eq:predEnKF}
\begin{split}
\widehat{u}_{n}^{(j)} &= \Psi(u_{n}^{(j-1)})+\xi^{(j)}_{n}, \quad \xi_n^{(j)} \iid \Nc(0,\Xi), \quad  1 \le n \le N, \\
\widehat{m}^{(j)} &= \frac{1}{\Sam}\sum^{\Sam}_{\sam=1} \widehat{u}_{n}^{(j)}, \qquad \widehat{C}^{(j)} = \frac{1}{\Sam-1}\sum^{\Sam}_{\sam=1}\bigl(\widehat{u}^{(j)}_{n}-\widehat{m}_{j}\bigr) \bigl(\widehat{u}^{(j)}_{n}-\widehat{m}_{j}\bigr)^\top.
\end{split}
\end{align}
\vspace{-0.5cm}
\STATE{{{\bf Analysis}}}: 
	\begin{align}\label{eq:analEnKF}
	\begin{split}
 K^{(j)} &= \widehat{C}^{(j)} H^\top (H \widehat{C}^{(j)} H^\top + \Gamma)^{-1}, \\
y^{(j)}_{\sam}&=y^{(j)}+ \eta^{(j)}_{n},\quad  \eta^{(j)}_{n} \iid \Nc(0,\Gamma), \quad  1 \le n \le N,\\
u_{n}^{(j)} &= (I-K^{(j)}H)\widehat{u}^{(j)}_{n}+K^{(j)}y_n^{(j)},\quad 1 \le \sam \le \Sam,  \\
\widehat{\mu}^{(j)} &= \frac{1}{N} \sum_{n=1}^N u_{n}^{(j)}, \qquad 
\widehat{\Sigma}^{(j)} = \frac{1}{N-1} \sum_{n=1}^N (u_{n}^{(j)} - \widehat{\mu}^{(j)})(u_{n}^{(j)} - \widehat{\mu}^{(j)})^\top.
\end{split}
\end{align}
\vspace{-0.5cm}
\STATE{\bf Output}:  Analysis mean $\widehat{\mu}^{(j)}$ and covariance $\widehat{\Sigma}^{(j)}$  for $j =1, 2, \ldots$  
\end{algorithmic}
\end{algorithm}

The output $\widehat{\mu}^{(j)}$ of $\mathsf{EnKF}$ gives a point estimate of the state $u^{(j)}$ at time $j$. For such a state-estimation task, $\mathsf{EnKF}$ is very effective \cite{law2012evaluating}.  Additionally, the output $\widehat{\Sigma}^{(j)}$ may be used to construct confidence intervals. However, as often noted in the literature \cite{ernst2015analysis,law2012evaluating} and further discussed in Section \ref{sec:numerics}, caution should be exercised when using ensemble Kalman algorithms for such uncertainty quantification tasks. 
$\mathsf{EnKF}$ performance for state estimation and uncertainty quantification tasks can be assessed by the error in approximating the mean and covariance of the filtering distributions; the theory in Subsection \ref{sec:theory} adopts such performance metrics. If the moments of the filtering distributions are not available, performance metrics such as root mean squared error and coverage of confidence intervals can be employed \cite{law2012evaluating}, and we do so in the numerical experiments in Section \ref{sec:numerics}. 

\section{Ensemble Kalman Filters with Resampling}\label{sec:algorithms}
In this section, we first introduce and motivate our main algorithm, $\mathsf{EnKF}$ with resampling ($\mathsf{REnKF}$). We then present the non-asymptotic theoretical analysis of $\mathsf{REnKF}$ in a linear model dynamics setting. 

\subsection{Main Algorithm}
The idea underlying $\mathsf{REnKF},$ which is outlined in Algorithm~\ref{algEnKFresample}, is to employ a resampling step at each filtering cycle to break the correlations between ensemble members described in Section~\ref{sec:formulation}. We consider here a particularly simple parametric resampling scheme in which at the beginning of each filtering cycle, ensembles are independently sampled from a Gaussian distribution whose mean and covariance match those of the analysis ensemble at the previous time step. Although the resampling mechanism can be made to be more sophisticated ---for example, one may consider nonparametric resampling schemes in which the empirical distribution of the ensemble is used instead--- we note that such complications may be difficult to justify given the simplicity, theoretical guarantees (Subsection \ref{sec:theory}), as well as the computational scalability and empirical performance (Section~\ref{sec:numerics}) of the proposed resampling strategy. Other than the resampling step, the forecast and analysis steps of $\mathsf{REnKF}$ agree with those of $\mathsf{EnKF},$ and consequently any of the stochastic or deterministic implementations of $\mathsf{EnKF}$ can be adopted. Our focus here is on the stochastic implementation of $\mathsf{EnKF}$ in Algorithm \ref{algEnKF}. As discussed in the next subsection ---see Remarks \ref{rem:DeterministicImplementations} and \ref{as}--- non-asymptotic theory for deterministic implementations can be obtained as a by-product of the theory that we develop. 

\begin{algorithm} 
\caption{\label{algEnKFresample} Ensemble Kalman Filter with Resampling ($\mathsf{REnKF}$)}
\begin{algorithmic}[1]
\STATE {\bf Input}: $\Psi, H, \Xi, \Gamma, \mu^{(0)}, \Sigma^{(0)}, N.$ Sequentially acquired data $\{y^{(j)}\}_{j\ge 1}.$
\STATE {\bf Initialization}: Set $\widehat{\mu}^{(0)} = \mu^{(0)}$ and
    $\widehat{\Sigma}^{(0)} = \Sigma^{(0)}.$
\STATE For $j = 1, 2, \ldots$ do the following resampling, forecast, and analysis steps:
 \STATE{{{\bf Resampling}}}: 
    \begin{align}
    \begin{split}
  {u}_{n}^{(j-1)}  &\iid \Nc( \widehat{\mu}^{(j-1)}, \widehat{\Sigma}^{(j-1)}), \quad 1\le n \le N.
    \end{split}
    \end{align}
\vspace{-0.5cm}
\STATE {\bf Forecast}: Do \eqref{eq:predEnKF}.
\STATE{{{\bf Analysis}}}: Do \eqref{eq:analEnKF}.
\STATE{\bf Output}: Analysis mean $\widehat{\mu}^{(j)}$ and covariance $\widehat{\Sigma}^{(j)}$  for $j =1, 2, \ldots$  
\end{algorithmic}
\end{algorithm}

Notice from Algorithm~\ref{algEnKFresample} that correlations between particles could alternately be broken by resampling between the forecast and analysis steps. While such an approach would be amenable to a non-asymptotic analysis akin to the one we develop, we empirically found that resampling after the forecast step significantly deteriorates the performance of the filter in nonlinear settings.
A heuristic explanation is that resampling tacitly introduces a Gaussian approximation, and the filtering distribution is better approximated by a Gaussian than the forecast distribution when the dynamics are nonlinear and the observations are Gaussian. 

\subsection{Non-asymptotic Error Bounds}\label{sec:theory}
Here we present theoretical guarantees for $\mathsf{REnKF}$ in a linear dynamics setting. We introduce the setting and necessary background in Subsection \ref{ssec:settingtheory}. Then, the main result is stated and discussed in Subsection \ref{ssec:mainresult}.

\subsubsection{Setting and Preliminaries}\label{ssec:settingtheory}
We consider $\mathsf{REnKF}$ in the following linear version of the hidden Markov model governing the relationship between the state and observation processes:
\begin{alignat}{4}
&\text{(Initialization)} &  \qquad     u^{(0)} &\sim \Nc( \mu^{(0)}, \Sigma^{(0)}), \label{eq:linearinit}\\
&\text{(Dynamics)}& \qquad         u^{(j)} &= Au^{(j-1)} + \xi^{(j)},  \quad \quad  &\xi^{(j)} \iid \Nc(0, \Xi), \qquad j = 1, 2, \ldots  \label{eq:lineardynamics} \\
&\text{(Observation)}  & \qquad      y^{(j)} &= H u^{(j)} + \eta^{(j)},  \quad \quad  &\eta^{(j)} \iid \Nc(0, \Gamma),     \qquad j = 1, 2, \ldots  \label{eq:linearobservations}
\end{alignat}
with \(u^{(0)}\) independent of the i.i.d. sequences \(\{\xi^{(j)}\}\) and  \(\{\eta^{(j)}\}\). Thus, we assume that the dynamics map $\Psi$ in \eqref{eq:HMMdynamics} is linear and represented by a given matrix $A \in \R^{d \times d}.$ In this case, it is well known that the forecast distributions $p\bigl( u^{(j)} | Y^{(j-1)} \bigr) = \Nc( u^{(j)} ; m^{(j)}, C^{(j-1)})$ and the filtering distributions $p\bigl( u^{(j)} | Y^{(j)} \bigr) = \Nc( u^{(j)} ; \mpost^{(j)}, \Cpost^{(j)})$ are both Gaussian, and the means and covariances of these distributions are given by the Kalman filter \cite{sanzstuarttaeb2023}. We aim to derive non-asymptotic bounds between the output $\widehat{\mu}^{(j)}$ and  $\widehat{\Sigma}^{(j)}$ of $\mathsf{REnKF}$ and the output $\mpost^{(j)}$ and $\Cpost^{(j)}$ of the Kalman filter.

We follow the exposition in \cite{kwiatkowski2015convergence} and introduce three operators that are central to the theory: the \emph{Kalman gain} operator $\msK$, the \emph{mean-update} operator $\msM,$ and the \emph{covariance-update} operator $\msC,$ defined respectively by
\begin{align}
  \msK: \mcS_+^d \to \R^{d \times k}, 
  \qquad & \msK(\Cpr) = \Cpr \H^\top (\H \Cpr \H^\top + \Gamma)^{-1}, \label{eq:KalmanGainOperator}\\
  \msM: \R^d \times \mcS_+^d \to \R^{d},
  \qquad &  \msM(\mpr, \Cpr; y)  = \mpr + \msK(\Cpr) (y-\H\mpr),\label{eq:MeanOperator}\\
  \msC: \mcS_+^d \to \mcS_+^d,
  \qquad &
   \msC(\Cpr) = \bigl(I-\msK(\Cpr)\H \bigr)\Cpr. \label{eq:CovarianceOperator}
\end{align}
With this notation, the mean and covariance updates from time $j-1$ to time $j$ given by the Kalman filter are summarized in Table \ref{tab:sec4}. The table also shows the corresponding updates for $\mathsf{REnKF},$ where $\bar{\upr}^{(j-1)}$, $\barxi^{(j)}$ and $\bareta^{(j)}$ respectively denote the sample means of $\{\upr_n^{(j-1)}\}_{n=1}^N$, $\{\xi_n^{(j)}\}_{n=1}^N,$ and $\{\eta_n^{(j)}\}_{n=1}^N;$   $S^{(j-1)}$ denotes the empirical covariance of $\{\upr_n^{(j-1)}\}_{n=1}^N;$ and $\whatC^{(j)}_{\, u \xi} 
    = (\whatC^{(j)}_{\, \xi u})^\top$ denotes the empirical cross-covariance of $\{\upr_n^{(j-1)}\}_{n=1}^N$ and $\{\xi_n^{(j)}\}_{n=1}^N.$ Finally, following \cite{furrer2007estimation, ghattas2022non}, we refer to 
   \begin{align*}
           \whatO^{(j)}
     &:=  
         \msK(\hatCpr^{(j)}) (\widehat{\Gamma}^{(j)} - \Gamma) \msK^\top(\hatCpr^{(j)})\\
         & \qquad \qquad \qquad  + \bigl(I-\msK(\hatCpr^{(j)}) H\bigr) \whatC^{(j)}_{\, u \eta} \msK^\top(\hatCpr^{(j)}) + 
        \msK(\hatCpr^{(j)}) (\whatC^{(j)}_{\, u \eta})^\top \bigl(I-H^\top\msK^\top(\hatCpr^{(j)})\bigr).
   \end{align*}
    as the \emph{offset}, where $\widehat{\Gamma}^{(j)}$  denotes the empirical covariance of $\{\eta_n^{(j)}\}_{n=1}^N,$ and $\whatC^{(j)}_{\, u \eta }$ denotes the empirical cross-covariance of $\{\upr_n^{(j-1)}\}_{n=1}^N$ and $\{\eta_n^{(j)}\}_{n=1}^N.$

\begin{table} 
\renewcommand{\arraystretch}{1.5}
\centering
\begin{tabular}{c|c|c}
\hline
 & {Kalman Filter} & {$\mathsf{REnKF}$} \\
\hline \hline
{Forecast Mean} & $\mpr^{(j)} = A \mpost^{(j-1)}$ & $\hatmpr^{(j)} = A \bar{\upr}^{(j-1)} + \barxi^{(j)}$ \\
{Forecast Cov.} &  $\Cpr^{(j)} = A \Cpost^{(j-1)} A^\top + \Xi$ & $\hatCpr^{(j)} = A S^{(j-1)} A^\top + \hatxiCovariance^{(j)} + A \whatC^{(j)}_{\, u \xi} + \whatC^{(j)}_{\, \xi u}A^\top$  \\
{Analysis Mean} & $\mpost^{(j)} = \msM(\mpr^{(j)}, \Cpr^{(j)} ; y^{(j)})$ & $\hatmpost^{(j)} = \msM(\hatmpr^{(j)}, \hatCpr^{(j)} ; y^{(j)}) + \msK(\hatCpr^{(j)}) \bareta^{(j)}$  \\
{Analysis Cov.} &  $\Cpost^{(j)} = \msC(\Cpr^{(j)})$ & $\hatCpost^{(j)} = \msC(\hatCpr^{(j)}) + \hatO^{(j)}$ \\
\hline
\end{tabular}
\vspace{0.25cm}
\caption{Kalman filter and $\mathsf{REnKF}$ updates in terms of the operators \eqref{eq:KalmanGainOperator}, \eqref{eq:MeanOperator}, and \eqref{eq:CovarianceOperator}. 
}
\label{tab:sec4}
\end{table}

\begin{remark}[Deterministic Implementations] \label{rem:DeterministicImplementations}
    As noted earlier, our presentation and analysis will focus on the stochastic (perturbed observation) implementation of $\mathsf{EnKF}$ described in Algorithm \ref{algEnKF}, and which is used within $\mathsf{REnKF}$, see Algorithm \ref{algEnKFresample}. We claim that this approach is sufficient to cover both deterministic and stochastic updates. Indeed, \cite{ghattas2022non} shows that deterministic and stochastic updates at time $j$ can be succinctly written as
    \begin{align}\label{eq:Combinedupdate}
    \begin{split}
       \hatmpost^{(j)} &= \msM(\hatmpr^{(j)}, \hatCpr^{(j)} ) + \phi\msK(\hatCpr^{(j)}) \bareta^{(j)},  \\
       \hatCpost^{(j)} &= \msC(\hatCpr^{(j)}) + \phi \widehat{O}^{(j)},
    \end{split}
    \end{align}
    where $\phi=1$ for the stochastic update 
    and $\phi = 0$ for the deterministic update.
    Therefore, relative to the deterministic update, theory for the stochastic update is additionally complicated by the need to consider the term $\msK(\hatCpr^{(j)}) \bareta^{(j)}$ in the mean update and the offset term $\widehat{O}^{(j)}$ in the covariance update. Accordingly, we are able to provide a result for the resampled version of the deterministic (square-root) $\mathsf{EnKF}$ as a by-product of our more general theory, and we refer to Remark~\ref{as} for further discussion.
\end{remark}

\subsubsection{Main Result}\label{ssec:mainresult}
We define the \emph{effective dimension} \cite{tropp2015introduction} of a matrix $Q \in \mcS^d_+$ by 
 \begin{align} \label{eq:effectiveDim}
    r_2(Q) := \frac{\ttrace(Q)}{|Q|},   
 \end{align}
 where $\ttrace(Q)$ and $|Q|$ denote the trace and operator norm of $Q$. The effective dimension quantifies the number of directions where $Q$ has significant spectral content and may be significantly smaller than the ambient dimension $d$ when the eigenvalues of $Q$ decay quickly. As such, it is a more refined measure of complexity in high-dimensional problems with underlying low-dimensional structure. The monographs \cite{tropp2015introduction,vershynin2018high} refer to $r_2(Q)$ as the intrinsic dimension, while \cite{koltchinskii2017concentration} uses the term effective rank. This terminology is motivated by the observation that $1 \le r_2(Q) \le \text{rank} (Q) \le d $ and that $r_2(Q)$ is insensitive to changes in the scale of $Q$, see \cite{tropp2015introduction}. We now state our main result, Theorem~\ref{thm:MultiStepPOEnKFBounds}, which provides non-asymptotic bounds on the deviation of $\mathsf{REnKF}$ from the Kalman filter for any time $j$.
 
 \begin{theorem}\label{thm:MultiStepPOEnKFBounds}
 Consider $\mathsf{REnKF},$ Algorithm \ref{algEnKFresample}, with linear dynamics $\Psi(\cdot) = A\cdot$. Suppose that $N \ge r_2(\Cpost^{(0)}) \lor r_2(\Gamma) \lor r_2(\xiCovariance)$. For any $j= 1,2,\dots$, and $q \ge 1$
    \begin{align}
        \||\hatmpost^{(j)} - \mpost^{(j)}| \|_q
        &\le    
        c_1
        \inparen{
        \sqrt{\frac{r_2(\Cpost^{(0)})}{N}}
        \lor 
        \sqrt{\frac{r_2(\xiCovariance)}{N}}
        \lor \sqrt{\frac{r_2(\Gamma)}{N}}
        },  \label{eq:meanboundmainth}   \\
        \normn{ | \hatCpost^{(j)} - \Cpost^{(j)}| }_{q}
        &\le 
        c_2
        \inparen{
        \sqrt{\frac{r_2(\Cpost^{(0)})}{N}}
        \lor
        \sqrt{\frac{r_2(\xiCovariance)}{N}}
        \lor 
        \sqrt{\frac{r_2(\Gamma)}{N}}  
        }, \label{eq:covarianceboundmainth}
    \end{align}
    where $\mpost^{(j)}$ and $\Cpost^{(j)}$ are the mean and covariance of the filtering distributions, and $c_1, c_2$ are potentially different universal constants depending on
    \begin{align*}
        |\Cpost^{(0)}|, |A|, |H|, |\Gamma^{-1}|, |\Gamma|, |\xiCovariance|, q, j,
    \end{align*}
    and $c_1$ additionally depends on  $\{|y^{(\ell)}-H \mpr^{(\ell)}| \}_{\ell \le j}$.
    \end{theorem}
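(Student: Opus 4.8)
The plan is to argue by induction on the assimilation time $j$, exploiting the crucial consequence of resampling that, conditionally on the $\sigma$-algebra $\mcF^{(j-1)}$ generated by the step-$(j-1)$ outputs $(\hatmpost^{(j-1)}, \hatCpost^{(j-1)})$, all quantities produced at step $j$ are functions of fresh, mutually independent Gaussian samples. Set $\varepsilon_N := \sqrt{r_2(\Cpost^{(0)})/N} \lor \sqrt{r_2(\xiCovariance)/N} \lor \sqrt{r_2(\Gamma)/N}$, which satisfies $\varepsilon_N \le 1$ under the standing assumption on $N$. The inductive claim is that $\| \, |\hatmpost^{(j-1)} - \mpost^{(j-1)}| \,\|_q$ and $\| \hatCpost^{(j-1)} - \Cpost^{(j-1)} \|_q$ are both at most a constant (depending on $j-1$ and the listed parameters) times $\varepsilon_N$, carried along with auxiliary dimension-free moment bounds on $|\hatCpost^{(j-1)}|$ and $\ttrace(\hatCpost^{(j-1)})$. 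The base case $j=0$ is immediate since $\hatmpost^{(0)} = \mpost^{(0)}$ and $\hatCpost^{(0)} = \Cpost^{(0)}$.

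For the inductive step I would, using Table \ref{tab:sec4}, decompose each error into a propagated part and a fresh-fluctuation part. For the forecast moments, $\hatmpr^{(j)} - \mpr^{(j)} = A(\bar{\upr}^{(j-1)} - \hatmpost^{(j-1)}) + A(\hatmpost^{(j-1)} - \mpost^{(j-1)}) + \barxi^{(j)}$ and, analogously, $\hatCpr^{(j)} - \Cpr^{(j)} = A(S^{(j-1)} - \hatCpost^{(j-1)})A^\top + A(\hatCpost^{(j-1)} - \Cpost^{(j-1)})A^\top + (\hatxiCovariance^{(j)} - \xiCovariance) + A\whatC^{(j)}_{u\xi} + \whatC^{(j)}_{\xi u}A^\top$. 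The middle terms are controlled by the inductive hypothesis; for the remaining terms I would invoke dimension-free $L^q$ concentration inequalities for sample means and sample (cross-)covariances of Gaussian vectors---of the type developed in \cite{ghattas2022non} (see also \cite{koltchinskii2017concentration})---which, under $N \ge r_2(\cdot)$, give bounds of the form $(\text{operator norm}) \times \sqrt{r_2(\cdot)/N}$: conditionally on $\mcF^{(j-1)}$, $A(\bar{\upr}^{(j-1)} - \hatmpost^{(j-1)})$ is centered Gaussian with covariance $A\hatCpost^{(j-1)}A^\top / N$, $\barxi^{(j)} \sim \Nc(0, \xiCovariance/N)$, the matrices $S^{(j-1)} - \hatCpost^{(j-1)}$, $\hatxiCovariance^{(j)} - \xiCovariance$ and $\widehat\Gamma^{(j)} - \Gamma$ are sample-covariance fluctuations, and $\whatC^{(j)}_{u\xi}$, $\whatC^{(j)}_{u\eta}$ are empirical cross-covariances of conditionally independent centered samples, hence mean-zero fluctuations of order $N^{-1/2}$. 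The conditioning is removed via the tower property combined with the auxiliary moment bounds on $|\hatCpost^{(j-1)}|$ and $\ttrace(\hatCpost^{(j-1)})$, yielding $\| \, |\hatmpr^{(j)} - \mpr^{(j)}| \, \|_q \lor \| \hatCpr^{(j)} - \Cpr^{(j)} \|_q \lesssim \varepsilon_N$.

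To pass from the forecast to the analysis discrepancy, I would use local Lipschitz continuity of the operators $\msK$, $\msM$, $\msC$ of \eqref{eq:KalmanGainOperator}--\eqref{eq:CovarianceOperator} (in the spirit of \cite{kwiatkowski2015convergence}) on the set $\{C \in \mcS^d_+ : |C| \le R\}$. The key facts are: $|(H C H^\top + \Gamma)^{-1}| \le |\Gamma^{-1}|$ automatically on the PSD cone, so no lower bound on $\hatCpr^{(j)}$ is needed and $\msK$, $\msC$ are Lipschitz with constants depending only on $R$, $|H|$, $|\Gamma^{-1}|$; and $\msM(\mpr, \Cpr; y) - \msM(\mpr', \Cpr'; y)$ is controlled once the innovation magnitude $|y^{(j)} - H \mpr^{(j)}|$ is bounded---this is precisely why $c_1$ is allowed to depend on $\{ |y^{(\ell)} - H \mpr^{(\ell)}| \}_{\ell \le j}$. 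To apply these on a bounded region I would first establish $q$-th moment bounds on $|\hatCpr^{(j)}|$: since the Kalman recursion $\Cpr^{(j)} = A\Cpost^{(j-1)}A^\top + \xiCovariance$ together with $\Cpost^{(j)} = \msC(\Cpr^{(j)}) \preceq \Cpr^{(j)}$ yields operator norms and traces of $\Cpr^{(j)}$, $\Cpost^{(j)}$ controlled by $j$, the listed parameters, and $r_2(\Cpost^{(0)})$, $r_2(\xiCovariance)$, the matrix $\hatCpr^{(j)}$ lies within $L^q$-distance $O(\varepsilon_N)$ of the deterministic $\Cpr^{(j)}$ and hence stays in a fixed bounded region. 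The two stochastic-update-specific terms, $\msK(\hatCpr^{(j)}) \bareta^{(j)}$ in the mean and the offset $\whatO^{(j)}$ in the covariance (cf. Remark \ref{rem:DeterministicImplementations}), are bounded directly from their definitions using the sample-mean and cross-covariance concentration above. Collecting everything, the step-$j$ constant is a fixed function of the step-$(j-1)$ constant and the listed parameters, so the induction closes (the constants grow with $j$, which is acceptable as $j$ is fixed), giving \eqref{eq:meanboundmainth} and \eqref{eq:covarianceboundmainth}.

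I expect the main obstacle to be disentangling the randomness of the step-$(j-1)$ outputs from the random ``radius'' of the conditional concentration at step $j$: the conditional bounds for $S^{(j-1)} - \hatCpost^{(j-1)}$, for $A(\bar{\upr}^{(j-1)} - \hatmpost^{(j-1)})$ and for the cross-covariances involve $|\hatCpost^{(j-1)}|$, $\ttrace(\hatCpost^{(j-1)})$ and $r_2(\hatCpost^{(j-1)})$, all random, and these must be controlled without reintroducing the ambient dimension $d$. This forces the induction to carry, besides the $L^q$ error bounds, genuinely dimension-free moment bounds on $|\hatCpost^{(j-1)}|$ and $\ttrace(\hatCpost^{(j-1)})$; the trace bound I would obtain from a separate recursion based on $\E[\ttrace(S^{(j-1)}) \mid \mcF^{(j-1)}] = \ttrace(\hatCpost^{(j-1)})$ and $\msC(\hatCpr^{(j)}) \preceq \hatCpr^{(j)}$, together with sub-exponential concentration of $\ttrace(S^{(j-1)})$ and of the noise traces, and the conditioning is then removed by H\"older's inequality. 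The second delicate point---anticipated in Remark \ref{rem:DeterministicImplementations}---is precisely the analysis of $\msK(\hatCpr^{(j)})\bareta^{(j)}$ and $\whatO^{(j)}$: resampling is what makes this tractable, since conditionally on $\mcF^{(j-1)}$ the observation noise $\{\eta_n^{(j)}\}_{n=1}^N$ is independent of the resampled ensemble, so the relevant empirical cross-covariances are genuine mean-zero quantities concentrating at rate $N^{-1/2}$.
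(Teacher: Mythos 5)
Your proposal is correct and follows essentially the same route as the paper: a strong induction that carries, alongside the mean and covariance errors, a dimension-free bound on $\ttrace(\hatCpost^{(j-1)})$ and a forecast-covariance bound, uses the tower property over $(\hatmpost^{(j-1)},\hatCpost^{(j-1)})$ to exploit the conditional independence created by resampling, invokes Koltchinskii--Lounici-type $L^q$ covariance and cross-covariance concentration together with the Lipschitz estimates for $\msK$, $\msM$, $\msC$ from \cite{kwiatkowski2015convergence}, and treats $\msK(\hatCpr^{(j)})\bareta^{(j)}$ and the offset $\whatO^{(j)}$ separately. You also correctly identify the two genuine technical pressure points (the random conditional concentration radius, handled via the trace recursion $\E[\ttrace(S^{(j-1)})\mid\mcF^{(j-1)}]=\ttrace(\hatCpost^{(j-1)})$ with sub-exponential trace concentration, and the offset analysis), which is precisely how the paper's argument is organized.
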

   With the exception of \cite[Theorem 3.4]{majda2018performance}, which relies on covariance inflation and an additional projection step, Theorem~\ref{thm:MultiStepPOEnKFBounds} seems to be the first result in the literature that provides non-asymptotic guarantees on the performance of a stochastic  $\mathsf{EnKF}$ over multiple assimilation cycles. 
    We note that the assumption $N \ge r_2(\Cpost^{(0)}) \lor r_2(\Gamma) \lor r_2(\xiCovariance)$ is merely for convenience and can be removed at the expense of a more cumbersome statement of the result. Importantly, the bounds \eqref{eq:meanboundmainth} and \eqref{eq:covarianceboundmainth} are non-asymptotic, in that they hold for a fixed ensemble size $N$. Further, the bounds are dimension-free as they do not exhibit any dependence on the state-space dimension $d$, implying that the ensemble need not scale with $d$ in order for the algorithm to perform well, as has been observed empirically in the literature and confirmed in our numerical results in Section~\ref{sec:numerics}. 
    Finally, similar to previous accuracy analyses for square-root ensemble Kalman filters \cite{mandel2011convergence,ghattas2022non}, variational data assimilation algorithms \cite{sanz2015long,law2016filter}, and particle filters \cite[Chapters 11 and 12]{sanzstuarttaeb2023}, our proof relies on induction over the discrete time index $j$ and does not account for potential dissipation of errors due to filter ergodicity. As a result, the constants $c_1$ and $c_2$ grow with $j$ and our bounds \eqref{eq:meanboundmainth} and \eqref{eq:covarianceboundmainth} do not hold uniformly in time without, for instance, stability requirements on $A.$

    \begin{remark}[Resampled Square-Root Filter]\label{as}
    While the result in Theorem~\ref{thm:MultiStepPOEnKFBounds} is specific to the stochastic $\mathsf{REnKF}$ in Algorithm \ref{algEnKFresample}, using the observation made in Remark~\ref{rem:DeterministicImplementations} it is possible to show that for a deterministic variant, namely the square-root $\mathsf{REnKF}$, and under the same assumptions on the ensemble size made in Theorem~\ref{thm:MultiStepPOEnKFBounds}, we have that 
    \begin{align} \label{eq:SRBounds}
    \begin{split}
        \||\hatmpost^{(j)} - \mpost^{(j)}| \|_q
        &\le    
        c_1
        \inparen{
        \sqrt{\frac{r_2(\Cpost^{(0)})}{N}}
        \lor 
        \sqrt{\frac{r_2(\xiCovariance)}{N}}
        }, \\ 
        \normn{ | \hatCpost^{(j)} - \Cpost^{(j)}| }_{q}
        &\le 
        c_2
        \inparen{
        \sqrt{\frac{r_2(\Cpost^{(0)})}{N}}
        \lor
        \sqrt{\frac{r_2(\xiCovariance)}{N}}
        }, 
        \end{split}
    \end{align}
    where $c_1, c_2$ are potentially different universal constants depending on $|\Cpost^{(0)}|$, $|A|$, $|H|$, $|\Gamma^{-1}|$, $|\xiCovariance|$, $q$, $j$, and $c_1$ additionally depends on  $\{|y^{(\ell)}-H \mpr^{(\ell)}| \}_{\ell \le j}$. In contrast to \eqref{eq:meanboundmainth} and \eqref{eq:covarianceboundmainth}, the bounds in \eqref{eq:SRBounds} do not depend on the effective dimension of the noise covariance, $r_2(\Gamma)$, nor do the associated constants depend on $|\Gamma|$. The statistical price to pay for utilizing stochastic rather than deterministic updates is captured by these terms. We further note that \cite[Corollary A.12]{ghattas2022non}, gives a non-asymptotic and multi-step analysis of a simplified version of the square-root filter (without resampling) with deterministic dynamics (that is, $\xiCovariance=O_{d\times d}$). In such a setting, \cite[Corollary A.12]{ghattas2022non}  implies the following bounds
    \begin{align*}
        \||\hatmpost^{(j)} - \mpost^{(j)}| \|_q
        \le    
        c_3
        \sqrt{\frac{r_2(\Cpost^{(0)})}{N}}
        ,  \qquad 
        \normn{ | \hatCpost^{(j)} - \Cpost^{(j)}| }_{q}
        \le 
        c_4
        \sqrt{\frac{r_2(\Cpost^{(0)})}{N}}, 
    \end{align*}
    where $c_3, c_4$ are potentially different universal constants depending on $|\Cpost^{(0)}|$, $|A|$, $|H|$, $|\Gamma^{-1}|$, $q$, $j$, and $c_3$ additionally depends on  $\{|y^{(\ell)}-H \mpr^{(\ell)}| \}_{\ell \le j}$. 
    Theorem \ref{thm:MultiStepPOEnKFBounds} should further be compared to \cite[Theorem 6.1]{kwiatkowski2015convergence}, which is also limited to the case $\xiCovariance=O_{d\times d}$ and shows that $ \||\hatmpost^{(j)} - \mpost^{(j)}| \|_q
        \le    
        c_3'N^{-1/2}$ and $ \normn{ | \hatCpost^{(j)} - \Cpost^{(j)}| }_{q}
        \le 
        c_4'N^{-1/2},$
    where $c_3', c_4'$ are 
    universal constants with the same dependencies as $c_3$ and $c_4.$
    Importantly, the bounds in \cite[Theorem 6.1]{kwiatkowski2015convergence} do not capture the dependence of the algorithm on the  prior covariance
    and also cannot be easily extended to handle stochastic dynamics $\xiCovariance \succ 0$ as accomplished in Theorem \ref{thm:MultiStepPOEnKFBounds}. 
    \end{remark}

\section{Numerical Results}\label{sec:numerics}
In this section, we investigate the empirical performance of $\mathsf{REnKF}$ (Algorithm~\ref{algEnKFresample}) and provide detailed comparisons to the stochastic $\mathsf{EnKF}$ (Algorithm~\ref{algEnKF}). In Subsection~\ref{sec:linearexample}, we consider a linear dynamics map, $\Psi(\cdot) = A \cdot$, with the primary goal of demonstrating the bounds of Theorem~\ref{thm:MultiStepPOEnKFBounds} in simulated settings. In Subsection~\ref{sec:l96example}, we study a nonlinear setting where $\Psi$ represents the $\Delta t$-flow of the Lorenz 96 system, and $\Delta t$ is the (constant) time-span between observations. The aim of this subsection is to show that $\mathsf{REnKF}$ achieves comparable performance to $\mathsf{EnKF}$ even in challenging nonlinear regimes, further motivating the study of resampling in the context of ensemble algorithms. In both Subsections~\ref{sec:linearexample} and \ref{sec:l96example}, we examine the performance of $\mathsf{REnKF}$ and $\mathsf{EnKF}$ under varying noise levels, ensemble sizes, and state dimensions. Additionally, in Subsection~\ref{sec:l96example}, we consider cases in which we have access to either fully observed or partially observed dynamics. These scenarios offer a comprehensive perspective on the adaptability of $\mathsf{REnKF}$ to varying observational conditions, thereby highlighting its potential for wide applicability in real-world situations where data are often limited or incomplete.  For all experiments, we generate a ground-truth state process $\{u^{(j)}\}_{j=0}^J$ for a time-window of length $J=200$ using the initialization \eqref{eq:HMMinitialization} and dynamics model \eqref{eq:HMMdynamics}. For each set of system parameters we examine, a unique set of observations $\{y^{(j)} \}_{j=1}^J$ is generated from the ground-truth state process utilizing the observation model \eqref{eq:HMMobservations}. Python code to reproduce all numerical experiments is publicly available at \href{https://github.com/Jiajun-Bao/EnKF-with-Resampling}{https://github.com/Jiajun-Bao/EnKF-with-Resampling}.

\subsection{Linear Dynamics} \label{sec:linearexample}
In this subsection, we numerically investigate the performance of $\mathsf{REnKF}$ for the linear-Gaussian hidden Markov model \eqref{eq:linearinit}-\eqref{eq:linearobservations} analyzed in Subsection~\ref{sec:theory}. We will consider a variety of choices for the initial distribution, the dynamics noise covariance, and the observation noise covariance. Throughout, we take identity dynamics $A = I_d$ and full observations $H = I_d.$ To compare the performance of  $\mathsf{EnKF}$ and $\mathsf{REnKF},$ we will consider the following metrics:
\begin{alignat}{3}
 &\text{(Mean Error)}  \qquad   && \hspace{0.5cm} \mathsf{E}_{\text {\tiny Linear} }  &&= 
\frac{1}{J} \sum _{j=1}^J |\hatmpost^{(j)} - \mpost^{(j)}|_2, \label{eq:meanerror}\\
 &\text{(CI Width)}    && \hspace{1cm} \mathsf{W} &&= \frac{1}{J} \sum_{j=1}^J \frac{1}{d} \sum_{i=1}^d 2 \times 1.96 \sqrt{ \widehat{\Sigma}_{ii}^{(j)}} \, , \label{eq:CIwidth}\\
 &\text{(CI Coverage)}     &&  \hspace{1.1cm} \mathsf{V} &&= \frac{1}{J} \sum_{j = 1}^J \frac{1}{d} \sum_{i=1}^d \indicator \bigl\{u^{(j)}(i) \in (\widehat{\mu}^{(j)}(i) \pm 1.96 \sqrt{ \widehat{\Sigma}_{ii}^{(j)}} ) \bigr\}.  \label{eq:CIcoverage}
\end{alignat}
\begin{figure} 
    \centering
    \includegraphics[width=\textwidth]{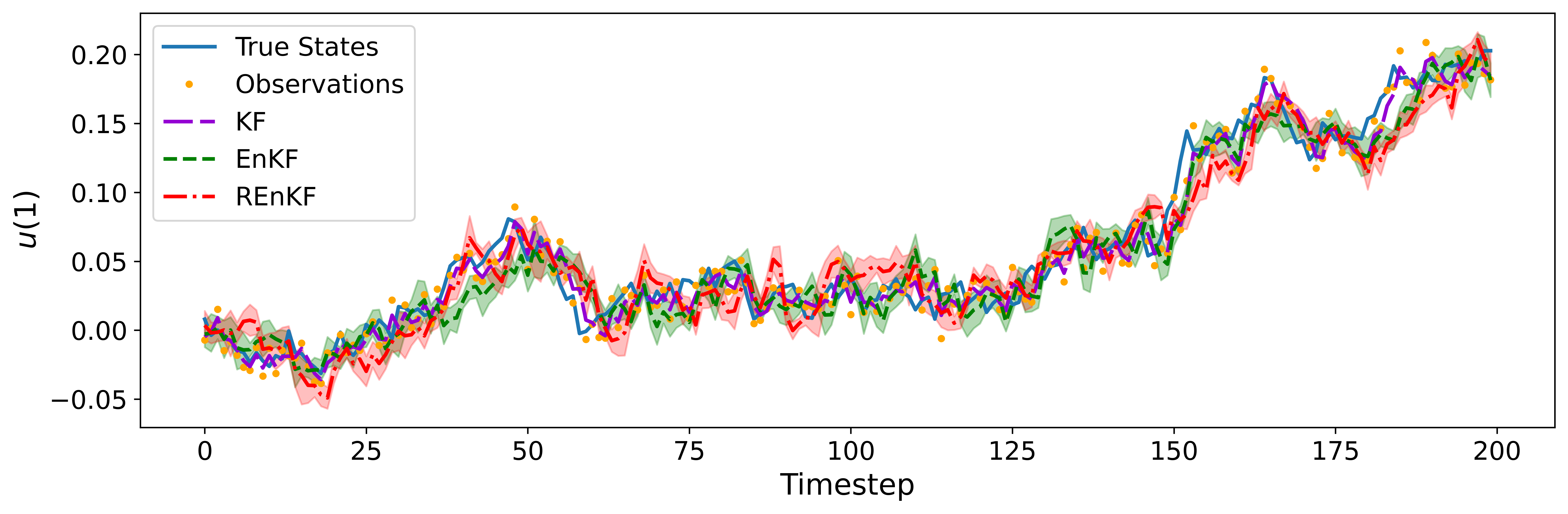}
    \vspace{-.65cm}
    \caption{State estimation and uncertainty quantification for coordinate $u(1)$ in the linear setting with ensemble size $N = 10$ and small noise $\alpha = 10^{-4}.$ Note that the Kaman Filter (KF) is optimal in the linear setting.
    }
    \label{fig:linear_one_simulation}
\end{figure}

The mean error \eqref{eq:meanerror} quantifies the approximation of the $\mathsf{EnKF}/\mathsf{REnKF}$ analysis mean to the mean $\mu^{(j)}$ of the Kalman filter. Our theory for $\mathsf{REnKF}$ provides non-asymptotic bounds for this error, and our numerical results will show that this error is similar to that of $\mathsf{EnKF}$ in a variety of settings. The confidence interval (CI) width and coverage in \eqref{eq:CIwidth}-\eqref{eq:CIcoverage} assess the ability of the filter to provide reliable uncertainty quantification: a short interval with high coverage would be preferable, but an overconfident short width interval with low coverage can lead to a misleading and potentially dangerous assessment of uncertainty. 
We illustrate these three metrics in Figure \ref{fig:linear_one_simulation}, which corresponds to a setup outlined in Table \ref{tab:linear}. This setup will be further explored in Subsection \ref{sec:linear_sub1}.  As depicted in the plot, the indicator in \eqref{eq:CIcoverage} corresponds to whether the solid blue line (representing the true states) fall within the shaded confidence intervals.
We point out that the ability of ensemble Kalman methods to provide reliable uncertainty quantification, especially in nonlinear settings, has often been questioned \cite{ernst2015analysis,law2012evaluating}. Our results will show that the CIs obtained with $\mathsf{REnKF}$ have similar width and coverage as those obtained by $\mathsf{EnKF},$ but that coverage for both algorithms is not reliable when the ensemble size is small (Subsections \ref{sec:linearexample} and \ref{sec:l96example}) or the dynamics are highly nonlinear (Subsection \ref{sec:l96example}). 

Since the outputs $\{\hatmpost^{(j)}, \hatCpost^{(j)}\}_{j=1}^J$ of $\mathsf{EnKF}$ and $\mathsf{REnKF}$ are random, for each experiment we run both algorithms $M$ times and we report the average value of the metrics \eqref{eq:meanerror}, \eqref{eq:CIwidth}, and \eqref{eq:CIcoverage} as well as the value of $M$. More details can be found in Appendix \ref{sec:metrics}.

\subsubsection{Effects of Noise Level and Ensemble Size}
\label{sec:linear_sub1}
We perform two distinct analyses to assess the impact of different variables on the performance of $\mathsf{EnKF}$ and $\mathsf{REnKF}$. The first, which we term the \textit{noise-level} analysis, investigates the relationship between mean error, $\mathsf{E}_{\text {\tiny Linear}}$, and the noise level, $\alpha$. The second analysis, referred to as the \textit{ensemble-size analysis}, explores how the mean error varies with the ensemble size, $N$. Both analyses are carried out using a fixed state dimension $d=20$. 

In the noise-level analysis, $\alpha$ is varied over a grid of $15$ evenly spaced values between $10^{-16}$ and $1$,
allowing us to investigate a range of scenarios beginning with those with virtually no noise to those with substantial noise. In order to isolate the influence of $\alpha$, we maintain the initial distribution with a fixed zero mean and covariance $\Sigma^{(0)} = 10^{-8} \times I_{20}$, as well as a fixed ensemble size of $N=20$. In the ensemble-size analysis, $N$ is varied between 10 and 100, in increments of 10. To isolate the effects of $N$, we fix $\alpha=10^{-1}$ and maintain the initial distribution to have a fixed zero mean and covariance $\Sigma^{(0)} = 1.1\alpha \times I_{20}$. The covariance is adjusted to represent a higher initial uncertainty level compared to the noise-level analysis. The factor $1.1$ was introduced to ensure that the initial states possess a slightly different level of uncertainty relative to the noise in the dynamics and observations. Both analyses are averaged over $M = 10$ runs of the algorithms. The results of both analyses are depicted in Figure \ref{fig:linear_plot_1}. 
\begin{figure} 
    \centering
    \includegraphics[width=\textwidth]{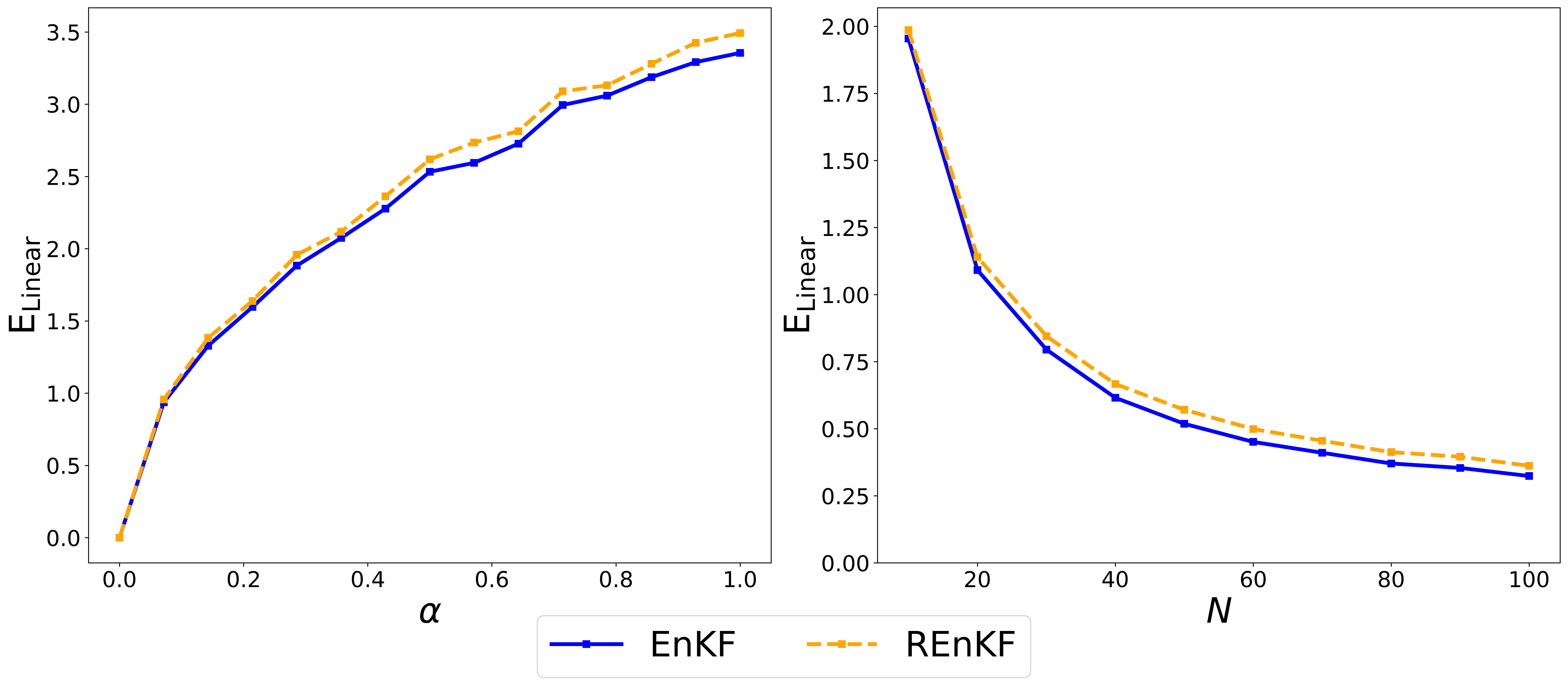}
    \vspace{-.65cm}
    \caption{Effects of $\alpha$ and $N$ in the linear setting with $d=20$.}
    \label{fig:linear_plot_1}
\end{figure}
In addition to $\mathsf{E}_{\text {\tiny Linear}}$, in Table \ref{tab:linear} we consider the effect of varying $\alpha$ and $N$ on CI  widths, $\mathsf{W}$, and CI  coverage, $\mathsf{V}$. Here, we categorize the levels of noise as being either small, moderate, or large, which correspond to $\alpha$ values of $10^{-4}$, $10^{-2}$, or $10^{-1}$ respectively, as described under Case A in Table \ref{tab:noise_covariance}. Further, we repeat the experiments with ensembles of size $N=10$ and $N=40$. For the experimental settings summarized in Table~\ref{tab:linear}, the state dimension and initial distribution are taken as in the ensemble-size analysis described earlier. These metrics are calculated based on averages over $M=100$ runs of the algorithms. 

The results in Figure~\ref{fig:linear_plot_1} and in Table~\ref{tab:linear} confirm that across a wide variety of linear experimental settings, $\mathsf{REnKF}$ exhibits similar performance to $\mathsf{EnKF}$ as measured by the mean error, CI  width, and CI coverage.

\begin{table} 
\centering
\resizebox{0.8\textwidth}{!}{%
\begin{tabular}{c|c|c|c|c}
\hline 
 \multirow{2}{*}{Ensemble} & \multirow{2}{*}{Metric} & Small Noise & Moderate Noise & Large Noise \\
  & & $\alpha = 10^{-4} $ & $\alpha = 10^{-2}$ & $\alpha = 10^{-1}$   \\ 
\hline\hline
\multirow{6}{*}{$N=10$} &
$\mathsf{EnKF}$ Mean Error & 0.0608 & 0.6133 & 1.9931 \\
& $\mathsf{REnKF}$ Mean Error & 0.0616 & 0.6199 & 2.0310 \\
& $\mathsf{EnKF}$ CI Width & 0.0194 & 0.1940 & 0.6134 \\
& $\mathsf{REnKF}$ CI Width & 0.0188 & 0.1875 & 0.5930 \\
& $\mathsf{EnKF}$ CI Coverage (\%) & 39.57 & 38.90 & 38.35 \\
& $\mathsf{REnKF}$ CI Coverage (\%) & 37.83 & 37.14 & 36.58 \\
\hline\hline
\multirow{6}{*}{$N=40$} &
$\mathsf{EnKF}$ Mean Error & 0.0193 & 0.1930 & 0.6243 \\
& $\mathsf{REnKF}$ Mean Error & 0.0209 & 0.2091 & 0.6739 \\
& $\mathsf{EnKF}$ CI  Width & 0.0278 & 0.2780 & 0.8790 \\
& $\mathsf{REnKF}$ CI  Width & 0.0274 & 0.2739 & 0.8663 \\
& $\mathsf{EnKF}$ CI  Coverage (\%) & 69.94 & 69.26 & 68.90 \\
& $\mathsf{REnKF}$ CI  Coverage (\%) & 68.65 & 67.76 & 67.43 \\
\hline
\end{tabular}%
}
\vspace{0.25cm}
\caption{Performance metrics in the linear setting with $d=20.$}
\label{tab:linear}
\end{table}

\subsubsection{Effects of State Dimension and Spectrum Decay} 
\label{sec:linear_sub2}
We now study the sensitivity of $\mathsf{EnKF}$ and $\mathsf{REnKF}$ to changes in the state dimension, $d$.
Recall that our main result, Theorem~\ref{thm:MultiStepPOEnKFBounds}, implies that $\mathsf{REnKF}$ performs well whenever the ensemble size scales with the largest of the effective dimensions of the noise covariances: $\Sigma^{(0)}$, $\Gamma,$ and $\xiCovariance$. This motivates our study of covariance matrices with structure summarized in Case A and Case B of Table~\ref{tab:noise_covariance}. 
\begin{table} [ht]
\renewcommand{\arraystretch}{1.5}
\centering
\resizebox{0.825\textwidth}{!}{%
\begin{tabular}{c|c|c|c}
\hline
 Noise & {Case A} & {Case B} $(i=1,\dots, d)$ &{Case C} \\
\hline\hline
{Dynamics ($\Xi$)} & $\Xi^{A} = \alpha \times I_d$ & $\Xi^{B}_{ii} = \alpha \times  i^{-\beta}$ & $\Xi^{C} = \alpha \times I_d$\\
{Observation ($\Gamma$)} & $\Gamma^{A} = \alpha \times I_d$ & $\Gamma^{B}_{ii} = \alpha \times  i^{-\beta}$& $\Gamma^{C} = \alpha \times I_{\frac{2d}{3}}$ \\
{Prior ($\Sigma^{(0)}$)} & $(\Sigma^{(0)})^{A} = 1.1 \times \Xi^{A} $ & $(\Sigma^{(0)})^{B}_{ii} = 1.1 \times \Xi^B_{ii} $ &
 $(\Sigma^{(0)})^{C} = 1.1 \times \Xi^{C} $\\
\hline
\end{tabular}
}
\vspace{0.25cm}
\caption{Covariance matrix settings explored numerically in Subsections~\ref{sec:linear_sub2} and \ref{sec:l96example}.}
\label{tab:noise_covariance}
\end{table}
In Case A, the effective dimension of the covariance matrix is proportional to the state dimension, $d$, and so the theory suggests that $\mathsf{REnKF}$ will do well only if the ensemble size also scales with $d$. In Case $B$, we consider covariance matrices that are diagonal, with $i$-th diagonal element proportional to $i^{-\beta}$ where $\beta>0$ is a rate parameter controlling the speed of decay. 
\begin{table} 
\centering
\resizebox{0.8\textwidth}{!}{%
\begin{tabular}{c|c|c|c|c|c|c|c|c }
\hline
State dimension ($d$) & $2$ & $4$ & $8$ & $16$ & $32$ & $64$ & $128$ & $256$\\ 
\hline\hline
$\beta$ = 0.1 & 1.93 & 3.70 & 7.02 & 13.25 & 24.89 & 46.64 & 87.25 & 163.05\\
$\beta$ = 1.0 & 1.50 & 2.08 & 2.72 & 3.38 & 4.06 & 4.74 & 5.43 & 6.12\\
$\beta$ = 1.5 & 1.35 & 1.67 & 1.93 & 2.12 & 2.26 & 2.36 &  2.44 & 2.49\\
\hline
\end{tabular}
}
\vspace{0.25cm}
\caption{Effective dimension of initialization and noise covariances used in Figure \ref{fig:linear_plot_2}.}
\label{tab:linear_noise_cov_effective_dim}
\end{table}
Table~\ref{tab:linear_noise_cov_effective_dim} demonstrates that two matrices of this form that are equal in dimension may differ drastically in their effective dimension for different choices of $\beta$. Here, then, the theory suggests that $\mathsf{REnKF}$ will do well so long as the ensemble size scales with the effective dimension, which may be much smaller than $d$. To test our theory, we run $\mathsf{REnKF}$ under both cases A and B in Table~\ref{tab:noise_covariance} where $d$ is varied over the set $\{2^1,2^2,\dots, 2^8\}$ and where the ensemble size is fixed at $N=10$ throughout. For both cases we fix $\alpha = 10^{-4}$ and for case B we consider $\beta \in \{0.1, 1, 1.5\}$. Figure \ref{fig:linear_plot_2} presents the results of averaging $\mathsf{E}_{\text {\tiny Linear}}$ over $M = 10$ runs of the algorithm in each of the experimental set-ups. We see that for all choices of $\beta$, $\mathsf{EnKF}$ and $\mathsf{REnKF}$ exhibit near-identical performance. For Case A, the performance deteriorates as $d$ increases and this behavior is identical across all three displays. For Case $B$, when $\beta=0.1$ (first display) so that the effective dimension increases significantly with dimension as described in the first row of Table~\ref{tab:linear_noise_cov_effective_dim}, the performance deteriorates significantly as $d$ increases. As $\beta$ is increased to $1$ in the second display, so that the effective dimension grows slowly with $d$, performance deteriorates at a much slower rate. This is further pronounced in the final display with $\beta=1.5$.
These numerical results demonstrate the key role played by the effective dimension in determining the performance of $\mathsf{EnKF}$ and $\mathsf{REnKF}$, and are in agreement with Theorem~\ref{thm:MultiStepPOEnKFBounds} for $\mathsf{REnKF}$.

\begin{figure}[htbp]
    \centering
    \includegraphics[width=\textwidth]{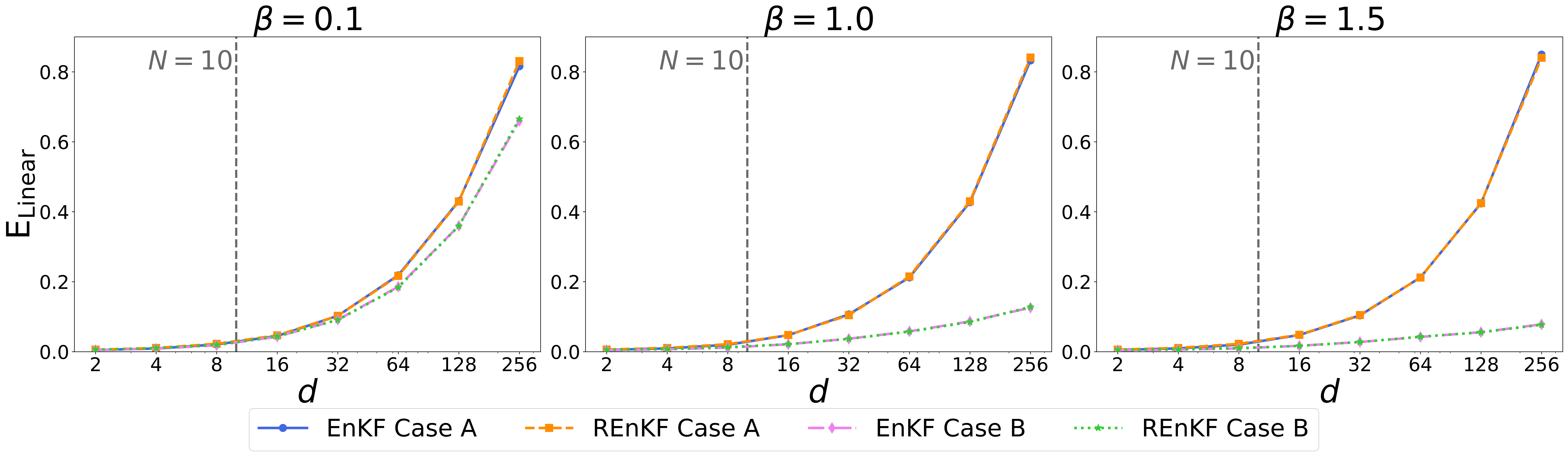}
    \vspace{-.65cm}
    \caption{Effect of spectrum decay in the linear setting.}
    \label{fig:linear_plot_2}
\end{figure}

\subsection{Lorenz 96 Dynamics} \label{sec:l96example}
In this subsection, we extend our numerical investigation of $\mathsf{REnKF}$ to the nonlinear setting by taking $\Psi$ in \eqref{eq:HMMdynamics} to be the $\Delta t$-flow of the Lorenz 96 equations. Here $\Delta t$ represents the time-span between observations, which is assumed to be constant. Assuming the following cyclic boundary conditions $u(-1) = u(d-1)$, $u(0)=u(d)$, and $u(d+1) = u(1)$ with $d\geq4$, the system is governed by:
    \begin{align}\label{eq:lorenz96}
        \frac{du(i)}{dt} = \Bigl(u(i+1) - u(i-2)\Bigr)u(i-1) - u(i) + F,
        \quad 
        i = 1, \ldots, d.
    \end{align}
In our experiments, we set $\Delta t = 0.01,$ $F=8$, and the state dimension $d$ is subject to variation. The choice $F=8$ leads to strongly chaotic turbulence, which hinders predictability in the absence of observations \cite{majda2012filtering}. For the observation process \eqref{eq:HMMobservations}, we consider both full observations in which $H=I_d$, and partial observations in which only two out of every three state components are observed. The latter setting results in a modified $H \in \R^{{\frac{2d}{3}} \times d}$ which corresponds to $I_{d}$ with every third row removed. This observation set-up is motivated by \cite{sanz2015long,law2015data}, which prove that observing two-out-of-three coordinates of the Lorenz 96 system suffices in order to tame the unpredictability of the system and achieve long-time filter accuracy in a small noise regime. As in Subsection~\ref{sec:linearexample}, we examine various choices of initial distribution, dynamics noise covariance, and observation noise covariance. To compare $\mathsf{EnKF}$ and $\mathsf{REnKF},$ we make use of the same CI  width \eqref{eq:CIwidth} and CI  coverage \eqref{eq:CIcoverage} metrics as in Subsection \ref{sec:linearexample}. However, since in the nonlinear setting the mean of the filtering distribution is not available in closed form, we replace the metric $\mathsf{E}_{\text {\tiny Linear}}$ with 
\begin{align}
\label{eq:l96meanerror}
\mathsf{E}_{\text {\tiny L96}} = \frac{1}{J} \sum_{j=1}^J |\hatmpost^{(j)} - u^{(j)}|_2,
\end{align}
which quantifies the accuracy of the filter as an estimator of the ground-truth state process $\{u^{(j)}\}_{j=1}^J$. As before, the metrics we report are averaged over $M$ runs of the algorithms.

\begin{table}[ht]
\resizebox{\textwidth}{!}{%
\begin{tabular}{c|c|c|c|c|c|c|c}
\hline
\multicolumn{2}{c|}{} & \multicolumn{3}{c|}{Full Observation} & \multicolumn{3}{c}{Partial Observation} \\
\hline
 \multirow{2}{*}{Ensemble} &\multirow{2}{*}{Metric} & Small Noise  & Moderate Noise & Large Noise & Small Noise & Moderate Noise & Large Noise \\
 & & $\alpha = 10^{-4} $ & $\alpha = 10^{-2}$ & $\alpha = 10^{-1}$ & $\alpha = 10^{-4} $ & $\alpha = 10^{-2}$ & $\alpha = 10^{-1}$  \\ 
\hline\hline
\multirow{6}{*}{$N=21$} &
$\mathsf{EnKF}$ Mean Error & 0.1011 & 0.9573 & 3.0231 & 0.4064 & 3.3882 & 10.5921 \\
& $\mathsf{REnKF}$ Mean Error & 0.1016 & 0.9616 & 3.0335 & 0.4071 & 3.3565 & 10.6379 \\
& $\mathsf{EnKF}$ CI  Width & 0.0208 & 0.2083 & 0.6586 & 0.0266 & 0.2660 & 0.8412 \\
& $\mathsf{REnKF}$ CI  Width & 0.0205 & 0.2047 & 0.6475 & 0.0258 & 0.2584 & 0.8167 \\
& $\mathsf{EnKF}$ CI  Coverage (\%)  & 50.24 & 51.55 & 51.61 & 39.62 & 43.25 & 43.26 \\
& $\mathsf{REnKF}$ CI  Coverage (\%) & 49.07 & 50.34 & 50.44 & 38.25 & 42.04 & 41.87 \\
\hline\hline
\multirow{6}{*}{$N=84$} &
$\mathsf{EnKF}$ Mean Error & 0.0582 & 0.5682 & 1.7971 & 0.2919 & 2.4181 & 7.6282 \\
& $\mathsf{REnKF}$ Mean Error & 0.0590 & 0.5760 & 1.8218 & 0.2977 & 2.5004 & 7.9011 \\
& $\mathsf{EnKF}$ CI  Width & 0.0281 & 0.2813 & 0.8895 & 0.0438 & 0.4383 & 1.3861 \\
& $\mathsf{REnKF}$ CI  Width & 0.0279 & 0.2785 & 0.8806 & 0.0412 & 0.4120 & 1.3033 \\
& $\mathsf{EnKF}$ CI  Coverage (\%) & 87.96 & 88.61 & 88.61 & 71.47 & 75.31 & 75.30 \\
& $\mathsf{REnKF}$ CI  Coverage (\%) & 86.80 & 87.52 & 87.52 & 69.25 & 72.54 &  72.61 \\
\hline
\end{tabular}%
}
\vspace{0.25cm}
\caption{Performance metrics for the Lorenz 96 model with $d=42.$}
\label{tab:nonlinear}
\end{table}
In Table \ref{tab:nonlinear}, we compare the performance of $\mathsf{REnKF}$ and $\mathsf{EnKF}.$ In the case of full observations, the covariance configuration is outlined in Case A of Table \ref{tab:noise_covariance}, and in the case of partial observations it is outlined in Case C of Table \ref{tab:noise_covariance}. 
We repeat the experiments with ensembles of size $N=21$ and $N=84$, and the metrics are computed over $M = 100$ runs of the algorithms. 
In Figure \ref{fig:nonlinear_one_simulation}, we present a single representative simulation of the first component $u(1)$ ---which is observed--- and the third component $u(3)$ ---which is unobserved--- corresponding to a particular choice of parameters in Table \ref{tab:nonlinear}. Additional experiments in the accompanying Github repository show that, as the noise level $\alpha$ increases, state estimation remains effective for observed variables but deteriorates for unobserved ones. This behavior explains the larger error for moderate and large noise levels in the partial observation set-up in  Table \ref{tab:nonlinear}.

\begin{figure}[htbp]
    \centering
    \includegraphics[width=\textwidth]{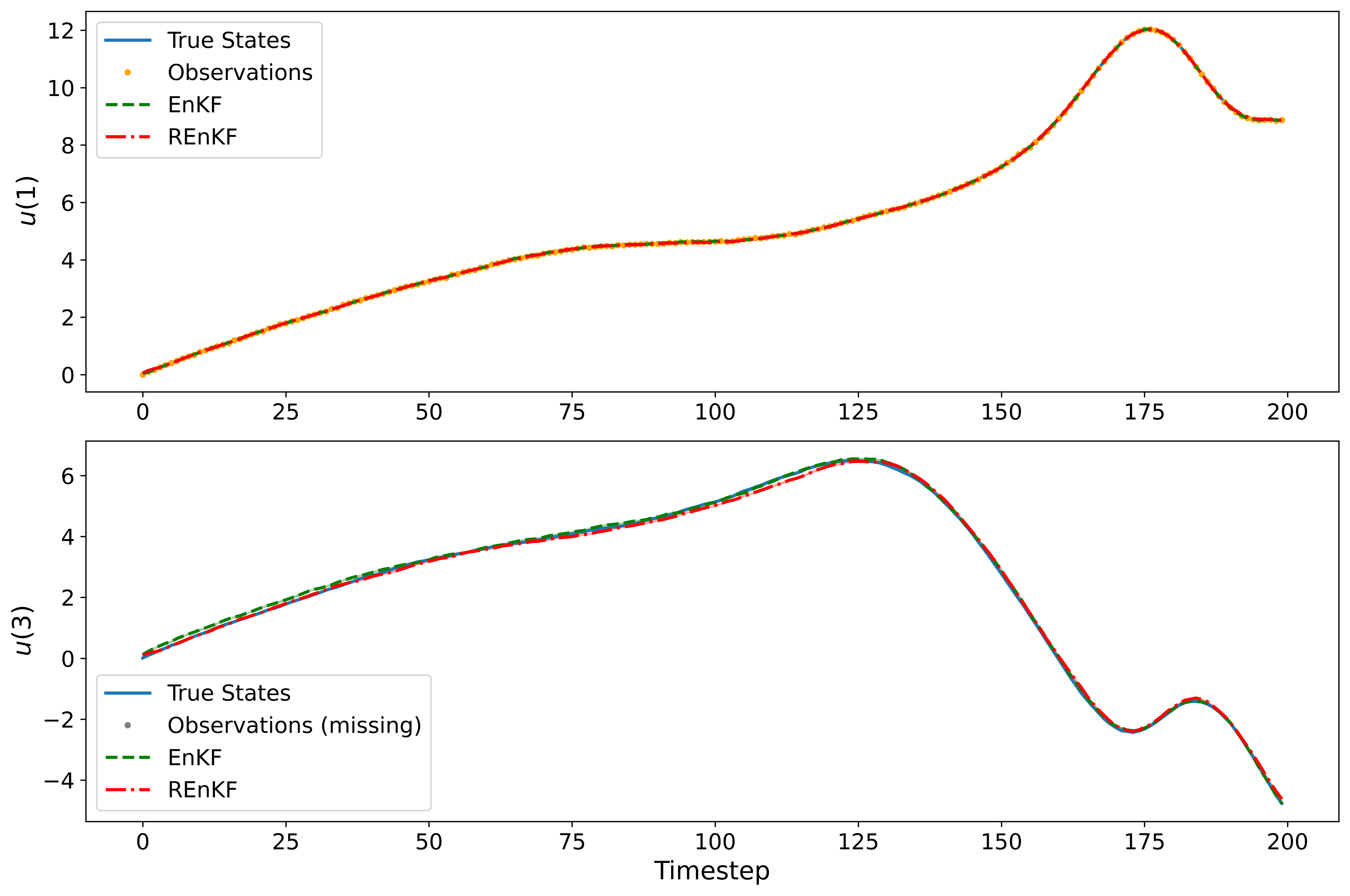}
    \vspace{-0.7cm}
    \caption{State estimation of coordinates $u(1)$ (observed) and $u(3)$ (unobserved) in a partially observed Lorenz 96 system with ensemble size $N = 21$ and small noise $\alpha = 10^{-4}.$ $\mathsf{REnKF}$ accurately recovers observed and unobserved coordinates of the state.}
    \label{fig:nonlinear_one_simulation}
\end{figure}

In Figure~\ref{fig:nonlinear_plot}, we further analyze the effects of varying $\alpha$ (column 1), $N$ (column 2), and $d$ (column 3) on $\mathsf{E}{\text {\tiny L96}}$ in both the full observation (row 1) and partial observation (row 2) settings. More precisely, in the first column of Figure~\ref{fig:nonlinear_plot}, $\alpha$ is varied over a grid of $15$ evenly spaced values between $10^{-16}$ and $1$ while holding fixed $N=20$ and $d=42$. In both full and partial observation settings, we take the initial distribution to have zero mean and covariance $\Sigma^{(0)} = 10^{-8} \times I_{42}$. 
\begin{figure} 
    \centering
    \includegraphics[width=\textwidth]{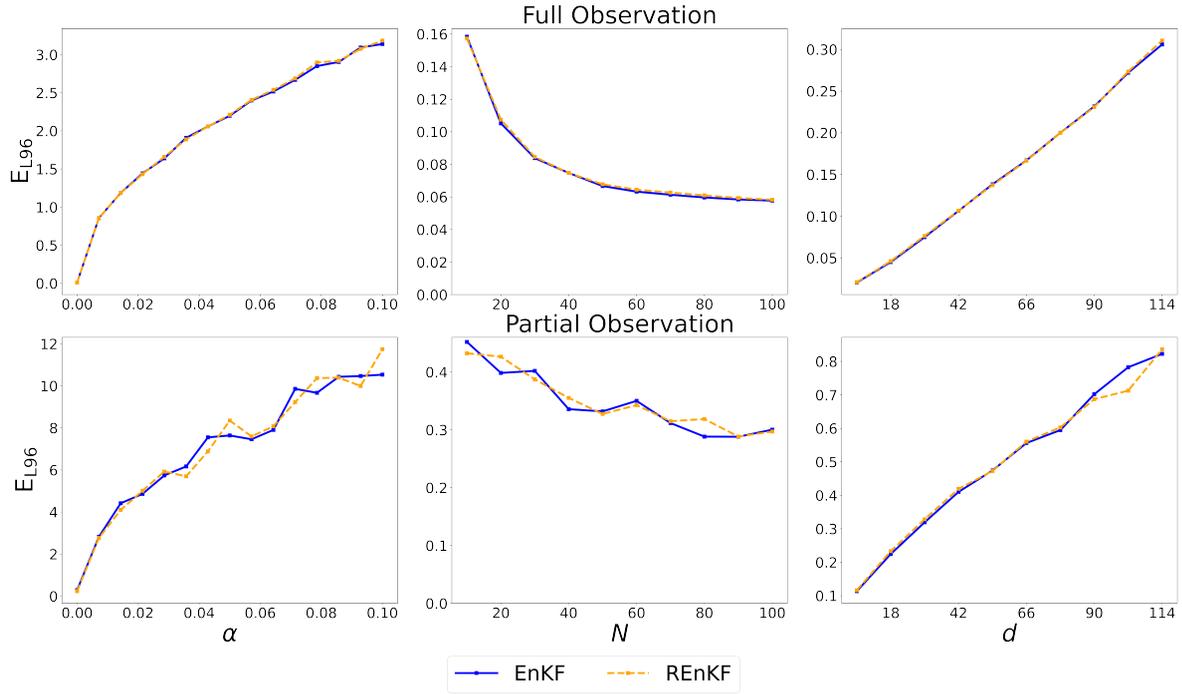}
    \vspace{-.65cm}
    \caption{Effects of $\alpha$, $N,$ and $d$ in the Lorenz 96 example.}
    \label{fig:nonlinear_plot}
\end{figure}
In the second column of Figure~\ref{fig:nonlinear_plot}, the ensemble size $N$ ranges from 10 to 100, increasing in steps of 10, while fixing $\alpha=10^{-4}$ and $d=42$. In both full and partial observation settings, we take the initial distribution to have zero mean and covariance $\Sigma^{(0)} = 1.1\alpha \times I_{42}$. 
In the third column of Figure~\ref{fig:nonlinear_plot}, the dimension $d$ is varied over the values in $\{6, 18, 30, 42, 54, 66, 78, 90, 102\}$ which are all multiples of 3 to facilitate convenient calculations in the partially observed setting. We fix $N=20$ and $\alpha=10^{-4}$ and in both full and partial observation settings, we take the initial distribution to have zero mean and covariance $\Sigma^{(0)} = 1.1\alpha \times I_{d}$, respectively.

Our findings, as illustrated in Table~\ref{tab:nonlinear} and Figure~\ref{fig:nonlinear_plot}, demonstrate that $\mathsf{REnKF}$ achieves performance comparable to that of $\mathsf{EnKF}$, even in challenging nonlinear regimes. Notably, for both algorithms we observe a slightly inferior performance with partial observations compared to full observations under identical conditions.
Moreover, a consistent trend is noticed in the dependency of $\mathsf{E}{\text {\tiny L96}}$ on the noise level, state dimension, and ensemble size. Notice, however, that the performance of $\mathsf{REnKF}$ deteriorates further in non-Gaussian settings with partial observations, large $N,$ and large noise. Such worsened performance may be partly explained by the additional Gaussian assumption tacitly imposed in the resampling step, which further destroys the non-Gaussian structure of the problem for nonlinear forward models.  
Table~\ref{tab:nonlinear} further demonstrates that $\mathsf{REnKF}$ is as effective as $\mathsf{EnKF}$ in the task of uncertainty quantification. Nevertheless, both $\mathsf{EnKF}$ and $\mathsf{REnKF}$ encounter difficulties in delivering reliable uncertainty quantification, especially in scenarios with partial observation and small ensemble size.

\section{Proof of Theorem \ref{thm:MultiStepPOEnKFBounds}}\label{sec:proofmaintheorem}
The result will be established by strong induction on the \emph{mean bound} \eqref{eq:meanboundmainth} and the \emph{covariance bound} \eqref{eq:covarianceboundmainth} along with induction on two additional bounds: for any $j=1,2,\dots$ and $q \ge 1$
 \begin{align}
        \||\ttrace(\hatCpost^{(j-1)}) | \|_q 
        &\le c_3 r_2(\Cpost^{(0)}), \label{eq:covtracebound}\\
        \normn{|\hatCpr^{(j)}-\Cpr^{(j)}|}_{q}
        &\le 
        c_4
        \inparen{
        \sqrt{\frac{r_2(\Cpost^{(0)})}{N}}
        \lor 
        \sqrt{\frac{r_2(\xiCovariance)}{N}}
        \lor 
        \sqrt{\frac{r_2(\Gamma)}{N}}
        },   \label{eq:forecastcovbound}
    \end{align}
    where $c_3$ and $c_4$ are again potentially different universal constants that depend on the same parameters as $c_2$ in the statement of Theorem \ref{thm:MultiStepPOEnKFBounds}. We will refer to \eqref{eq:covtracebound} as the \emph{covariance trace bound} and to \eqref{eq:forecastcovbound} as the \emph{forecast covariance bound}. In this section, we require the following additional notation:
    given two positive sequences $\{a_n\}$ and $\{ b_n\}$, the relation $a_n \lesssim b_n$ denotes that $a_n \le c b_n$ for some constant $c>0$. If the constant $c$ depends on some quantity $\tau$, then we write $a \lesssim_{\tau} b$. Throughout, we denote positive universal constants by $c,c_1,c_2,c_3,c_4$, and the value of a universal constant may differ from line to line. In some cases, the explicit dependence of a universal constant on the parameter $\tau$ is indicated by writing $c(\tau)$.

    This section is organized as follows. 
    Subsection \ref{ssec:preliminaryresults} contains preliminary results. 
    We then prove the base case $j=1$ in Subsection \ref{ssec:base}. Finally, in Subsection \ref{ssec:induction} we show that the bounds \eqref{eq:meanboundmainth}, \eqref{eq:covarianceboundmainth}, \eqref{eq:covtracebound}, and \eqref{eq:forecastcovbound}  hold for $j$ assuming they hold for all $\ell \le j-1.$ 
\vspace{1mm}
\subsection{Preliminary Results}\label{ssec:preliminaryresults}
    \begin{lemma}[Operator Norm of Covariance] \label{lem:OpNormCovariance}
    For any $j \ge 0$, let $\Cpost^{(j)}$ be the analysis covariance at iteration $j$. Then, 
    \begin{align*}
        |\Cpost^{(j)}| 
        \le 
        |A|^{2 j} |\Cpost^{(0)}| + |\xiCovariance| \sum_{\ell=0}^{j-1} |A|^{2\ell}
        \le c(|A|, |\xiCovariance|, |\Cpost^{(0)}|, j).
    \end{align*}
\end{lemma}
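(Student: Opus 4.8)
The plan is to observe that in the Kalman filter the analysis covariance is produced from the forecast covariance by the covariance-update operator $\msC$ of \eqref{eq:CovarianceOperator}, and that this operator can only shrink a covariance in the Loewner order. Concretely, I would first establish the two-sided bound
$$0 \preceq \msC(\Cpr) \preceq \Cpr \qquad \text{for every } \Cpr \in \mcS^d_+ .$$
Granting this, recall from Table \ref{tab:sec4} that $\Cpost^{(j)} = \msC(\Cpr^{(j)})$ with $\Cpr^{(j)} = A\Cpost^{(j-1)}A^\top + \xiCovariance$, so that $0 \preceq \Cpost^{(j)} \preceq A\Cpost^{(j-1)}A^\top + \xiCovariance$. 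Using that the operator norm is monotone on $\mcS^d_+$, subadditive, and submultiplicative (together with $|A^\top| = |A|$), this gives the scalar recursion $|\Cpost^{(j)}| \le |A|^2 |\Cpost^{(j-1)}| + |\xiCovariance|$. A one-line induction on $j$ then yields $|\Cpost^{(j)}| \le |A|^{2j}|\Cpost^{(0)}| + |\xiCovariance|\sum_{\ell=0}^{j-1}|A|^{2\ell}$, the base case $j=0$ being trivial since the empty sum vanishes. The final bound follows by noting that $\sum_{\ell=0}^{j-1}|A|^{2\ell} \le j\max\{1,|A|^{2(j-1)}\}$, so the right-hand side is a constant depending only on $|A|$, $|\xiCovariance|$, $|\Cpost^{(0)}|$, and $j$.

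The one substantive step, and the place I would be most careful, is the inequality $0 \preceq \msC(\Cpr) \preceq \Cpr$. From the definitions of $\msK$ and $\msC$ one has the algebraic identity $\msC(\Cpr) = \Cpr - \Cpr H^\top (H\Cpr H^\top + \Gamma)^{-1} H\Cpr$, where $H\Cpr H^\top + \Gamma \succeq \Gamma \succ 0$ is invertible because $\Gamma \in \mcS^k_{++}$. The upper bound is then immediate: writing $M := \Cpr H^\top$, the subtracted term equals $M(H\Cpr H^\top + \Gamma)^{-1}M^\top \succeq 0$, whence $\msC(\Cpr) \preceq \Cpr$. For the lower bound I would use a Schur-complement argument: the block matrix $\begin{pmatrix} I \\ H \end{pmatrix}\Cpr\begin{pmatrix} I & H^\top \end{pmatrix} + \begin{pmatrix} 0 & 0 \\ 0 & \Gamma \end{pmatrix} = \begin{pmatrix} \Cpr & \Cpr H^\top \\ H\Cpr & H\Cpr H^\top + \Gamma \end{pmatrix}$ is positive semidefinite and has invertible $(2,2)$-block $H\Cpr H^\top + \Gamma$, so its Schur complement with respect to that block, namely $\Cpr - \Cpr H^\top(H\Cpr H^\top + \Gamma)^{-1}H\Cpr = \msC(\Cpr)$, is positive semidefinite.

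Everything else is routine linear algebra: for $X, Y \in \mcS^d_+$ with $X \preceq Y$ one has $|X| = \lambda_{\max}(X) \le \lambda_{\max}(Y) = |Y|$, and both $\msC(\Cpr^{(j)})$ and $A\Cpost^{(j-1)}A^\top + \xiCovariance$ are symmetric positive semidefinite so this comparison applies. I do not anticipate any obstacle beyond cleanly recording these Loewner-order facts; the proof is short once the contraction property of $\msC$ is in hand.
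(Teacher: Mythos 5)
Your proposal is correct and follows essentially the same route as the paper: both reduce the claim to the contraction property $|\msC(\Cpr)|\le|\Cpr|$ of the covariance-update operator, derive the scalar recursion $|\Cpost^{(j)}|\le|A|^2|\Cpost^{(j-1)}|+|\xiCovariance|$, and unroll it by induction. The only difference is that you prove $0\preceq\msC(\Cpr)\preceq\Cpr$ from scratch via a Schur-complement argument, whereas the paper imports this fact from a background lemma (Lemma~\ref{lem:CovarOpCtyBddLp}, quoted from the literature).
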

\begin{proof}
    By Lemma~\ref{lem:TraceProperties}, $|\msC(\Cpr)| \le |\Cpr|,$ and so 
    \begin{align*}
        |\Cpost^{(j)}| 
        = |\msC(\Cpr^{(j)})|
        &\le |\Cpr^{(j)}|
        = |A \Cpost^{(j-1)} A^\top + \xiCovariance |
        \le |A|^2 |\Cpost^{(j-1)}| + |\xiCovariance|\\
        &\le |A|^4 |\Cpost^{(j-2)}| + |A|^2|\xiCovariance|+|\xiCovariance|
         \le \cdots \le 
        |A|^{2 j} |\Cpost^{(0)}| + |\xiCovariance| \sum_{\ell=0}^{j-1} |A|^{2\ell}.
    \end{align*}
\end{proof}

\begin{lemma}[Trace of Offset]\label{lem:OffsetTrace}
    For any $j \ge 1$, we have that 
    \begin{align*}
        \ttrace(\whatO^{(j)}) &\le 
        |H|^2 |\hatGamma^{(j)} - \Gamma| |\Gamma^{-1}|^2
        |\hatCpr^{(j)}| \ttrace(\hatCpr^{(j)}) \\
       & + 
        2(1+ |\hatCpr^{(j)}||H|^2|\Gamma^{-1}| )
        |\Gamma^{-1}|
        |\whatC^{(j)}_{\, u \eta}| |H| 
        \ttrace (
         \hatCpr^{(j)} 
         ).
    \end{align*}
\end{lemma}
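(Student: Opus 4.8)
The plan is to expand $\ttrace(\whatO^{(j)})$ by linearity of the trace, collapse the three resulting summands into two via a transpose identity, and then control each trace using a single elementary inequality combined with the operator-norm bounds on the Kalman gain that are already available.

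First I would abbreviate $K := \msK(\hatCpr^{(j)})$ and observe that the second and third terms in the definition of $\whatO^{(j)}$ are transposes of one another: indeed $\bigl[(I-KH)\whatC^{(j)}_{\, u \eta}K^\top\bigr]^\top = K(\whatC^{(j)}_{\, u \eta})^\top(I-H^\top K^\top)$. Since $\ttrace(M) = \ttrace(M^\top)$, this gives
\[
\ttrace(\whatO^{(j)}) \;=\; \ttrace\!\bigl(K(\hatGamma^{(j)}-\Gamma)K^\top\bigr) \;+\; 2\,\ttrace\!\bigl((I-KH)\,\whatC^{(j)}_{\, u \eta}\,K^\top\bigr).
\]
The workhorse will be the fact that $|\ttrace(MP)| \le |M|\,\ttrace(P)$ for any matrix $M$ and any $P \in \mcS^d_+$, which follows from the spectral decomposition $P = \sum_i \lambda_i v_i v_i^\top$ (with $\lambda_i \ge 0$, $|v_i|=1$) since $\ttrace(MP) = \sum_i \lambda_i\, v_i^\top M v_i$ and $|v_i^\top M v_i| \le |M|$ by Cauchy--Schwarz; this is of the same flavour as the items collected in Lemma~\ref{lem:TraceProperties}. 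Alongside it I would record the standard estimates: since $H\hatCpr^{(j)}H^\top + \Gamma \succeq \Gamma \succ 0$, we have $|(H\hatCpr^{(j)}H^\top+\Gamma)^{-1}| \le |\Gamma^{-1}|$, whence $|K| \le |\hatCpr^{(j)}|\,|H|\,|\Gamma^{-1}|$ and $|I-KH| \le 1 + |\hatCpr^{(j)}|\,|H|^2\,|\Gamma^{-1}|$.

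The key step is to rewrite each trace so that it ends in a factor of the positive-semidefinite matrix $\hatCpr^{(j)}$, using the identity $K^\top = (H\hatCpr^{(j)}H^\top+\Gamma)^{-1}H\hatCpr^{(j)}$. For the first term this gives $K(\hatGamma^{(j)}-\Gamma)K^\top = \bigl[K(\hatGamma^{(j)}-\Gamma)(H\hatCpr^{(j)}H^\top+\Gamma)^{-1}H\bigr]\hatCpr^{(j)}$, so the trace inequality and submultiplicativity yield $\ttrace(K(\hatGamma^{(j)}-\Gamma)K^\top) \le |K|\,|\hatGamma^{(j)}-\Gamma|\,|\Gamma^{-1}|\,|H|\,\ttrace(\hatCpr^{(j)}) \le |H|^2|\hatGamma^{(j)}-\Gamma||\Gamma^{-1}|^2|\hatCpr^{(j)}|\,\ttrace(\hatCpr^{(j)})$. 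The same rewriting turns the cross term into $(I-KH)\whatC^{(j)}_{\, u \eta}K^\top = \bigl[(I-KH)\whatC^{(j)}_{\, u \eta}(H\hatCpr^{(j)}H^\top+\Gamma)^{-1}H\bigr]\hatCpr^{(j)}$, giving $\ttrace((I-KH)\whatC^{(j)}_{\, u \eta}K^\top) \le |I-KH|\,|\whatC^{(j)}_{\, u \eta}|\,|\Gamma^{-1}|\,|H|\,\ttrace(\hatCpr^{(j)}) \le (1+|\hatCpr^{(j)}||H|^2|\Gamma^{-1}|)|\Gamma^{-1}||\whatC^{(j)}_{\, u \eta}||H|\,\ttrace(\hatCpr^{(j)})$. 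Summing the three contributions produces exactly the claimed bound.

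The main (and only mild) obstacle is being disciplined about the matrix algebra: tracking which quantities are symmetric, applying $|\ttrace(MP)| \le |M|\ttrace(P)$ only with $P = \hatCpr^{(j)}$ genuinely positive semidefinite while allowing $M$ to be non-symmetric, and noting that the transpose-collapse of the two cross terms is exact rather than merely up to norm. No probabilistic input enters: $\hatGamma^{(j)}$, $\whatC^{(j)}_{\, u \eta}$, and $\hatCpr^{(j)}$ are treated as fixed matrices, so the entire argument is deterministic linear algebra.
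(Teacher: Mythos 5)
Your proof is correct and follows essentially the same route as the paper's: split $\whatO^{(j)}$ into its three summands, use that the two cross terms are transposes of each other, and bound each trace by an operator norm times $\ttrace(\hatCpr^{(j)})$ using the standard Kalman-gain estimates $|K|\le|\hatCpr^{(j)}||H||\Gamma^{-1}|$ and $|I-KH|\le 1+|\hatCpr^{(j)}||H|^2|\Gamma^{-1}|$. If anything, your version is slightly more careful than the paper's, since you always apply the trace inequality in the form $|\ttrace(MP)|\le|M|\ttrace(P)$ with $P=\hatCpr^{(j)}$ genuinely positive semidefinite (justified via the spectral decomposition), whereas the paper invokes Lemma~\ref{lem:TraceProperties} in a few places where one of the factors, e.g.\ $\whatC^{(j)}_{\,u\eta}\msK^\top(\hatCpr^{(j)})$, is not symmetric positive semidefinite.
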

\begin{proof}
    Write $\whatO^{(j)} = \sum_{\ell=1}^3\whatO^{(j)}_\ell$ with 
    \begin{align*}
        \whatO^{(j)}_1 &:= \msK(\hatCpr^{(j)}) (\widehat{\Gamma}^{(j)} - \Gamma) \msK^\top(\hatCpr^{(j)}), \\
         \whatO^{(j)}_2 &:= \bigl(I-\msK(\hatCpr^{(j)}) H\bigr) \whatC^{(j)}_{\, u \eta}\msK^\top(\hatCpr^{(j)}), \\
         \whatO^{(j)}_3 &:= \msK(\hatCpr^{(j)}) (\whatC^{(j)}_{\, u \eta})^\top \bigl(I-H^\top\msK^\top(\hatCpr^{(j)})\bigr).
    \end{align*}
    By linearity of the trace, $\ttrace(\whatO^{(j)}) = \sum_{\ell=1}^3\ttrace(\whatO^{(j)}_\ell)$. Note first that by Lemma~\ref{lem:TraceProperties} applied four times 
        \begin{align*}
        \ttrace(\hatO^{(j)}_1)
        &\le |\hatGamma^{(j)} - \Gamma| \ttrace(\msK^\top(\hatCpr^{(j)}) \msK(\hatCpr^{(j)}))\\
        &=|\hatGamma^{(j)} - \Gamma| 
        \ttrace((H\hatCpr^{(j)}H^\top + \Gamma)^{-1} H(\hatCpr^{(j)})^\top \hatCpr^{(j)} H^\top (H\hatCpr^{(j)}H^\top + \Gamma)^{-1})\\
        &\le|\hatGamma^{(j)} - \Gamma| |(H\hatCpr^{(j)}H^\top + \Gamma)^{-1}|^2
        \ttrace( (\hatCpr^{(j)})^\top \hatCpr^{(j)} H^\top H  )\\
        &\le|H|^2 |\hatGamma^{(j)} - \Gamma| |(H\hatCpr^{(j)}H^\top + \Gamma)^{-1}|^2
        |\hatCpr^{(j)}| \ttrace(\hatCpr^{(j)})\\
        &\le|H|^2 |\hatGamma^{(j)} - \Gamma| |\Gamma^{-1}|^2
        |\hatCpr^{(j)}| \ttrace(\hatCpr^{(j)}),
    \end{align*}
    where the final inequality holds since $H\hatCpr^{(j)}H^\top + \Gamma \succeq \Gamma$ implies that $\Gamma^{-1} \succeq (H\hatCpr^{(j)}H^\top + \Gamma )^{-1}$. Invoking once more Lemma~\ref{lem:TraceProperties} repeatedly, we get that 
    \begin{align*}
        \ttrace(\hatO^{(j)}_2)
        &= 
        \ttrace \bigl(
        \bigl(I-\msK(\hatCpr^{(j)}) H\bigr) \whatC^{(j)}_{\, u \eta} \msK^\top(\hatCpr^{(j)})
        \bigr)\\
        &\le 
        |\bigl(I-\msK(\hatCpr^{(j)}) H\bigr)|
        \ttrace \bigl(
         \whatC^{(j)}_{\, u \eta} \msK^\top(\hatCpr^{(j)})
        \bigr)\\
        &\le 
        |\bigl(I-\msK(\hatCpr^{(j)}) H\bigr)|
        |(H\hatCpr^{(j)}H^\top + \Gamma)^{-1}|
        \ttrace (
         H \hatCpr^{(j)} \whatC^{(j)}_{\, u \eta} 
         )\\
         &\le 
        |\bigl(I-\msK(\hatCpr^{(j)}) H\bigr)|
        |(H\hatCpr^{(j)}H^\top + \Gamma)^{-1}|
        |\whatC^{(j)}_{\, u \eta}| |H| 
        \ttrace (
         \hatCpr^{(j)} 
         )\\
         &\le 
        (1+ |\hatCpr^{(j)}||H|^2|\Gamma^{-1}| )
        |\Gamma^{-1}|
        |\whatC^{(j)}_{\, u \eta}| |H| 
        \ttrace (
         \hatCpr^{(j)} 
         ),
    \end{align*}
    where the last inequality holds since, by Lemma~\ref{lem:KalmanOpCtyBdd},
    \begin{align*}
        |\bigl(I-\msK(\hatCpr^{(j)}) H\bigr)|
        &\le 1+ |\msK(\hatCpr^{(j)})||H|
        \le 1+ |\hatCpr^{(j)}||H|^2|\Gamma^{-1}|.
    \end{align*}
    Finally, note that since $\hatO^{(j)}_3 = (\hatO^{(j)}_2)^\top,$ the analysis of $\hatO^{(j)}_3$ follows in similar fashion.
\end{proof}
        
    \subsection{Base Case}\label{ssec:base}
    In the next four subsections we establish the covariance trace bound \eqref{eq:covtracebound}, the forecast covariance bound \eqref{eq:forecastcovbound}, the mean bound \eqref{eq:meanboundmainth}, and the covariance bound \eqref{eq:covarianceboundmainth} in the base case $j = 1.$
    \subsubsection{Covariance Trace Bound} 
    Since  $\hatCpost^{(0)} = \Cpost^{(0)},$ we directly obtain that
    \begin{align*}
        \| |\ttrace(\hatCpost^{(0)})| \|_q = \ttrace( \Cpost^{(0)}) =   |\Cpost^{(0)}| r_2(\Cpost^{(0)}).
    \end{align*}

    \subsubsection{Forecast Covariance Bound}\label{sssec:forecastbase} 
    Let $\mfq \in \{q, 2q, 4q\}$. It follows by the triangle inequality, Theorem~\ref{thm:CovarianceExpectationBoundLq}, and Lemma~\ref{lem:CrossCovarianceBound} that 
    \begin{align} \label{eq:BaseCasePOForecastCovarianceBound}  
        \normn{|\hatCpr^{(1)}-\Cpr^{(1)}|}_{\mfq}
        &\le |A|^2 \| | S^{(0)} - \Cpost^{(0)}| \|_\mfq + 
        \| |\hatxiCovariance^{(1)} - \xiCovariance| \|_\mfq+ 2|A| \| |\whatC_{\, u \xi}^{(1)}| \|_{\mfq} \nonumber\\
        &\lesssim_q |A|^2 |\Cpost^{(0)}| \sqrt{\frac{r_2(\Sigma^{(0)})}{N}} 
        + |\xiCovariance| \sqrt{\frac{r_2(\xiCovariance)}{N}}\nonumber
        \\&+ 2|A| (|\Cpost^{(0)}| \lor |\xiCovariance|) \inparen{
    \sqrt{\frac{r_2(\Cpost^{(0)})}{N}}
    \lor 
    \sqrt{\frac{r_2(\xiCovariance)}{N}}} \nonumber\\
    &\le c(|A|, |\Sigma^{(0)}|, |\xiCovariance|, q) 
        \inparen{ \sqrt{\frac{r_2(\Cpost^{(0)})}{N}} \lor \sqrt{\frac{r_2(\xiCovariance)}{N}}}.
    \end{align}
    \subsubsection{Mean Bound} By Lemma~\ref{lem:MeanOpCtyBddLp}, 
\begin{align*}
    \||\hatmpost^{(1)} - \mpost^{(1)}| \|_q
    &= \|| \msM(\hatmpr^{(1)}, \hatCpr^{(1)}; y^{(1)}) - \msM(\mpr^{(1)}, \Cpr^{(1)}; y^{(1)}) |\|_q + \| | \msK(\hatCpr^{(1)}) \bareta^{(1)}|\|_q \\
    &\le 
        \norm{|\hatmpr^{(1)}-\mpr^{(1)}|}_{q}
        + 
        |H|^2 |\Gamma^{-1}| \normn{|\hatCpr^{(1)}|}_{2q} \normn{|\hatmpr^{(1)}-\mpr^{(1)}|}_{2q}\\  
        & + 
        \normn{|\hatCpr^{(1)}-\Cpr^{(1)}|}_{q} |H| |\Gamma^{-1}|
        \bigl(1 + |H|^2 |\Gamma^{-1}| |\Cpr^{(1)}| \bigr) 
        |y^{(1)}-H\mpr^{(1)}|\\
        &+ \| | \msK(\hatCpr^{(1)}) \bareta^{(1)}|\|_q.
\end{align*}
The mean bound \eqref{eq:meanboundmainth} with $j =1$ is then a direct consequence of the bounds that we now establish on $\||\hatmpr^{(1)} - \mpr^{(1)}| \|_\mfq$ for $\mfq \in \{ q, 2q\}$ and on $\| | \msK(\hatCpr^{(1)}) \bareta^{(1)}|\|_q$.

\paragraph{Controlling $\||\hatmpr^{(1)} - \mpr^{(1)}| \|_\mfq$ for $\mfq \in \{ q, 2q\}$} It follows by the triangle inequality and Lemma~\ref{lem:GaussianTraceLqBound} applied twice that 
    \begin{align*}
        \||\hatmpr^{(1)} - \mpr^{(1)}| \|_\mfq
        &\le |A| \| | \bar{\upr}^{(0)} - \mpost^{(0)}| \|_\mfq + \| | \barxi^{(1)}| \|_\mfq
        \lesssim_q |A| |\Cpost^{(0)}|  \sqrt{\frac{r_2(\Cpost^{(0)})}{N}}
        + |\xiCovariance| \sqrt{\frac{r_2(\xiCovariance)}{N}}\\
        &\le c(|A|, |\Sigma^{(0)}|, |\xiCovariance|, q) 
        \inparen{\sqrt{\frac{r_2(\Cpost^{(0)})}{N}} \lor \sqrt{\frac{r_2(\xiCovariance)}{N}}}.
    \end{align*}

    \paragraph{Controlling $\| | \msK(\hatCpr^{(1)}) \bareta^{(1)}|\|_q$} By Cauchy-Schwarz
    $$
    \| | \msK(\hatCpr^{(1)}) \bareta^{(1)}|\|_q \le \| | \msK(\hatCpr^{(1)})|\|_{2q}  \| | \bareta^{(1)} |\|_{2q}.$$ 
    We bound each term in turn. By Lemma~\ref{lem:GaussianTraceLqBound}, $\| | \bareta^{(1)} |\|_{2q} \lesssim_q \sqrt{\ttrace(\Gamma)/N}=\sqrt{|\Gamma|r_2(\Gamma)/N},$
    and by Lemma~\ref{lem:KalmanOpCtyBdd} and the forecast covariance bound \eqref{eq:BaseCasePOForecastCovarianceBound},  
    \begin{align*}
        \| | \msK(\hatCpr^{(1)})|\|_{2q} 
        &\le |H| |\Gamma^{-1}| \| |\hatCpr^{(1)} |\|_{2q}
        \le |H| |\Gamma^{-1}| \inparen {
        \| |\hatCpr^{(1)} -\Cpr^{(1)}|\|_{2q} + |\Cpr^{(1)}|
        }\\
        &\lesssim 
        |H| |\Gamma^{-1}| |\Cpr^{(1)}|
        c(|A|, |\Sigma^{(0)}|, |\xiCovariance|, q) 
        \inparen{1 \lor \sqrt{\frac{r_2(\Cpost^{(0)})}{N}} \lor \sqrt{\frac{r_2(\xiCovariance)}{N}}}.
    \end{align*}
    Therefore, 
    \begin{align*}
        \| | \msK(\hatCpr^{(1)}) \bareta^{(1)}|\|_q 
        &\le
        c(|H|, |\Sigma^{(0)}|, |\xiCovariance|,|H|, |\Gamma^{-1}|,|\Gamma|, q) 
        \inparen{\sqrt{\frac{r_2(\Cpost^{(0)})}{N}} 
        \lor \sqrt{\frac{r_2(\xiCovariance)}{N}}
        \lor \sqrt{\frac{r_2(\Gamma)}{N}}}.
    \end{align*}

\subsubsection{Covariance Bound}
 From Lemma~\ref{lem:CovarOpCtyBddLp} and the forecast covariance bound derived in Subsection \ref{sssec:forecastbase},  
 we have 
    \begin{align*}
        \normn{ | \hatCpost^{(1)} - \Cpost^{(1)}| }_{q}
        &\le 
        \normn{ | \msC(\hatCpr^{(1)}) - \msC(\Cpr^{(1)})| }_{q} + \normn{ | \hatO| }_{q}\\
        &\le
        \normn{| \hatCpr^{(1)} - \Cpr^{(1)}|}_{q}
        (1 + |H|^2 |\Gamma^{-1}| |\Cpr^{(1)}|)\\
        &+ (
            |H|^2 |\Gamma^{-1}| + 
            |H|^4 |\Gamma^{-1}|^2 |\Cpr^{(1)}|
        ) 
        \normn{|\hatCpr^{(1)}|}_{2q} \normn{|\hatCpr^{(1)}-\Cpr^{(1)}|}_{2q}
        + \normn{ | \hatO^{(1)}| }_{q}\\
        &\le
        c(|A|, |H|, |\Gamma^{-1}|, |\Cpost^{(0)}|, |\xiCovariance|, q)
        \inparen{
        \sqrt{\frac{r_2(\Cpost^{(0)})}{N}}
        \lor 
        \sqrt{\frac{r_2(\xiCovariance)}{N}}
        }+ \normn{ | \hatO^{(1)}| }_{q}.
    \end{align*}
    To derive the covariance bound \eqref{eq:covarianceboundmainth}, we need to control the offset term $  \normn{|\whatO^{(1)}|}_q$.
     First, using the triangle inequality we write 
    \begin{align*}
        \normn{|\whatO^{(1)}|}_q &\le  
        \normn{ |\msK(\hatCpr^{(1)}) (\widehat{\Gamma}^{(1)} - \Gamma) \msK^\top(\hatCpr^{(1)})| }_q 
        +\normn{| \bigl(I-\msK(\hatCpr^{(1)}) H\bigr) \whatC^{(1)}_{\, u \eta} \msK^\top(\hatCpr^{(1)})| }_q \\
        &+\normn{ | \msK(\hatCpr^{(1)}) (\whatC^{(1)}_{\, u \eta})^\top \bigl(I-H^\top\msK^\top(\hatCpr^{(1)})\bigr)| }_q\\
        &=: 
        \| |\hatO^{(1)}_1 |\|_q +\| |\hatO^{(1)}_2 |\|_q+ \| |\hatO^{(1)}_3 |\|_q.
    \end{align*}
    We next bound each term in turn. 
    
      \paragraph{Controlling $\| |\hatO^{(1)}_1 |\|_q$} By Lemma~\ref{lem:KalmanOpCtyBdd}, Theorem~\ref{thm:CovarianceExpectationBoundLq}, and the forecast covariance bound \eqref{eq:BaseCasePOForecastCovarianceBound},
      it holds that 
 \begin{align*}
     \normn{ | \msK(\hatCpr^{(1)})  (\widehat{\Gamma}^{(1)} - \Gamma) & \msK^\top(\hatCpr^{(1)})| }_q
     \le
     \normn{ |\msK(\hatCpr^{(1)})| }^2_{4q} \normn{| \widehat{\Gamma}^{(1)} - \Gamma| }_{2q}
     \\
     &\le
     |H|^2 |\Gamma^{-1}|^2 \normn{ |\hatCpr^{(1)}| }^2_{4q} \normn{| \widehat{\Gamma}^{(1)} - \Gamma| }_{2q}\\
     &\lesssim_q 
     |H|^2 |\Gamma^{-1}|^2
     |\Gamma | \sqrt{\frac{r_2(\Gamma)}{N}}
      \inparen{ 
        1 \lor 
        \sqrt{\frac{r_2(\Cpost^{(0)})}{N}}
        \lor
        \sqrt{\frac{r_2(\xiCovariance)}{N}}
        \lor 
        \sqrt{\frac{r_2(\Gamma)}{N}}
        }^2\\
     &\le c(|H|, |\Gamma|, |\Gamma^{-1}|,|\xiCovariance|, |\Cpost^{(0)}|, q ) \sqrt{\frac{r_2(\Gamma)}{N}},
 \end{align*}
 where the last inequality uses the fact that $N \ge r_2(\Cpost^{(0)}) \lor r_2(\xiCovariance) \lor r_2(\Gamma)$. 
    
    \paragraph{Controlling $\| |\hatO^{(1)}_2 |\|_q$} By Lemma~\ref{lem:KalmanOpCtyBdd}, Lemma~\ref{lem:CrossCovarianceBound}, and the forecast covariance bound \eqref{eq:BaseCasePOForecastCovarianceBound}, we get
 \begin{align*}
     \normn{ 
     | \bigl(I-\msK(\hatCpr^{(1)}) & H\bigr)   \whatC^{(1)}_{\, u \eta}  \msK^\top(\hatCpr^{(1)})|
     }_q
     \le 
    \normn{
    |\msK(\hatCpr^{(1)})| 
    |I-\msK(\hatCpr^{(1)}) H| 
    |\whatC^{(1)}_{\, u \eta}| }_q \\
     &\le 
     \normn{
     |\msK(\hatCpr^{(1)})| 
     (1+|\msK(\hatCpr^{(1)})| |H|)  
     |\whatC^{(1)}_{\, u \eta}| }_q\\
     &\le 
     \normn{|\whatC^{(1)}_{\, u \eta}| }_{2q}
     \inparen{
     |H| |\Gamma^{-1}|  \| |\hatCpr^{(1)}|\|_{2q}
     + |H|^3 |\Gamma^{-1}|^2  \| |\hatCpr^{(1)}|^2\|_{2q}
     }\\
     &\le
     c(|A|, |\Sigma^{(0)}|, |\xiCovariance|, q)
    \inparen{
    1 
    \lor 
    \sqrt{\frac{r_2(\Cpost^{(0)})}{N}}
    \lor 
    \sqrt{\frac{r_2(\xiCovariance)}{N}}
    \lor 
    \sqrt{\frac{r_2(\Gamma)}{N}}
    }\\
    &\times \inparen{
    |H|^3 |\Gamma^{-1}|  
     + |H|^7 |\Gamma^{-1}|^2  
    } (|\Cpost^{(0)}| \lor |\Gamma|) 
    \inparen{
        \sqrt{\frac{r_2(\Cpost^{(0)})}{N}}
        \lor
        \sqrt{\frac{r_2(\xiCovariance)}{N}}
        \lor 
        \sqrt{\frac{r_2(\Gamma)}{N}}
        }\\
        &\le c( |A|, |H|, |\Gamma^{-1}|, |\Gamma|, |\Cpost^{(0)}|, |\xiCovariance|, q) \inparen{
        \sqrt{\frac{r_2(\Cpost^{(0)})}{N}}
        \lor
        \sqrt{\frac{r_2(\xiCovariance)}{N}}
        \lor 
        \sqrt{\frac{r_2(\Gamma)}{N}}
        }.
 \end{align*}
    \paragraph{Controlling $\| |\hatO^{(1)}_3 |\|_q$} Note that $ \| |\hatO^{(1)}_3 |\|_q = \| |\hatO^{(1)}_2 |\|_q.$

\subsection{Induction Step}\label{ssec:induction}
In this subsection, to reduce notation we write $\Omega = 
        \sqrt{\frac{r_2(\Cpost^{(0)})}{N}}
        \lor 
        \sqrt{\frac{r_2(\xiCovariance)}{N}}
        \lor 
        \sqrt{\frac{r_2(\Gamma)}{N}}
        $.
Throughout, we work under the \emph{inductive hypothesis} that, for all $\ell \le j -1,$ it holds that 
\begin{align}
\begin{alignedat}{2} \label{eq:inductivehypothesis}
    \||\ttrace(\hatCpost^{(\ell-1)}) | \|_q 
        &\le 
        c_1 r_2(\Cpost^{(0)}), 
        \qquad 
        &\normn{|\hatCpr^{(\ell)}-\Cpr^{(\ell)}|}_{\mfq}
        \le 
        c_2 \Omega, \\
        \||\hatmpost^{(\ell)} - \mpost^{(\ell)}| \|_q
        &\le     
        c_3 \Omega,
        &\normn{ | \hatCpost^{(\ell)} - \Cpost^{(\ell)}| }_{q}
        \le 
        c_4 \Omega,
\end{alignedat}
\end{align}
    where $c_1, c_2, c_3,$ and  $c_4$ are constants depending on
    $|\Cpost^{(0)}|, |A|, |H|, |\Gamma^{-1}|, |\Gamma|, |\xiCovariance|, q$ and $ j,$
    and $c_3$ additionally depends on  $\{|y^{(i)}-H \mpr^{(i)}| \}_{i \le \ell-1}$. For the remainder of the proof, $c$ and $c'$ denote constants that depend on  $|\Cpost^{(0)}|, |A|, |H|, |\Gamma^{-1}|, |\Gamma|, |\xiCovariance|, q$ and $ j,$
    and $c'$ additionally depends on  $\{|y^{(i)}-H \mpr^{(i)}| \}_{i \le \ell-1}$ and are potentially different from line to line.   
    In the next four subsections we show that, under the inductive hypothesis, the four bounds in $\eqref{eq:inductivehypothesis}$ also hold for $\ell = j.$
    Throughout, we use without further notice that  $|\Cpost^{(\ell)}| \lesssim c(|A|,|\xiCovariance|, |\Cpost^{(0)}|, \ell),$ which was proved in Lemma~\ref{lem:OpNormCovariance}.

    \subsubsection{Covariance Trace Bound}\label{sssec:covtracebound} By Lemma~\ref{lem:CovarOpCtyBddLp}, $\ttrace(\msC(\hatCpr^{(j-1)})) \le \ttrace(\hatCpr^{(j-1)})$ follows from the fact that $\msC(\hatCpr^{(j-1)}) \preceq \hatCpr^{(j-1)}$, and so
    \begin{align*}
        \| \ttrace(\hatCpost^{(j-1)}) \|_{q}
        &\le 
        \| \ttrace(\msC(\hatCpr^{(j-1)})) \|_{q}  + \| \ttrace(\hatO^{(j-1)})\|_{q}
        \le
        \| \ttrace(\hatCpr^{(j-1)})\|_{q}  + \| \ttrace(\hatO^{(j-1)})\|_{q},
    \end{align*}
    We will show that both of the terms on the right-hand side are bounded above by a constant times $r_2(\Cpost^{(0)})$.

        \paragraph{Controlling $\| \ttrace(\hatCpr^{(j-1)})\|_{q}$}
        Noting first that
        \begin{align*}
            \E \insquare{\hatCpr^{(j-1)} \Big| \hatmpost^{(j-2)}, \hatCpost^{(j-2)} } 
            &=\E \insquare{A S^{(j-2)} A^\top 
        + \hatxiCovariance^{(j-1)} + A \whatC^{(j-1)}_{\, u \xi} + \whatC^{(j-1)}_{\, \xi u}A^\top \Big| \hatmpost^{(j-2)}, \hatCpost^{(j-2)}}\\ 
        &= \E \insquare{A S^{(j-2)} A^\top \Big| \hatmpost^{(j-2)}, \hatCpost^{(j-2)}} = A\hatCpost^{(j-2)}A^\top,
        \end{align*}
         and by Lemma~\ref{lem:TraceProperties}, it holds almost surely that 
        \begin{align*}
            \ttrace \bigl(A\hatCpost^{(j-2)}A^\top \bigr)
            \le  |A|^2 \ttrace(\hatCpost^{(j-2)})
            =|A|^2 |\hatCpost^{(j-2)}| r_2(\hatCpost^{(j-2)}).
        \end{align*}
        Then, by iterated expectations and Lemma~\ref{lem:TraceConcentration2}, we have 
        \begin{align*}
            \E \insquare{\ttrace(\hatCpr^{(j-1)}) }^q
            &=
            \E \insquare{ 
            \E \insquare{
                \inparen{\ttrace(\hatCpr^{(j-1)})}^q
               \, \Big| \,
                \hatmpost^{(j-2)}, \hatCpost^{(j-2)}
                }
            }\\
            & \hspace{-0.55cm} \lesssim_q
            \E \insquare{ 
            \E \insquare{
                \inparen{
                \ttrace \inparen{
                \hatCpr^{(j-1)}
                -
                \E \bigl[\hatCpr^{(j-1)} \big| \hatmpost^{(j-2)}, \hatCpost^{(j-2)} \bigr]  
               } }^q \,
                \Big| \, 
                \hatmpost^{(j-2)}, \hatCpost^{(j-2)}
                }
            }\\
            & \hspace{6.27cm}+
            \E \insquare{ 
                \inparen{
                \ttrace\inparen{
                \E \bigl[\hatCpr^{(j-1)} \big| \hatmpost^{(j-2)}, \hatCpost^{(j-2)} \bigr]  
                }  }^q
            }\\
           &  \hspace{-0.55cm}\lesssim
            \E \insquare{ 
            \inparen{
                \frac{ 
                \ttrace \bigl(
                \E \bigl[ \hatCpr^{(j-1)} \big| \hatmpost^{(j-2)}, \hatCpost^{(j-2)} \bigr] \bigr)}{\sqrt{N}}  }^{q}
            }
            +
            \E \insquare{ 
                \inparen{
                \ttrace \inparen{
                \E \insquare{\hatCpr^{(j-1)}| \hatmpost^{(j-2)}, \hatCpost^{(j-2)} } } }^q
            }\\
            & \hspace{-0.55cm} \le
            \frac{|A|^{2q}}{N^{q/2}} 
            \E \insquare{\inparen{ \ttrace(\hatCpost^{(j-2)})}^{q}}
            +
            |A|^{2q} \E \insquare{ \inparen{\ttrace(\hatCpost^{(j-2)})}^{ q}}\\
            & \hspace{-0.55cm} \lesssim
            \frac{|A|^{2q}}{N^{q/2}} 
            c r_2(\Cpost^{(0)})^q
            +
            |A|^{2q} c r_2(\Cpost^{(0)})^q
            \le c r_2(\Cpost^{(0)})^q,
            \end{align*}
            where the second to last inequality holds by the inductive hypothesis \eqref{eq:inductivehypothesis}.
        
        \paragraph{Controlling $\|\hatO^{(j-1)}\|_q$} By definition, we have
    \begin{align*}
        \whatO^{(j-1)} &=
        \msK(\hatCpr^{(j-1)}) (\widehat{\Gamma}^{(j-1)} - \Gamma) \msK^\top(\hatCpr^{(j-1)}) 
        + \bigl(I-\msK(\hatCpr^{(j-1)}) H\bigr) \whatC^{(j-1)}_{\, u \eta} \msK^\top(\hatCpr^{(j-1)}) \\
        &+ \msK(\hatCpr^{(j-1)}) (\whatC^{(j-1)}_{\, u \eta})^\top \bigl(I-H^\top\msK^\top(\hatCpr^{(j-1)}) \bigr)\\
        &=: \hatO^{(j-1)}_1+\hatO^{(j-1)}_2+\hatO^{(j-1)}_3.
    \end{align*}
    Therefore,  $\| \ttrace(\hatO^{(j-1)})\|_{q}
        \le 
        \| \ttrace(\hatO^{(j-1)}_1)\|_{q}
        +
        \| \ttrace(\hatO^{(j-1)}_2)\|_{q}
        +
        \| \ttrace(\hatO^{(j-1)}_3)\|_{q}$.

        \paragraph{Controlling $\| \ttrace(\hatO^{(j-1)}_1)\|_{q}$} By Lemma~\ref{lem:OffsetTrace},
        \begin{align*}
        \ttrace(\hatO^{(j-1)}_1)
        &\le|H|^2 |\hatGamma^{(j-1)} - \Gamma| |\Gamma^{-1}|^2
        |\hatCpr^{(j-1)}| \ttrace(\hatCpr^{(j-1)}),
    \end{align*}
    and so 
    \begin{align*}
        \| \ttrace(\hatO^{(j-1)}_1)\|_{q} 
        &\le 
        |H|^2 |\Gamma^{-1}|^2 \| |\hatGamma^{(j-1)} - \Gamma|  |\hatCpr^{(j-1)}| \ttrace(\hatCpr^{(j-1)}) \|_q\\
        &\le 
        |H|^2 |\Gamma^{-1}|^2 \| |\hatGamma^{(j-1)} - \Gamma| \|_{2q} \||\hatCpr^{(j-1)}|\|\|_{4q} \|\ttrace(\hatCpr^{(j-1)}) \|_{4q}.
    \end{align*}
    By Theorem~\ref{thm:CovarianceExpectationBoundLq}, $\| |\hatGamma^{(j-1)} - \Gamma| \|_{2q} \lesssim_q |\Gamma| \sqrt{\frac{r_2(\Gamma)}{N}},$ and by the inductive hypothesis \eqref{eq:inductivehypothesis} and the fact that $|\Cpr^{(j-1)}| \le |A|^2 |\Cpost^{(j-2)}| + |\xiCovariance|$, we have 
    \begin{align*}
        \||\hatCpr^{(j-1)}|\|_{4q} 
        &\le |\Cpr^{(j-1)}| + \||\hatCpr^{(j-1)} - \Cpr^{(j-1)}|\|_{4q}
        \le 
        c
        \inparen{1 \lor \Omega}.
    \end{align*}
    We have also previously shown that $\|\ttrace(\hatCpr^{(j-1)}) \|_{4q} \lesssim r_2(\Cpost^{(0)})$. Noting that $\inparen{1 \lor \Omega} r_2(\Cpost^{(0)})\lesssim r_2(\Cpost^{(0)}), $ we get that $\| \ttrace(\hatO^{(j-1)}_1)\|_{q} \le c r_2(\Cpost^{(0)}).$
        
    \paragraph{Controlling $\||\ttrace(\hatO^{(j-1)}_2)|\|_q$} By Lemma~\ref{lem:OffsetTrace},
     \begin{align*}
        \ttrace(\hatO^{(j-1)}_2)
        &\le 
        (1+ |\hatCpr^{(j-1)}||H|^2|\Gamma^{-1}| )
        |\Gamma^{-1}||H|
        |\whatC^{(j-1)}_{\, u \eta}|  
        \ttrace (
         \hatCpr^{(j-1)} 
         ).
    \end{align*}
    Therefore, 
    \begin{align*}
        \| |\ttrace(\hatO^{(j-1)}_2)| \|_q 
        &\le 
        (1+ \| |\hatCpr^{(j-1)}|\|_{2q} |H|^2|\Gamma^{-1}| )
        |\Gamma^{-1}||H|
        \||\whatC^{(j-1)}_{\, u \eta}|  \|_{4q}
        \| \ttrace (\hatCpr^{(j-1)} )\|_{4q}.
    \end{align*}
    By iterated expectation, Lemma~\ref{lem:CrossCovarianceBound}, and Lemma~\ref{lem:TraceConcentration2} we have, for $\mfq \in \{q,2q,4q\},$
    \begin{align} \label{eq:crosscovarianceInductive}
        \normn{|\whatC^{(j-1)}_{\, u \eta}| }_{\mfq}
        &= \E \insquare{\E \insquare{ |\whatC^{(j-1)}_{\, u \eta}|^\mfq | \hatmpost^{(j-2)}, \hatCpost^{(j-2)} }}^{1/\mfq} \nonumber\\
        &\lesssim_q 
        \norm{
        (|\hatCpost^{(j-2)}| \lor |\Gamma|) \inparen{
        \sqrt{\frac{r_2(\hatCpost^{(j-2)})}{N}}
        \lor 
        \sqrt{\frac{r_2(\Gamma)}{N}}} 
        }_\mfq \nonumber \\
        &\le
        (\normn{|\hatCpost^{(j-2)}|}_{2 \mfq} \lor |\Gamma|) 
        \inparen{
        \norm{
        \sqrt{\frac{r_2(\hatCpost^{(j-2)})}{N}}
        }_{2\mfq}
        \lor 
        \sqrt{\frac{r_2(\Gamma)}{N}}
        }.
    \end{align}
    By the triangle inequality and the inductive hypothesis \eqref{eq:inductivehypothesis}, it follows that
    \begin{align*}
        \normn{|\hatCpost^{(j-2)}|}_{2 \mfq}  \le \normn{|\hatCpost^{(j-2)} - \Cpost^{(j-2)}|}_{2 \mfq}  + |\Cpost^{(j-2)}| 
        &\le  c \inparen{
        1 \lor \Omega},
    \end{align*}
    and also that 
    \begin{align*}
        \norm{
        \sqrt{\frac{r_2(\hatCpost^{(j-2)})}{N}}
        }_{2\mfq}^{2\mfq}
        =
        \E \insquare{
        \inparen{\frac{r_2(\hatCpost^{(j-2)})}{N}}^{\mfq} }
        \lesssim
        \E \insquare{
        \inparen{\frac{\ttrace(\hatCpost^{(j-2)})}{N}}^{\mfq}}
        \le
        cN^{-\mfq} r_2(\Cpost^{(0)})^{\mfq}.
    \end{align*}
Using identical arguments to those used to control $\| |\ttrace(\hatO^{(j-1)}_1)| \|_q$, we  have that 
    \begin{align*}
        \| \ttrace(\hatO^{(j-1)}_2)\|_{q} 
        &\le
        c r_2(\Cpost^{(0)}).
    \end{align*}

    \paragraph{Controlling $\| \ttrace(\hatO^{(j-1)}_3) \|_{q}$} Note that $\| \ttrace(\hatO^{(j-1)}_2)\|_{q} = \| \ttrace(\hatO^{(j-1)}_2)\|_{q}.$

    \subsubsection{Forecast Covariance Bound}\label{sssec:forecastinduction}
    Let $\mfq \in \{q,2q,4q\}$. By the triangle inequality, the inductive hypothesis \eqref{eq:inductivehypothesis}, and Theorem~\ref{thm:CovarianceExpectationBoundLq}, we have 

\begin{align*}
        \normn{|\hatCpr^{(j)}- &\Cpr^{(j)}|}_{\mfq}
        \le |A|^2 \| | S^{(j-1)} - \Cpost^{(j-1)}| \|_\mfq + 
        \| |\hatxiCovariance^{(j)} - \xiCovariance| \|_\mfq+ 2|A| \| |\whatC_{\, u \xi}^{(j)}| \|_{\mfq}\\
        &\le |A|^2 
        \inparen{
        \| | S^{(j-1)} - \hatCpost^{(j-1)} | \|_\mfq
        +
        \|| \hatCpost^{(j-1)} - \Cpost^{(j-1)}|\|_\mfq
        }
        + \| |\hatxiCovariance^{(j)}  - \xiCovariance| \|_\mfq 
        + 2|A| \| |\whatC_{\, u \xi}^{(j)}| \|_{\mfq}\\
        &\le c|A|^2 
        \inparen{
        \| | S^{(j-1)} - \hatCpost^{(j-1)} | \|_\mfq
        +
        \Omega}
        +|\xiCovariance| \inparen{1 \lor \sqrt{\frac{r_2(\xiCovariance)}{N}}}
        + 2|A| \| |\whatC_{\, u \xi}^{(j)}| \|_{\mfq}.
    \end{align*}
The forecast covariance bound \eqref{eq:forecastcovbound} is then a direct consequence of the bounds that we now establish on $\normn{|\whatC^{(j)}_{\, u \xi}| }_{\mfq}$ and $\| | S^{(j-1)} - \hatCpost^{(j-1)} | \|_\mfq.$

    \paragraph{Controlling $\normn{|\whatC^{(j)}_{\, u \xi}| }_{\mfq}$} By an identical analysis to the one used in bounding $\normn{|\whatC^{(j)}_{\, u \eta}| }_{\mfq}$ in \eqref{eq:crosscovarianceInductive}, we have that 
    \begin{align}\label{eq:CrossCovarianceBoundInductiveStep}
        \normn{|\whatC^{(j)}_{\, u \xi}| }_{\mfq}
        &\lesssim 
        c\Omega.
    \end{align}

    \paragraph{Controlling $\| | S^{(j-1)} - \hatCpost^{(j-1)} | \|_\mfq$} By iterated expectations and Theorem~\ref{thm:CovarianceExpectationBoundLq}, we have
    
    \begin{align*}
    \normn{|S^{(j-1)}-\hatCpost^{(j-1)}|}_{\mfq}^\mfq
    &= \E \insquare{|S^{(j-1)}-\hatCpost^{(j-1)}|^\mfq}
    = \E \insquare{
    \E\insquare{
    |S^{(j-1)} - \hatCpost^{(j-1)}|^\mfq \bigg | \hatmpost^{(j-1)}, \hatCpost^{(j-1)}} } \\
    & \lesssim_q
    \E \insquare{
    |\hatCpost^{(j-1)}|^\mfq \inparen{\frac{r_2(\hatCpost^{(j-1)})}{N}
    }^{\mfq/2}}
    =
    \E \insquare{
    |\hatCpost^{(j-1)}|^{\mfq/2} \inparen{\frac{\ttrace(\hatCpost^{(j-1)})}{N}
    }^{\mfq/2}}\\
    &\le
    \sqrt{ 
    \E |\hatCpost^{(j-1)}|^{\mfq}
    } 
    \sqrt{ \E \insquare{\frac{\ttrace(\hatCpost^{(j-1)})}{N}}^{\mfq}} \,.
\end{align*}

By the covariance trace bound proved in Subsection \ref{sssec:covtracebound} and the inductive hypothesis \eqref{eq:inductivehypothesis}, we then have that $\normn{|S^{(j-1)}-\hatCpost^{(j-1)}|}_{\mfq} \lesssim c \Omega.$

    \subsubsection{Mean Bound} By Lemma~\ref{lem:MeanOpCtyBddLp}, we have
\begin{align*}
    \||\hatmpost^{(j)} - \mpost^{(j)}| \|_q
    &= \|| \msM(\hatmpr^{(j)}, \hatCpr^{(j)}; y^{(j)}) - \msM(\mpr^{(j)}, \Cpr^{(j)}; y^{(j)}) |\|_q + \| |\msK(\hatCpr^{(j)}) \bareta^{(j)} |  \|_q\\
    &\le 
        \norm{|\hatmpr^{(j)}-\mpr^{(j)}|}_{q}
        + 
        |H|^2 |\Gamma^{-1}| \normn{|\hatCpr^{(j)}|}_{2q} \normn{|\hatmpr^{(j)}-\mpr^{(j)}|}_{2q}\\  
        & + 
        \normn{|\hatCpr^{(j)}-\Cpr^{(j)}|}_{q} |H| |\Gamma^{-1}|
        \bigl(1 + |H|^2 |\Gamma^{-1}| |\Cpr^{(j)}| \bigr) 
        |y^{(j)}-H\mpr^{(j)}|\\
        &+ \| |\msK(\hatCpr^{(j)}) \bareta^{(j)} |  \|_q.
\end{align*}
The induction step for the mean bound \eqref{eq:meanboundmainth} is then a direct consequence of the bounds that we now establish on $\||\hatmpr^{(j)} - \mpr^{(j)}| \|_\mfq$ for $\mfq \in \{ q, 2q\}$ and on $\| | \msK(\hatCpr^{(j)}) \bareta^{(j)}|\|_q$.
    \paragraph{Controlling $\||\hatmpr^{(j)} - \mpr^{(j)}| \|_\mfq$ for $\mfq \in \{ q, 2q\}$} It follows by the triangle inequality and the inductive hypothesis \eqref{eq:inductivehypothesis} that
    \begin{align*}
        \||\hatmpr^{(j)} - \mpr^{(j)}| \|_\mfq
        &\le |A| \| | \bar{\upr}^{(j-1)} - \mpost^{(j-1)}| \|_\mfq + \| | \barxi^{(j)}| \|_\mfq\\
        &\le |A| 
        \inparen{
        \| | \bar{\upr}^{(j-1)} - \hatmpost^{(j-1)} | \|_\mfq
        +
        \|| \hatmpost^{(j-1)} - \mpost^{(j-1)}|\|_\mfq
        }
        + \| | \barxi^{(j)}| \|_\mfq
        \\
        &\le c' |A| 
        \inparen{
        \| | \bar{\upr}^{(j-1)} - \hatmpost^{(j-1)} | \|_\mfq
        +
        \Omega 
        }+ 
        c(q) |\xiCovariance| \sqrt{\frac{r_2(\xiCovariance)}{N}}.
    \end{align*}

By iterated expectations, Lemma~\ref{lem:GaussianTraceLqBound}, and the covariance trace bound proved in Subsection \ref{sssec:covtracebound}, it follows that
\begin{align*}
    \| | \bar{\upr}^{(j-1)} - \hatmpost^{(j-1)} | \|_\mfq^\mfq
    &= \E \bigl[ | \bar{\upr}^{(j-1)} - \hatmpost^{(j-1)} |^\mfq \bigr]
    = \E \insquare{\E  \insquare{ | \bar{\upr}^{(j-1)} - \hatmpost^{(j-1)} |^\mfq  \, \Big| \, \hatmpost^{(j-1)}, \hatCpost^{(j-1)} } }\nonumber \\
    &\lesssim_q 
    \E \insquare{\inparen{\frac{\ttrace(\hatCpost^{(j-1)})}{N}}^{\mfq/2}}
    \le c \inparen{\frac{r_2(\Cpost^{(0)})}{N}}^{\mfq/2},
\end{align*}
and so $\normn{|\hatmpr^{(j)}-\mpr^{(j)}|}_{\mfq} \lesssim 
 c'\Omega.$

\paragraph{Controlling $\| | \msK(\hatCpr^{(j)}) \bareta^{(j)}|\|_q$} Note first that $\| | \msK(\hatCpr^{(j)}) \bareta^{(j)}|\|_q \le
        \| | \msK(\hatCpr^{(j)})|\|_{2q}  \| | \bareta^{(j)} |\|_{2q}$.
    We then have by Lemma~\ref{lem:GaussianTraceLqBound}, $\| | \bareta^{(j)} |\|_{2q} 
        \lesssim_q 
        \sqrt{\frac{\ttrace(\Gamma)}{N}}
        =\sqrt{|\Gamma| \frac{r_2(\Gamma)}{N}},$ and by Lemma~\ref{lem:KalmanOpCtyBdd} and the forecast covariance bound established in Subsection \ref{sssec:forecastinduction}, we have 
    \begin{align*}
        \| | \msK(\hatCpr^{(j)})|  \|_{2q} 
        &\le |H| |\Gamma^{-1}| \| |\hatCpr^{(j)} |\|_{2q}
        \le |H| |\Gamma^{-1}| \inparen {
        \| |\hatCpr^{(j)} -\Cpr^{(j)}|\|_{2q} + |\Cpr^{(j)}|
        }\\
        &\le
         |H| |\Gamma^{-1}| 
        c
        \inparen{
        1\lor  \Omega}.
    \end{align*}
    Therefore, $ \| | \msK(\hatCpr^{(j)}) \bareta^{(j)}|\|_q \le c \Omega$.

    \subsubsection{Covariance Bound} By Lemma~\ref{lem:CovarOpCtyBddLp}, we have 
    \begin{align*}
        \normn{ | \hatCpost^{(j)} - \Cpost^{(j)}| }_{q}
        &\le 
        \normn{ | \msC(\hatCpr^{(j)}) - \msC(\Cpr^{(j)})| }_{q}
        + \normn{|\hatO^{(j)}|}_q\\
        &\le
        \normn{| \hatCpr^{(j)} - \Cpr^{(j)}|}_{q}
        (1 + |A|^2 |\Gamma^{-1}| |\Cpr^{(j)}|)\\
        &+ (
            |A|^2 |\Gamma^{-1}| + 
            |A|^4 |\Gamma^{-1}|^2 |\Cpr^{(j)}|
        ) 
        \normn{|\hatCpr^{(j)}|}_{2q} \normn{|\hatCpr^{(j)}-\Cpr^{(j)}|}_{2q}
        + \normn{|\hatO^{(j)}|}_q. 
    \end{align*}
    The induction step for the forecast covariance has been proved in Subsection \ref{sssec:forecastinduction}, and so in order to show the induction step for the covariance bound we only need to control the offset term. First, using 
    the triangle inequality,  we write
    \begin{align*}
        \normn{|\whatO^{(j)}|}_q &\le  
        \normn{ |\msK(\hatCpr^{(j)}) (\widehat{\Gamma}^{(j)} - \Gamma) \msK^\top(\hatCpr^{(j)})| }_q 
        +\normn{| \bigl(I-\msK(\hatCpr^{(j)}) H\bigr) \whatC^{(j)}_{\, u \eta} \msK^\top(\hatCpr^{(j)})| }_q \\
        &+\normn{ | \msK(\hatCpr^{(j)}) (\whatC^{(j)}_{\, u \eta})^\top \bigl(I-H^\top\msK^\top(\hatCpr^{(j)})\bigr)| }_q
        = 
        \| \hatO^{(j)}_1\|_q 
        +
        \| \hatO^{(j)}_2\|_q
        +
        \| \hatO^{(j)}_3\|_q.
    \end{align*}
We next bound each term in turn.
        \paragraph{Controlling $\| \hatO^{(j)}_1\|_q$}
        By Lemma~\ref{lem:KalmanOpCtyBdd}, the forecast covariance bound established in Subsection \ref{sssec:forecastinduction}, and Theorem~\ref{thm:CovarianceExpectationBoundLq}, it holds that 
         \begin{align*}
             \| \hatO^{(j)}_1\|_q  
             &=\normn{ | \msK(\hatCpr^{(j)}) (\widehat{\Gamma}^{(j)} - \Gamma) \msK^\top(\hatCpr^{(j)})| }_q\\
             &\le
             \normn{ |\msK(\hatCpr^{(j)})| }^2_{4q} \normn{| \widehat{\Gamma}^{(j)} - \Gamma| }_{2q}
             \le
             |H|^2 |\Gamma^{-1}|^2 \normn{ |\hatCpr^{(j)}| }^2_{4q} \normn{| \widehat{\Gamma}^{(j)} - \Gamma| }_{2q}\\
             &\le
             c 
            \inparen{1 \lor\Omega}^2
            \sqrt{\frac{r_2(\Gamma)}{N}}
            \le
             c \sqrt{\frac{r_2(\Gamma)}{N}},
         \end{align*}
         where the last inequality uses that by assumption $N \ge r_2(\Cpost^{(0)}) \lor r_2(\xiCovariance) \lor r_2(\Gamma)$.
                
        \paragraph{Controlling $\| \hatO^{(j)}_2\|_q$}  By Lemma~\ref{lem:KalmanOpCtyBdd}, inequality \eqref{eq:CrossCovarianceBoundInductiveStep}, and the forecast covariance bound established in Subsection \ref{sssec:forecastinduction}, we get 
         \begin{align*}
             \| \hatO^{(j)}_2\|_q   
             &= \normn{ 
             | \bigl(I-\msK(\hatCpr^{(j)}) H\bigr) \whatC^{(j)}_{\, u \eta} \msK^\top(\hatCpr^{(j)})|
             }_q
             \le 
            \normn{
            |\msK(\hatCpr^{(j)})| 
            |I-\msK(\hatCpr^{(j)}) H| 
            |\whatC^{(j)}_{\, u \eta}| }_q \\
             &\le 
             \normn{
             |\msK(\hatCpr^{(j)})| 
             (1+|\msK(\hatCpr^{(j)})| |H|)  
             |\whatC^{(j)}_{\, u \eta}| }_q\\
             &\le 
             \normn{|\whatC^{(j)}_{\, u \eta}| }_{2q}
             \inparen{
             |H| |\Gamma^{-1}|  \| |\hatCpr^{(j)}|\|_{2q}
             + |H|^3 |\Gamma^{-1}|^2  \| |\hatCpr^{(j)}|^2\|_{2q}
             } \le c\Omega.
         \end{align*}

        \paragraph{Controlling $\| \hatO^{(j)}_3\|_q$} Note that $\| \hatO^{(j)}_3\|_q=\| \hatO^{(j)}_2\|_q$.

\section{Conclusions}\label{sec:Conclusions}
This paper has investigated $\mathsf{REnKF}$, a modification of $\mathsf{EnKF}$ with improved theoretical guarantees. Theorem \ref{thm:MultiStepPOEnKFBounds} gives non-asymptotic error bounds for a stochastic $\mathsf{EnKF}$ over multiple assimilation cycles. Numerical experiments demonstrate that the benefits of introducing resampling for theory purposes do not come at the price of a deterioration in state estimation or uncertainty quantification tasks.

Resampling techniques for ensemble Kalman algorithms deserve further research. From a theory viewpoint, resampling offers a promising path to develop long-time filter accuracy theory, blending our inductive analysis with existing results that ensure long-time stability of the filtering distributions \cite{sanz2015long}. From a methodological viewpoint, other resampling schemes can be considered \cite{naesseth2018variational}. Finally, while our numerical investigation has focused on settings where the standard $\mathsf{EnKF}$ algorithm is effective, an important open problem is to identify dynamical systems and/or observation models for which resampling may offer an empirical advantage.

\section*{Acknowledgments}
The authors are grateful for the support of NSF DMS-2027056, DOE DE-SC0022232, and the BBVA Foundation.

\vspace{10mm}
\bibliographystyle{siam} 
\bibliography{references}

\begin{thebibliography}{10}

\bibitem{aanonsen2009ensemble}
{\sc S.~I. Aanonsen, G.~N{\ae}vdal, D.~S. Oliver, A.~C. Reynolds, and B.~Vall{\`e}s}, {\em {The ensemble Kalman filter in reservoir engineering--a review}}, Spe Journal, 14 (2009), pp.~393--412.

\bibitem{ghattas2022non}
{\sc O.~Al-Ghattas and D.~Sanz-Alonso}, {\em {Non-asymptotic analysis of ensemble Kalman updates: effective dimension and localization}}, arXiv preprint arXiv:2208.03246,  (2022).

\bibitem{anderson2001ensemble}
{\sc J.~L. Anderson}, {\em {An ensemble adjustment Kalman filter for data assimilation}}, Monthly Weather Review, 129 (2001), pp.~2884--2903.

\bibitem{anderson1999monte}
{\sc J.~L. Anderson and S.~L. Anderson}, {\em {A Monte Carlo implementation of the nonlinear filtering problem to produce ensemble assimilations and forecasts}}, Monthly Weather Review, 127 (1999), pp.~2741--2758.

\bibitem{asch2016data}
{\sc M.~Asch, M.~Bocquet, and M.~Nodet}, {\em Data assimilation: Methods, Algorithms, and Applications}, SIAM, 2016.

\bibitem{bishop2001adaptive}
{\sc C.~H. Bishop, B.~J. Etherton, and S.~J. Majumdar}, {\em {Adaptive sampling with the ensemble transform Kalman filter. Part I: Theoretical aspects}}, Monthly Weather Review, 129 (2001), pp.~420--436.

\bibitem{chada2021iterative}
{\sc N.~K. Chada, Y.~Chen, and D.~Sanz-Alonso}, {\em {Iterative ensemble Kalman methods: A unified perspective with some new variants}}, Foundations of Data Science, 3 (2021), pp.~331--369.

\bibitem{chen2022autodifferentiable}
{\sc Y.~Chen, D.~Sanz-Alonso, and R.~Willett}, {\em {Autodifferentiable ensemble Kalman filters}}, SIAM Journal on Mathematics of Data Science, 4 (2022), pp.~801--833.

\bibitem{chen2023reduced}
\leavevmode\vrule height 2pt depth -1.6pt width 23pt, {\em {Reduced-Order autodifferentiable ensemble Kalman filters}}, arXiv preprint arXiv:2301.11961,  (2023).

\bibitem{chopin2020introduction}
{\sc N.~Chopin and O.~Papaspiliopoulos}, {\em {An Introduction to Sequential Monte Carlo}}, vol.~4, Springer, 2020.

\bibitem{crisan2011oxford}
{\sc D.~Crisan and B.~Rozovskii}, {\em The Oxford Handbook of Nonlinear Filtering}, Oxford University Press, 2011.

\bibitem{del2004feynman}
{\sc P.~Del~Moral}, {\em {Feynman-Kac Formulae}}, Springer, 2004.

\bibitem{doucet2009tutorial}
{\sc A.~Doucet, A.~M. Johansen, et~al.}, {\em {A tutorial on particle filtering and smoothing: Fifteen years later}}, Handbook of nonlinear filtering, 12 (2009), p.~3.

\bibitem{ernst2015analysis}
{\sc O.~G. Ernst, B.~Sprungk, and H.-J. Starkloff}, {\em {Analysis of the ensemble and polynomial chaos Kalman filters in Bayesian inverse problems}}, SIAM/ASA Journal on Uncertainty Quantification, 3 (2015), pp.~823--851.

\bibitem{evensen1994sequential}
{\sc G.~Evensen}, {\em {Sequential data assimilation with a nonlinear quasi-geostrophic model using Monte Carlo methods to forecast error statistics}}, {Journal of Geophysical Research: Oceans}, 99 (1995), pp.~10143--10162.

\bibitem{evensen2003ensemble}
{\sc G.~Evensen}, {\em {The ensemble Kalman filter: Theoretical formulation and practical implementation}}, Ocean Dynamics, 53 (2003), pp.~343--367.

\bibitem{evensen2009data}
{\sc G.~Evensen}, {\em {Data Assimilation: the Ensemble Kalman Filter}}, Springer Science and Business Media, 2009.

\bibitem{evensen1996assimilation}
{\sc G.~Evensen and P.~V. Leeuwen}, {\em {Assimilation of Geosat altimeter data for the Agulhas current using the ensemble Kalman filter with a quasigeostrophic model}}, {Monthly Weather Review}, 124 (1996), pp.~85--96.

\bibitem{evensen2022data}
{\sc G.~Evensen, F.~C. Vossepoel, and P.~J. van Leeuwen}, {\em Data Assimilation Fundamentals: A Unified Formulation of the State and Parameter Estimation Problem}, Springer Nature, 2022.

\bibitem{furrer2007estimation}
{\sc R.~Furrer and T.~Bengtsson}, {\em {Estimation of high-dimensional prior and posterior covariance matrices in Kalman filter variants}}, Journal of Multivariate Analysis, 98 (2007), pp.~227--255.

\bibitem{gu2007iterative}
{\sc Y.~Gu and D.~S. Oliver}, {\em {An iterative ensemble Kalman filter for multiphase fluid flow data assimilation}}, Spe Journal, 12 (2007), pp.~438--446.

\bibitem{harlim2010catastrophic}
{\sc J.~Harlim and A.~J. Majda}, {\em Catastrophic filter divergence in filtering nonlinear dissipative systems}, Communications in Mathematical Sciences, 8 (2010), pp.~27--43.

\bibitem{houtekamer2016review}
{\sc P.~L. Houtekamer and F.~Zhang}, {\em {Review of the ensemble Kalman filter for atmospheric data assimilation}}, Monthly Weather Review, 144 (2016), pp.~4489--4532.

\bibitem{iglesias2013ensemble}
{\sc M.~A. Iglesias, K.~J.~H. Law, and A.~M. Stuart}, {\em {Ensemble Kalman methods for inverse problems}}, Inverse Problems, 29 (2013), p.~045001.

\bibitem{kalman1960new}
{\sc R.~E. Kalman}, {\em A new approach to linear filtering and prediction problems}, {Journal of Basic Engineering}, 82 (1960), pp.~35--45.

\bibitem{kelly2015concrete}
{\sc D.~Kelly, A.~J. Majda, and X.~T. Tong}, {\em {Concrete ensemble Kalman filters with rigorous catastrophic filter divergence}}, Proceedings of the National Academy of Sciences, 112 (2015), pp.~10589--10594.

\bibitem{koltchinskii2017concentration}
{\sc V.~Koltchinskii and K.~Lounici}, {\em Concentration inequalities and moment bounds for sample covariance operators}, Bernoulli, 23 (2017), pp.~110--133.

\bibitem{kwiatkowski2015convergence}
{\sc E.~Kwiatkowski and J.~Mandel}, {\em {Convergence of the square root ensemble Kalman filter in the large ensemble limit}}, SIAM/ASA Journal on Uncertainty Quantification, 3 (2015), pp.~1--17.

\bibitem{law2015data}
{\sc K.~Law, A.~Stuart, and K.~Zygalakis}, {\em Data assimilation}, Cham, Switzerland: Springer, 214 (2015), p.~52.

\bibitem{law2016filter}
{\sc K.~J. Law, D.~Sanz-Alonso, A.~Shukla, and A.~M. Stuart}, {\em {Filter accuracy for the Lorenz 96 model: Fixed versus adaptive observation operators}}, Physica D: Nonlinear Phenomena, 325 (2016), pp.~1--13.

\bibitem{law2012evaluating}
{\sc K.~J. Law and A.~M. Stuart}, {\em Evaluating data assimilation algorithms}, Monthly Weather Review, 140 (2012), pp.~3757--3782.

\bibitem{lawson2004implications}
{\sc W.~G. Lawson and J.~A. Hansen}, {\em Implications of stochastic and deterministic filters as ensemble-based data assimilation methods in varying regimes of error growth}, Monthly Weather Review, 132 (2004), pp.~1966--1981.

\bibitem{li2007iterative}
{\sc G.~Li and A.~C. Reynolds}, {\em {An iterative ensemble Kalman filter for data assimilation}}, in SPE annual technical conference and exhibition, Society of Petroleum Engineers, 2007.

\bibitem{lorenz1996predictability}
{\sc E.~N. Lorenz}, {\em Predictability: A problem partly solved}, in Proc. Seminar on predictability, vol.~1, Reading, 1996.

\bibitem{majda2006nonlinear}
{\sc A.~Majda and X.~Wang}, {\em Nonlinear Dynamics and Statistical Theories for Basic Geophysical Flows}, Cambridge University Press, 2006.

\bibitem{majda2012filtering}
{\sc A.~J. Majda and J.~Harlim}, {\em Filtering Complex Turbulent Systems}, Cambridge University Press, 2012.

\bibitem{majda2018performance}
{\sc A.~J. Majda and X.~T. Tong}, {\em {Performance of ensemble Kalman filters in large dimensions}}, Communications on Pure and Applied Mathematics, 71 (2018), pp.~892--937.

\bibitem{mandel2011convergence}
{\sc J.~Mandel, L.~Cobb, and J.~D. Beezley}, {\em {On the convergence of the ensemble Kalman filter}}, Applications of Mathematics, 56 (2011), pp.~533--541.

\bibitem{myrseth2013resampling}
{\sc I.~Myrseth, J.~S{\ae}trom, and H.~Omre}, {\em {Resampling the ensemble Kalman filter}}, Computers \& Geosciences, 55 (2013), pp.~44--53.

\bibitem{naesseth2018variational}
{\sc C.~Naesseth, S.~Linderman, R.~Ranganath, and D.~Blei}, {\em {Variational sequential Monte Carlo}}, in International conference on artificial intelligence and statistics, PMLR, 2018, pp.~968--977.

\bibitem{papaspiliopoulos2014optimal}
{\sc O.~Papaspiliopoulos and M.~Ruggiero}, {\em Optimal filtering and the dual process}, Bernoulli, 20 (2014), pp.~1999--2019.

\bibitem{reich2015probabilistic}
{\sc S.~Reich and C.~Cotter}, {\em Probabilistic Forecasting and Bayesian Data Assimilation}, Cambridge University Press, 2015.

\bibitem{sanzstuarttaeb2023}
{\sc D.~Sanz-Alonso, A.~Stuart, and A.~Taeb}, {\em Inverse Problems and Data Assimilation}, London Mathematical Society Student Texts, Cambridge University Press, 2023.

\bibitem{sanz2015long}
{\sc D.~Sanz-Alonso and A.~M. Stuart}, {\em Long-time asymptotics of the filtering distribution for partially observed chaotic dynamical systems}, SIAM/ASA Journal on Uncertainty Quantification, 3 (2015), pp.~1200--1220.

\bibitem{sarkka2023bayesian}
{\sc S.~S{\"a}rkk{\"a} and L.~Svensson}, {\em Bayesian Filtering and Smoothing}, vol.~17, Cambridge University Press, 2023.

\bibitem{tippett2003ensemble}
{\sc M.~K. Tippett, J.~L. Anderson, C.~H. Bishop, T.~M. Hamill, and J.~S. Whitaker}, {\em Ensemble square root filters}, Monthly Weather Review, 131 (2003), pp.~1485--1490.

\bibitem{tropp2015introduction}
{\sc J.~A. Tropp}, {\em An Introduction to Matrix Concentration Inequalities}, vol.~8, Now Publishers, Inc., 2015.

\bibitem{van2020consistent}
{\sc P.~J. Van~Leeuwen}, {\em {A consistent interpretation of the stochastic version of the Ensemble Kalman Filter}}, Quarterly Journal of the Royal Meteorological Society, 146 (2020), pp.~2815--2825.

\bibitem{vershynin2018high}
{\sc R.~Vershynin}, {\em High-Dimensional Probability: An Introduction with Applications in Data Science}, vol.~47, Cambridge University Press, 2018.

\bibitem{wu2019improving}
{\sc J.~Wu, J.-X. Wang, and S.~C. Shadden}, {\em {Improving the convergence of the iterative ensemble Kalman filter by resampling}}, arXiv preprint arXiv:1910.04247,  (2019).

\bibitem{wu2022resampled}
{\sc J.~Wu, L.~Wen, and J.~Li}, {\em {Resampled ensemble Kalman inversion for Bayesian parameter estimation with sequential data.}}, Discrete \& Continuous Dynamical Systems-Series S, 15 (2022).

\bibitem{zhang2010improving}
{\sc Y.~Zhang and D.~S. Oliver}, {\em {Improving the ensemble estimate of the Kalman gain by bootstrap sampling}}, Mathematical Geosciences, 42 (2010), pp.~327--345.

\end{thebibliography}

\appendix
\section{Metrics for Numerical Results} \label{sec:metrics} 
In this appendix, we give a more extensive description of the Monte Carlo procedure utilized to calculate the metrics referred to in Section~\ref{sec:numerics}. We summarize the approach in Algorithm~\ref{algMetrics}. We require the following additional notation: We write $\tdiag(A) = (A_{11}, A_{22},\dots, A_{dd})^\top$. For a function $g:\R \to \R$, $g(u) = \bigl(g(u(1)),\dots, g(u(d))\bigr)^\top$ is the element-wise application of $g$ to $u$.

\begin{algorithm} 
\caption{\label{algmetric} Metrics Calculation for Numerical Results}
\begin{algorithmic}[1] \label{algMetrics}

\STATE {\bf Fixed Quantities}:  
Ground-truth state  $\{u^{(j)}\}_{ j=0}^{J}$, observations $\{y^{(j)}\}_{j=1}^J$, and Kalman filter means $\{\mpost^{(j)}\}_{j=1}^J$. 
\vspace{0.25cm}

\STATE {\bf Monte Carlo Trials}:  For $m = 1, 2, \ldots, M$ run algorithm $\mathsf{A} \in \{\mathsf{EnKF}(\ref{algEnKF}), \mathsf{REnKF}(\ref{algEnKFresample}) \}$ 
and obtain 
$\{\hatmpost^{(j),\mathsf{A}}_{m}, \hatSigma_m^{(j), \mathsf{A}}\}_{j,m}^{J, M}$.\\

{\it Mean Error: } 
\begin{align}
\begin{split}
\mathsf{E}_{m, \text {\tiny Linear}}^{\mathsf{A}} &= \frac{1}{J} \sum_{j=1}^J |\hatmpost^{(j), \mathsf{A}}_m - \mu^{(j)}|_2, \qquad 
\mathsf{E}_{m, \text {\tiny L96}}^{\mathsf{A}} = \frac{1}{J} \sum_{j=1}^J |\hatmpost^{(j),\mathsf{A}}_m - u^{(j)}|_2.
\end{split}
\end{align}

{\it Confidence Interval: }  Let $\hatsigma_{m}^{(j), \mathsf{A}} =  \sqrt{\tdiag(\hatSigma_m^{(j), \mathsf{A}})}$, then compute 
\begin{equation}
    \begin{aligned}
        \mathsf{I}_{m}^{(j), \mathsf{A}} &=  
        \hatmpost_m^{(j), \mathsf{A}} \pm 1.96 \times \hatsigma_{m}^{(j), \mathsf{A}},
        \qquad 
        && \text{(Interval)}\\
        \mathsf{W}_{m}^{\mathsf{A}} 
        &=\frac{2 \times 1.96}{dJ} \sum_{j=1}^{J} |\hatsigma_{m}^{(j), \mathsf{A}}|_1,
        \qquad 
        && \text{(Average Width)}\\
        {\mathsf{V}}_m^{\mathsf{A}} 
        &= \frac{1}{dJ}\sum_{j=1}^J\sum_{i=1}^d  \indicator \{  
        u^{(j)}(i) \in \mathsf{I}_{m}^{(j), \mathsf{A}}(i) \}.
        \qquad 
        && \text{(Average Coverage)}\\
    \end{aligned}
\end{equation}

\STATE{\bf Output}:
\begin{align}
\begin{split}
\mathsf{E}_{\text {\tiny Linear}}^{\mathsf{A}} 
&= \frac{1}{M} \sum _{m=1}^M \mathsf{E}_{m, \text {\tiny Linear}}^{\mathsf{A}}, 
\qquad
\mathsf{E}_{\text {\tiny L96}}^{\mathsf{A}} 
= \frac{1}{M} \sum_{m=1}^M \mathsf{E}_{m, \text {\tiny L96}}^{\mathsf{A}},\\
\mathsf{W}^{\mathsf{A}} 
&= \frac{1}{M}\sum_{m=1}^M  \mathsf{W}_{m}^{\mathsf{A}}, 
\qquad 
\mathsf{V}^{\mathsf{A}} 
= \frac{1}{M} \sum_{m=1}^M \mathsf{V}_m^{\mathsf{A}}.
\end{split}
\end{align}

\end{algorithmic}
\end{algorithm}

\section{Technical Results}
\subsection{Additional Notation}
Given a non-decreasing, non-zero convex function $\psi:[0,\infty] \to [0, \infty]$ with $\psi(0)=0$, the Orlicz norm of a real random variable $X$ is $\normn{X}_{\psi} = \inf\{t > 0: \E [ \psi (t^{-1}|X|)] \le 1\}$. In particular, for the choice $\psi_p(x) = e^{x^p}-1$ for $p \ge 1$, real random variables that satisfy $\normn{X}_{\psi_2} < \infty$ are referred to as sub-Gaussian, and those that satisfy $\normn{X}_{\psi_1} < \infty$ are sub-Exponential. The random vector $Y$ is sub-Gaussian (sub-Exponential) if $\normn{v^\top Y}_{\psi_2} < \infty$ ($\normn{v^\top Y}_{\psi_1} < \infty$) for any vector $v$ satisfying $|v|_2=1$.
\vspace{10mm}
\subsection{Background Results}
\begin{lemma}[{Properties of the Kalman Gain Operator \cite[Lemma 4.1 \& Corollary 4.2]{kwiatkowski2015convergence}}] \label{lem:KalmanOpCtyBdd} 
Let $\msK$ be the Kalman gain operator defined in \eqref{eq:KalmanGainOperator}. Let $P, Q \in \mcS_+^d$, $\Gamma \in \mcS_{++}^k,$ and $\H \in \R^{k \times d}$. The following hold:
\begin{align*}
\begin{split}
        |\msK(Q) - \msK(P)| 
        &\le 
        |Q - P| 
        |\H| |\Gamma^{-1}|
        \Bigl(1 + 
        \min \inparen{|P|, |Q|} 
        |\H|^2 |\Gamma^{-1}| \Bigr),\\
        |\msK(Q)| &\le |Q| |\H| |\Gamma^{-1}|,\\
        |I-\msK(Q) \H| 
        &\le 1+|Q| |\H|^2 |\Gamma^{-1}|.
\end{split}
     \end{align*}
\end{lemma}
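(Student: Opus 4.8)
The plan is to prove the three inequalities in order of increasing difficulty, relying only on elementary properties of the operator norm: submultiplicativity, the identity $|\H^\top| = |\H|$, and the fact that if $0 \prec B \preceq A$ are symmetric positive definite then $|A^{-1}| \le |B^{-1}|$ (since $|M^{-1}| = \lambda_{\min}(M)^{-1}$ for symmetric positive definite $M$, and $\lambda_{\min}$ is monotone for the Loewner order). Write $S_P := \H P \H^\top + \Gamma$ and $S_Q := \H Q \H^\top + \Gamma$, so that $\msK(P) = P\H^\top S_P^{-1}$ and $\msK(Q) = Q\H^\top S_Q^{-1}$. Since $P \succeq 0$ gives $\H P \H^\top \succeq 0$ and hence $S_P \succeq \Gamma$, we obtain $|S_P^{-1}| \le |\Gamma^{-1}|$, and likewise $|S_Q^{-1}| \le |\Gamma^{-1}|$.

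For the second inequality I would simply bound $|\msK(Q)| = |Q\H^\top S_Q^{-1}| \le |Q|\,|\H|\,|S_Q^{-1}| \le |Q|\,|\H|\,|\Gamma^{-1}|$. The third inequality is then immediate from the triangle inequality and the second: $|I - \msK(Q)\H| \le 1 + |\msK(Q)|\,|\H| \le 1 + |Q|\,|\H|^2\,|\Gamma^{-1}|$.

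The substance is the first inequality, and the crux is the algebraic identity
\begin{equation*}
  \msK(Q) - \msK(P) = \bigl(I - \msK(Q)\H\bigr)(Q - P)\H^\top S_P^{-1}.
\end{equation*}
I would derive it by adding and subtracting $Q\H^\top S_P^{-1}$ to write $\msK(Q) - \msK(P) = Q\H^\top\bigl(S_Q^{-1} - S_P^{-1}\bigr) + (Q-P)\H^\top S_P^{-1}$, then applying the resolvent identity $S_Q^{-1} - S_P^{-1} = S_Q^{-1}(S_P - S_Q)S_P^{-1} = -S_Q^{-1}\H(Q-P)\H^\top S_P^{-1}$ and recognizing $Q\H^\top S_Q^{-1} = \msK(Q)$. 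Taking operator norms, using submultiplicativity together with the third inequality and $|S_P^{-1}| \le |\Gamma^{-1}|$, yields $|\msK(Q) - \msK(P)| \le \bigl(1 + |Q|\,|\H|^2\,|\Gamma^{-1}|\bigr)\,|Q-P|\,|\H|\,|\Gamma^{-1}|$. Running the identical argument with the roles of $P$ and $Q$ interchanged gives the same bound with $|P|$ in place of $|Q|$, and taking the smaller of the two constants produces the stated bound with $\min(|P|,|Q|)$.

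I do not expect a genuine obstacle here: the whole lemma is bookkeeping around the resolvent identity. The only points requiring a little care are orienting the Loewner inequality so that $|S_P^{-1}| \le |\Gamma^{-1}|$ rather than its reverse, factoring the difference $\msK(Q)-\msK(P)$ so that the already-controlled factor $I - \msK(Q)\H$ appears cleanly, and remembering to symmetrize so as to recover the $\min(|P|,|Q|)$ factor instead of settling for just one of the two bounds.
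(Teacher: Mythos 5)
Your proof is correct and is essentially the standard argument behind the cited result of Kwiatkowski and Mandel (the paper itself only states this lemma with a citation, giving no proof): the bounds on $|\msK(Q)|$ and $|I-\msK(Q)\H|$ follow from $\H Q\H^\top + \Gamma \succeq \Gamma$ and submultiplicativity, and the Lipschitz bound follows from the factorization $\msK(Q)-\msK(P) = (I-\msK(Q)\H)(Q-P)\H^\top(\H P\H^\top+\Gamma)^{-1}$ via the resolvent identity, symmetrized in $P$ and $Q$ to obtain the $\min(|P|,|Q|)$ factor. No gaps.
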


    \begin{lemma}[{Properties of the Mean-Update Operator \cite[Lemma 4.10]{kwiatkowski2015convergence}}] \label{lem:MeanOpCtyBddLp}
        Let $\msM$ be the mean-update operator defined in \eqref{eq:MeanOperator}. 
        Let $m \in \R^d$ be a random vector and $Q$ be a random matrix such that $Q \in \mcS_+^d$ almost surely. Let $P \in \mcS_+^d$, $\Gamma \in \mcS_{++}^k$, $\H \in \R^{k \times d}$, $y \in \R^k,$ and $m' \in \R^d$ be deterministic.
        Then, for any $1 \le q <\infty$ and $y \in \R^k$ it holds that
         \begin{align*}
         \begin{split}
                \normn{ |\msM(m, Q; y)- \msM(m', P; y) |}_{q} 
                &\le 
                \norm{|m-m'|}_{q}
                + 
                |H|^2 |\Gamma^{-1}| \normn{|Q|}_{2q} \norm{|m-m'|}_{2q}\\   
                & + 
                \normn{|Q-P|}_{q} |\H| |\Gamma^{-1}|
                \bigl(1 + |\H|^2 |\Gamma^{-1}| |P| \bigr) 
                |y- \H m'|.
        \end{split}        
             \end{align*}
        \end{lemma}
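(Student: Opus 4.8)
The plan is to reduce the bound to the Lipschitz and growth estimates for the Kalman gain operator recorded in Lemma~\ref{lem:KalmanOpCtyBdd}, combined with the triangle inequality and H\"older's inequality; this is essentially \cite[Lemma 4.10]{kwiatkowski2015convergence}, included for completeness. First I would rewrite the difference of the two mean updates in a form that separates the dependence on $m-m'$ from the dependence on $Q-P$. Recalling $\msM(m,Q;y) = m + \msK(Q)(y-\H m)$ and adding and subtracting $\msK(Q)(y - \H m')$, one obtains
\begin{align*}
    \msM(m,Q;y) - \msM(m',P;y)
    &= (m-m') + \msK(Q)(y - \H m) - \msK(P)(y-\H m')\\
    &= \bigl(I - \msK(Q)\H\bigr)(m-m') + \bigl(\msK(Q) - \msK(P)\bigr)(y - \H m').
\end{align*}
Taking $\normn{\cdot}_q$ and applying the triangle inequality, it then suffices to bound the $L^q$ norm of each of the two terms on the right-hand side.

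For the first term I would use $\bigl|\bigl(I - \msK(Q)\H\bigr)(m-m')\bigr| \le |m-m'| + |\msK(Q)|\,|\H|\,|m-m'|$ and then invoke the growth bound $|\msK(Q)| \le |Q|\,|\H|\,|\Gamma^{-1}|$ from Lemma~\ref{lem:KalmanOpCtyBdd}, so that the second piece is at most $|\H|^2|\Gamma^{-1}|\,|Q|\,|m-m'|$. Since $Q$ and $m$ may be dependent, the expectation of the product cannot be factored; instead I would apply H\"older's inequality with conjugate exponents $2,2$ to get $\normn{\,|Q|\,|m-m'|\,}_q \le \normn{|Q|}_{2q}\normn{|m-m'|}_{2q}$, which is precisely where the $2q$-norms in the statement originate. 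This produces the first two summands of the claimed bound. For the second term, since $y$, $\H$, and $m'$ are deterministic, $\bigl|\bigl(\msK(Q)-\msK(P)\bigr)(y-\H m')\bigr| \le |\msK(Q)-\msK(P)|\,|y - \H m'|$, and the Lipschitz estimate of Lemma~\ref{lem:KalmanOpCtyBdd} gives $|\msK(Q) - \msK(P)| \le |Q-P|\,|\H|\,|\Gamma^{-1}|\bigl(1 + \min(|P|,|Q|)|\H|^2|\Gamma^{-1}|\bigr)$; bounding $\min(|P|,|Q|) \le |P|$ and taking $\normn{\cdot}_q$ while pulling out the deterministic factor $|y - \H m'|$ yields the third summand.

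Adding the two contributions via the triangle inequality completes the argument. There is no genuine obstacle here: the only mild point of care is choosing the decomposition in the first display so that the $m-m'$ and $Q-P$ dependencies are cleanly isolated, and remembering that dependence between $Q$ and $m$ forces the use of H\"older (hence the appearance of $\normn{\cdot}_{2q}$) rather than a naive factorization of the expectation; everything else is routine operator-norm bookkeeping.
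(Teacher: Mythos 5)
Your proof is correct and is exactly the standard argument behind this result (which the paper itself only cites from \cite[Lemma 4.10]{kwiatkowski2015convergence} without reproving): the decomposition $\msM(m,Q;y)-\msM(m',P;y)=(I-\msK(Q)\H)(m-m')+(\msK(Q)-\msK(P))(y-\H m')$, the growth and Lipschitz bounds on $\msK$ from Lemma~\ref{lem:KalmanOpCtyBdd}, and Cauchy--Schwarz to handle the dependence between $Q$ and $m$ (which is precisely where the $2q$-norms come from). No gaps.
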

\begin{lemma}[{Properties of the Covariance-Update Operator
\cite[Lemmas 4.6 \& 4.8]{kwiatkowski2015convergence}}] \label{lem:CovarOpCtyBddLp} 
    Let $\msC$ be the covariance-update operator defined in \eqref{eq:CovarianceOperator}. Let $P,Q, m, m',y, H$ and $\Gamma$ all be defined as in Lemma~\ref{lem:MeanOpCtyBddLp}.
    Then, for any $1 \le q <\infty$, it holds that
      \begin{align*}
      \begin{split}
        0 &\preceq \msC(Q) \preceq Q,\qquad 
            |\msC(Q)| \le  |Q|,\\
            \normn{ | \msC(Q) - \msC(P)| }_{q} 
            &\le
            \normn{| Q - P|}_{q}
            (1 + |\H|^2 |\Gamma^{-1}| |P|)\\
            &+ (
                |\H|^2 |\Gamma^{-1}| + 
                |\H|^4 |\Gamma^{-1}|^2 |P|
            ) 
            \normn{|Q|}_{2q} \normn{|Q-P|}_{2q}.
        \end{split}    
         \end{align*}
    \end{lemma}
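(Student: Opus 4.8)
The plan is to establish the three claims in sequence: first the semidefinite ordering $0 \preceq \msC(Q) \preceq Q$, then deduce the operator-norm bound from it, and finally prove the Lipschitz-type estimate, which is the substantive part.

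For the ordering, I would exploit two algebraic rewritings of $\msC(Q) = (I - \msK(Q)H)Q$. Setting $S := H Q H^\top + \Gamma \succ 0$ (positive definite since $\Gamma \succ 0$), the upper bound follows from the symmetric information-gain form $\msC(Q) = Q - Q H^\top S^{-1} H Q$: the subtracted matrix equals $(HQ)^\top S^{-1}(HQ) \succeq 0$, and removing a PSD matrix only decreases $Q$ in the Loewner order. For the lower bound I would verify the Joseph identity
\[
\msC(Q) = (I - \msK(Q)H)\,Q\,(I - \msK(Q)H)^\top + \msK(Q)\,\Gamma\,\msK(Q)^\top,
\]
whose verification reduces to the elementary identity $(I - \msK(Q)H)Q H^\top = \msK(Q)\Gamma$, itself immediate from $\msK(Q)S = Q H^\top$. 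Both summands in the Joseph form are manifestly PSD, giving $\msC(Q) \succeq 0$ for every $Q \in \mcS_+^d$ with no invertibility assumption on $Q$. The bound $|\msC(Q)| \le |Q|$ is then immediate, since for PSD matrices the operator norm is the largest eigenvalue and $0 \preceq \msC(Q) \preceq Q$ forces $\lambda_{\max}(\msC(Q)) \le \lambda_{\max}(Q)$.

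For the Lipschitz bound the key step is the telescoping decomposition, obtained by adding and subtracting $(I - \msK(P)H)Q$,
\[
\msC(Q) - \msC(P) = (I - \msK(P)H)(Q - P) - \bigl(\msK(Q) - \msK(P)\bigr) H Q .
\]
Inserting $(I - \msK(P)H)Q$ rather than $(I - \msK(Q)H)P$ is deliberate: it makes the coefficient of $|Q-P|$ in the first term depend on $|P|$, as the stated bound requires. Taking operator norms, using submultiplicativity, and applying Lemma~\ref{lem:KalmanOpCtyBdd} for each realization of $Q$ gives $|I - \msK(P)H| \le 1 + |P|\,|H|^2|\Gamma^{-1}|$ for the first term, and $|\msK(Q) - \msK(P)|\,|H|\,|Q| \le |Q-P|\,|Q|\,(|H|^2|\Gamma^{-1}| + |P|\,|H|^4|\Gamma^{-1}|^2)$ for the second, after bounding $\min(|P|,|Q|) \le |P|$. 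This yields the pointwise inequality
\[
|\msC(Q) - \msC(P)| \le (1 + |P|\,|H|^2|\Gamma^{-1}|)\,|Q-P| + (|H|^2|\Gamma^{-1}| + |H|^4|\Gamma^{-1}|^2|P|)\,|Q|\,|Q-P|.
\]
Taking the $L^q$ norm, the first summand passes through directly because its prefactor is deterministic, while for the product $|Q|\,|Q-P|$ I would apply Cauchy--Schwarz in the form $\normn{\,|Q|\,|Q-P|\,}_q \le \normn{|Q|}_{2q}\,\normn{|Q-P|}_{2q}$, producing exactly the claimed estimate.

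The main obstacle is algebraic rather than analytic: identifying the Joseph identity for the positivity claim and selecting the correct one of the two possible telescoping decompositions so that $|P|$, not $|Q|$, appears in the first coefficient. Once these identities are in place, the remainder is a routine chain of triangle inequality, operator-norm submultiplicativity, the Kalman-gain estimates of Lemma~\ref{lem:KalmanOpCtyBdd}, and Cauchy--Schwarz.
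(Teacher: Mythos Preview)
The paper does not prove this lemma; it is quoted verbatim as a background result from \cite[Lemmas~4.6 \& 4.8]{kwiatkowski2015convergence}, so there is no in-paper argument to compare against. Your proof is correct and is essentially the standard one: the Joseph form gives $\msC(Q)\succeq 0$, the information-gain form gives $\msC(Q)\preceq Q$, and the telescoping $\msC(Q)-\msC(P)=(I-\msK(P)H)(Q-P)-(\msK(Q)-\msK(P))HQ$ combined with Lemma~\ref{lem:KalmanOpCtyBdd} and Cauchy--Schwarz in $L^q$ yields the Lipschitz bound with the stated constants.
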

    
\begin{theorem}[{Gaussian Norm Concentration, \cite[Exercise 6.3.5]{vershynin2018high}}] \label{thm:SubGaussianConcentration} Let $X \in \R^d$ be a Gaussian random vector with $\E[X] = \mu^X$, $\tvar [X]=\Sigma^X$. Then, for any $t \ge 1,$ with probability at least $1-ce^{-t}$ it holds that 
    \begin{align*}
        | X- \mu^X |_2 \lesssim  
        \sqrt{\ttrace(\Sigma^X)} + \sqrt{t|\Sigma^X| } 
        \lesssim 
        \sqrt{|\Sigma^X| (r_2(\Sigma^X) \lor t )} \,.
    \end{align*}
\end{theorem}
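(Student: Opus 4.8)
The plan is to reduce to the isotropic case and then invoke the classical concentration of Lipschitz functions of a standard Gaussian vector. \textbf{Step 1 (reduction).} By translating it suffices to treat $\mu^X = 0$. Writing $X = (\Sigma^X)^{1/2} Z$ with $Z \sim \Nc(0, I_d)$ and $(\Sigma^X)^{1/2}$ the symmetric positive-semidefinite square root, we have $|X|_2 = f(Z)$ for $f(z) := |(\Sigma^X)^{1/2} z|_2$. First I would observe that $f$ is Lipschitz with constant $L = |(\Sigma^X)^{1/2}| = |\Sigma^X|^{1/2}$, since $|f(z) - f(z')| \le |(\Sigma^X)^{1/2}(z - z')|_2 \le |(\Sigma^X)^{1/2}|\,|z - z'|_2$.

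\textbf{Step 2 (Gaussian concentration).} Next I would apply the Gaussian concentration inequality for Lipschitz functions (a consequence of Gaussian isoperimetry; see e.g.\ \cite{vershynin2018high}), which gives, for every $s \ge 0$,
\begin{align*}
\Prob\bigl( | f(Z) - \E f(Z) | \ge s \bigr) \;\le\; 2 \exp\!\left( - \frac{s^2}{2 |\Sigma^X|} \right).
\end{align*}
Taking $s = \sqrt{2 t |\Sigma^X|}$ bounds the right-hand side by $2 e^{-t}$, so on an event of probability at least $1 - c e^{-t}$ (with $c = 2$) we have $|X|_2 \le \E |X|_2 + \sqrt{2 t |\Sigma^X|}$.

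\textbf{Step 3 (bound the mean, then rewrite).} By Jensen's inequality, $\E |X|_2 \le (\E |X|_2^2)^{1/2} = \ttrace(\Sigma^X)^{1/2}$, which yields the additive form $|X - \mu^X|_2 \lesssim \sqrt{\ttrace(\Sigma^X)} + \sqrt{t |\Sigma^X|}$. To obtain the $r_2$-form I would factor out $\sqrt{|\Sigma^X|}$, use $r_2(\Sigma^X) = \ttrace(\Sigma^X)/|\Sigma^X|$, and apply the elementary bound $\sqrt{a} + \sqrt{b} \le 2\sqrt{a \lor b}$ to get $\sqrt{\ttrace(\Sigma^X)} + \sqrt{t|\Sigma^X|} = \sqrt{|\Sigma^X|}\bigl(\sqrt{r_2(\Sigma^X)} + \sqrt{t}\bigr) \le 2\sqrt{|\Sigma^X|}\,\sqrt{r_2(\Sigma^X) \lor t}$.

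This statement is classical, so I do not expect a genuine obstacle; the only places that need care are identifying the correct Lipschitz constant --- namely the operator norm $|(\Sigma^X)^{1/2}| = |\Sigma^X|^{1/2}$, not the Frobenius norm or the trace --- and the bookkeeping in Step 3. As an alternative to Lipschitz concentration, one could instead diagonalize and write $|X|_2^2 = \sum_{i=1}^d \lambda_i Z_i^2$ with $\lambda_i$ the eigenvalues of $\Sigma^X$ and $Z_i$ i.i.d.\ standard normal, then apply a Bernstein-type tail bound for sums of independent sub-exponential random variables (e.g.\ \cite{vershynin2018high}), noting $\sum_i \lambda_i = \ttrace(\Sigma^X)$ and $\max_i \lambda_i = |\Sigma^X|$; taking square roots recovers the same two-sided form.
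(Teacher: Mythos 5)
Your proof is correct. The paper does not prove this statement at all --- it is quoted as a background result with a citation to \cite[Exercise 6.3.5]{vershynin2018high} --- and your argument (reduce to $f(Z)=|(\Sigma^X)^{1/2}Z|_2$ with Lipschitz constant $|\Sigma^X|^{1/2}$, apply Gaussian concentration, bound the mean by $\sqrt{\ttrace(\Sigma^X)}$ via Jensen, and convert to the $r_2$ form) is precisely the standard solution to that exercise, so there is nothing to compare beyond noting that you have supplied the proof the paper delegates to its reference.
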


\begin{theorem}[{Covariance Bound, \cite[Corollary 2]{koltchinskii2017concentration}}] \label{thm:CovarianceExpectationBoundLq}
    Let $X_1,\dots, X_n$ be i.i.d. copies of a $d$-dimensional Gaussian vector $X$ with $\E[X]=0$ and $\tvar[X]=\Sigma$. Let $\hatSigma = \frac{1}{n}\sum_{i=1}^n X_i X_i^\top$ be the sample covariance estimator.
    For any $q \ge 1,$ it holds that
    \begin{align*}
        \| |\hatSigma - \Sigma | \|_q \lesssim_q  
        |\Sigma| \inparen{
        \sqrt{\frac{r_2(\Sigma)}{n}}
        \lor 
        \frac{r_2(\Sigma)}{n}
        }.
    \end{align*}
\end{theorem}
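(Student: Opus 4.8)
The plan is to prove the bound in two movements — first controlling the expectation $\E|\hatSigma - \Sigma|$, then upgrading to the $L_q$ bound by integrating a tail inequality. Throughout I reduce to standard Gaussians by writing $X_i = \Sigma^{1/2} g_i$ with $g_i \iid \Nc(0, I_d)$, so that $\hatSigma - \Sigma = \Sigma^{1/2}(\hat G - I)\Sigma^{1/2}$ with $\hat G = \frac{1}{n}\sum_{i=1}^n g_i g_i^\top$. Using the variational form of the operator norm of a symmetric matrix together with the substitution $w = \Sigma^{1/2} v$ gives the key identity
$$|\hatSigma - \Sigma| = \sup_{w \in \mathcal{E}}\Bigl| \frac{1}{n}\sum_{i=1}^n \bigl( (g_i^\top w)^2 - |w|^2\bigr)\Bigr|, \qquad \mathcal{E} := \{w : w^\top \Sigma^{-1} w \le 1\},$$
where the supremum runs over the ellipsoid $\mathcal{E}$ (restricted to the range of $\Sigma$ in the degenerate case). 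The geometry of $\mathcal{E}$ is exactly what converts ambient-dimension dependence into effective-rank dependence: its squared Gaussian width $\E \sup_{w\in\mathcal{E}}\langle g, w\rangle^2$ equals $\E\, g^\top \Sigma g = \ttrace(\Sigma)$, so that $r_2(\Sigma)$ enters in place of $d$.

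First I would bound the expectation. For fixed $w$ the summand $(g_i^\top w)^2 - |w|^2$ is a centered sub-exponential variable of scale $|w|^2 \le |\Sigma|$, so each coordinate fluctuates at rate $\sqrt{1/n}\lor 1/n$. To pass to the supremum over $\mathcal{E}$ I would use a chaining bound: the process $w \mapsto \frac{1}{n}\sum_i ((g_i^\top w)^2 - |w|^2)$ is a centered order-two Gaussian chaos whose increments admit a mixed sub-Gaussian/sub-exponential bound in the natural metrics on $\mathcal{E}$. Generic-chaining (Dudley-type) integration of these increments over $\mathcal{E}$ yields
$$\E|\hatSigma - \Sigma| \lesssim |\Sigma|\Bigl( \sqrt{\tfrac{r_2(\Sigma)}{n}} \lor \tfrac{r_2(\Sigma)}{n}\Bigr),$$
the two regimes reflecting the sub-Gaussian and sub-exponential parts of the increment bound.

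Next I would establish concentration of $M := |\hatSigma-\Sigma|$ about its mean. Since $M = \sup_{w\in\mathcal{E}}|\frac{1}{n}\sum_i f_w(g_i)|$ with sub-exponential summands $f_w(g) = (g^\top w)^2 - |w|^2$, I would invoke a Talagrand-type concentration inequality for suprema of unbounded empirical processes. This produces a deviation that is sub-Gaussian at scale $|\Sigma|\sqrt{t/n}$ in the bulk — driven by the weak variance $\sup_w \operatorname{Var}(f_w(g)) \lesssim |\Sigma|^2$ — and sub-exponential at scale $|\Sigma|\,t/n$ in the tail, driven by the sub-exponential envelope, i.e.
$$\P\Bigl(M \ge \E M + c|\Sigma|\bigl(\sqrt{\tfrac t n}\lor \tfrac t n\bigr)\Bigr) \le e^{-t},\qquad t \ge 0.$$
Combined with the expectation bound this gives $\P\bigl(M \ge C|\Sigma|(\sqrt{(r_2(\Sigma)+t)/n}\lor (r_2(\Sigma)+t)/n)\bigr)\le e^{-t}$.

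Finally I would convert this tail into the moment bound via $\E M^q = \int_0^\infty q s^{q-1}\P(M>s)\,ds$: substituting the tail estimate and integrating the residual $e^{-t}$ against $t$ reproduces, after the change of variables, exactly $\| M\|_q \lesssim_q |\Sigma|(\sqrt{r_2(\Sigma)/n}\lor r_2(\Sigma)/n)$, with the $q$-dependence absorbed into the constant. The main obstacle is the expectation step: securing effective-rank rather than ambient-dimension dependence forces the chaining argument over the ellipsoid $\mathcal{E}$, together with the careful split of the chaos increments into their sub-Gaussian and sub-exponential components, since a crude $\varepsilon$-net over the sphere would lose a factor polynomial in $d$.
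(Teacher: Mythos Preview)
The paper does not supply its own proof of this statement: it is quoted verbatim as a background result from \cite[Corollary 2]{koltchinskii2017concentration} and used as a black box throughout Section~\ref{sec:proofmaintheorem}. So there is no ``paper's proof'' to compare against beyond the cited reference.

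Your outline is correct and is precisely the route taken in the Koltchinskii--Lounici paper you are effectively reconstructing: the reduction $|\hatSigma-\Sigma|=\sup_{w\in\mathcal{E}}|\tfrac1n\sum_i((g_i^\top w)^2-|w|^2)|$ over the ellipsoid $\mathcal{E}=\Sigma^{1/2}\{|v|\le 1\}$, the generic-chaining bound on the expectation that trades $d$ for $r_2(\Sigma)$ via the Gaussian width of $\mathcal{E}$, the Bernstein-type deviation inequality for the supremum, and the tail-integration step to reach $L_q$. One small comment: in your concentration step, the ``Talagrand-type inequality for unbounded empirical processes'' needs to be the Gaussian version (e.g.\ Adamczak's inequality or the Gaussian-chaos concentration used in Koltchinskii--Lounici), since the classical bounded-class Talagrand inequality does not directly apply to $f_w(g)=(g^\top w)^2-|w|^2$; but the mixed sub-Gaussian/sub-exponential tail you write down is exactly what those tools deliver.
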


\begin{lemma}\label{lem:TraceProperties}
    Let $A, B\in \mcS^d_+$. It holds that 
    \begin{align*}
        \ttrace(AB) \le |A| \ttrace(B).
    \end{align*}
\end{lemma}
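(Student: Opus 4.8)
The plan is to reduce the claim to a pointwise estimate on eigenvalues via the spectral decomposition of $A$. Since $A \in \mcS^d_+$ is symmetric positive semidefinite, write $A = \sum_{i=1}^d \lambda_i v_i v_i^\top$ where $\lambda_1 \ge \dots \ge \lambda_d \ge 0$ are the eigenvalues of $A$ and $\{v_i\}_{i=1}^d$ is an orthonormal basis of $\R^d$ consisting of associated eigenvectors. By definition of the operator norm, $\lambda_1 = |A|$, so $0 \le \lambda_i \le |A|$ for every $i$.

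Next I would expand the trace using linearity and cyclicity:
\begin{align*}
    \ttrace(AB) = \sum_{i=1}^d \lambda_i \, \ttrace\bigl(v_i v_i^\top B\bigr) = \sum_{i=1}^d \lambda_i \, v_i^\top B v_i.
\end{align*}
Here I would invoke that $B \in \mcS^d_+$: each quadratic form $v_i^\top B v_i$ is nonnegative. This is the step that fixes the direction of the inequality, since it lets me replace each coefficient $\lambda_i$ by the upper bound $|A|$ without flipping any sign:
\begin{align*}
    \ttrace(AB) = \sum_{i=1}^d \lambda_i \, v_i^\top B v_i \le |A| \sum_{i=1}^d v_i^\top B v_i = |A| \, \ttrace(B),
\end{align*}
where the final identity uses that $\{v_i\}_{i=1}^d$ is an orthonormal basis, so $\sum_i v_i^\top B v_i$ is the trace of $B$ in that basis. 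This completes the argument.

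I do not anticipate any real obstacle: the only point requiring care is that \emph{both} $A$ and $B$ must lie in $\mcS^d_+$, since nonnegativity of the terms $v_i^\top B v_i$ is exactly what is needed for the bound $\lambda_i \le |A|$ to be applied termwise. (An equivalent route, if preferred, is to write $\ttrace(AB) = \ttrace(B^{1/2} A B^{1/2})$ via the symmetric square root of $B$, note $B^{1/2} A B^{1/2} \preceq |A| B$ since $A \preceq |A| I$, and conclude by monotonicity of the trace on $\mcS^d_+$.)
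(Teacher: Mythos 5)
Your argument is correct: the spectral decomposition of $A$, the nonnegativity of the quadratic forms $v_i^\top B v_i$ coming from $B \in \mcS^d_+$, and the identity $\sum_{i=1}^d v_i^\top B v_i = \ttrace(B)$ for an orthonormal basis $\{v_i\}$ together give exactly the stated bound, and your parenthetical alternative via $\ttrace(B^{1/2} A B^{1/2})$ is equally valid. The paper states this lemma without proof, so there is nothing to compare against; your write-up supplies a clean, standard justification and correctly identifies that positive semidefiniteness of $B$ is what licenses the termwise replacement $\lambda_i \le |A|$.
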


\begin{lemma}[Cross-Covariance Estimation ---Unstructured Case] \label{lem:CrossCovarianceBound}
    Let $\upr_1,\dots, \upr_N \in \R^d$ be i.i.d. Gaussian random vectors with $\E[ \upr_1] = m$ and $\tvar[\upr_1] = \Cpr$. Let $\eta_1,\dots,\eta_N \in \R^k$ be i.i.d. Gaussian random vectors with $\E [\eta_1] = 0$ and $\tvar[\eta_1]=\Gamma$, and assume that the two sequences are independent. Let
    \begin{align*}
        \hatCpr_{\, \upr\eta} = \frac{1}{N-1} \sum_{n=1}^{N} 
        (\upr_n - \hatmpr)(\eta_n  - \bareta)^{\top},
    \end{align*}
    and assume that $N \ge r_2(\Cpr) \lor r_2(\Gamma)$. Then, 
    \begin{align*}
        \| |\hatCpr_{\, \upr\eta}|\|_q
        \lesssim_q
        (|\Cpr| \lor |\Gamma|) 
        \inparen{
        \sqrt{\frac{r_2(\Cpr)}{N}} 
        \lor  
        \sqrt{\frac{r_2(\Gamma) }{N}} 
        }.
    \end{align*}
\end{lemma}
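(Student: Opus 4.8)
The plan is to reduce the cross-covariance bound to two applications of the sample-covariance estimate from Theorem~\ref{thm:CovarianceExpectationBoundLq} via a polarization trick, together with careful tracking of the centering corrections. First I would dispose of the centering: write
\begin{align*}
    \hatCpr_{\, \upr\eta}
    = \frac{1}{N-1} \sum_{n=1}^N (\upr_n - m)(\eta_n)^\top
    - \frac{N}{N-1}(\barupr - m)\bareta^\top,
\end{align*}
so that the problem splits into a ``raw'' moment term, for which the summands are mean-zero and independent, and a rank-one correction controlled by $\||\barupr - m|\|_{2q}\,\||\bareta|\|_{2q}$, each factor being a Gaussian-norm quantity bounded by Lemma~\ref{lem:GaussianTraceLqBound} at rate $\sqrt{\ttrace(\Cpr)/N}$ and $\sqrt{\ttrace(\Gamma)/N}$ respectively; these are dominated by the claimed bound since $\ttrace = |\cdot|\,r_2(\cdot)$ and $r_2 \le N$ by hypothesis. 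Hence it suffices to bound $\|\,|\frac{1}{N}\sum_n \upr_n \eta_n^\top - 0|\,\|_q$ (after replacing $\upr_n$ by $\upr_n - m$, which is i.i.d.\ $\Nc(0,\Cpr)$).

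For the raw term, I would use the polarization identity on the stacked vector $Z_n := (\upr_n - m, \eta_n) \in \R^{d+k}$, which is i.i.d.\ Gaussian with mean $0$ and block-diagonal covariance $\Sigma_Z = \mathrm{diag}(\Cpr,\Gamma)$ (block-diagonal by the independence assumption). The sample covariance $\hatSigma_Z = \frac1N \sum_n Z_n Z_n^\top$ has the cross-block $\frac1N \sum_n (\upr_n - m)\eta_n^\top$ in its off-diagonal position, and the corresponding block of $\Sigma_Z$ is zero. Therefore $\||\frac1N\sum_n (\upr_n-m)\eta_n^\top|\|_q \le \||\hatSigma_Z - \Sigma_Z|\|_q$, and Theorem~\ref{thm:CovarianceExpectationBoundLq} applied to $Z$ gives
\begin{align*}
    \||\hatSigma_Z - \Sigma_Z|\|_q \lesssim_q |\Sigma_Z|\inparen{\sqrt{\tfrac{r_2(\Sigma_Z)}{N}} \lor \tfrac{r_2(\Sigma_Z)}{N}}.
\end{align*}
Now $|\Sigma_Z| = |\Cpr| \lor |\Gamma|$ and $r_2(\Sigma_Z) = \ttrace(\Sigma_Z)/|\Sigma_Z| = (\ttrace(\Cpr)+\ttrace(\Gamma))/(|\Cpr|\lor|\Gamma|) \le r_2(\Cpr) + r_2(\Gamma) \le 2\bigl(r_2(\Cpr)\lor r_2(\Gamma)\bigr)$. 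Under the hypothesis $N \ge r_2(\Cpr)\lor r_2(\Gamma)$ the $\sqrt{\cdot}$ term dominates, and $\sqrt{(r_2(\Cpr)+r_2(\Gamma))/N} \asymp \sqrt{r_2(\Cpr)/N}\lor\sqrt{r_2(\Gamma)/N}$, which yields exactly the stated bound after absorbing the difference between $\frac1N$ and $\frac1{N-1}$ normalizations into the constant.

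The only mild subtlety — and the place I'd be most careful — is the bookkeeping that converts $|\Sigma_Z|(r_2(\Sigma_Z)/N)^{1/2}$ into the product form $(|\Cpr|\lor|\Gamma|)(\sqrt{r_2(\Cpr)/N}\lor\sqrt{r_2(\Gamma)/N})$: one must check that $|\Sigma_Z|\sqrt{r_2(\Sigma_Z)} = \sqrt{|\Sigma_Z|}\sqrt{\ttrace(\Sigma_Z)} = \sqrt{|\Sigma_Z|}\sqrt{\ttrace(\Cpr)+\ttrace(\Gamma)} \lesssim \sqrt{|\Cpr|\lor|\Gamma|}\bigl(\sqrt{\ttrace(\Cpr)}+\sqrt{\ttrace(\Gamma)}\bigr) = (|\Cpr|\lor|\Gamma|)^{1/2}\bigl((|\Cpr|r_2(\Cpr))^{1/2}+(|\Gamma|r_2(\Gamma))^{1/2}\bigr) \le (|\Cpr|\lor|\Gamma|)\bigl(r_2(\Cpr)^{1/2}+r_2(\Gamma)^{1/2}\bigr)$, which is of the claimed order. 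Everything else is routine triangle-inequality and $L^q$-norm manipulation, and no step presents a genuine obstacle; the reduction via the block-diagonal Gaussian $Z$ is what makes the argument clean.
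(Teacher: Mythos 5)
Your argument is correct, but it is not the route the paper takes: the paper's proof is a two-line citation of an external high-probability bound (\cite[Lemma A.3]{ghattas2022non}) followed by integration of the tail, whereas you give a self-contained derivation from results already in the appendix. Your decomposition $\hatCpr_{\,\upr\eta} = \frac{1}{N-1}\sum_n (\upr_n - m)\eta_n^\top - \frac{N}{N-1}(\hatmpr - m)\bareta^\top$ is the standard centering identity and checks out; the rank-one correction is indeed of order $\sqrt{\ttrace(\Cpr)\ttrace(\Gamma)}/N$, which is dominated by the target because $r_2(\Cpr)/N \le 1$ and $r_2(\Gamma)/N \le 1$ under the hypothesis. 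The key step — embedding the raw cross term as the off-diagonal block of $\hatSigma_Z - \Sigma_Z$ for the stacked mean-zero Gaussian $Z_n = (\upr_n - m, \eta_n)$ with block-diagonal $\Sigma_Z$, using that the operator norm of a submatrix is bounded by that of the full matrix, and then invoking Theorem~\ref{thm:CovarianceExpectationBoundLq} — is sound, and your bookkeeping $|\Sigma_Z| = |\Cpr|\lor|\Gamma|$, $r_2(\Sigma_Z) \le r_2(\Cpr) + r_2(\Gamma) \le 2\bigl(r_2(\Cpr)\lor r_2(\Gamma)\bigr)$ converts the result into the stated product form. What your approach buys is transparency and self-containment: the lemma follows from the Koltchinskii--Lounici-type bound and the Gaussian mean bound already quoted in the paper, with no appeal to the external reference, at the cost of a longer write-up; what the paper's approach buys is brevity, and its cited tail bound also records the explicit $\sqrt{t/N}\lor t/N$ deviation terms, which your $L^q$-only argument does not produce (though they are not needed for this lemma).
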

\begin{proof}
    By \cite[Lemma A.3]{ghattas2022non}, there exists a constant $c$ such that, for all $t \ge 1,$ it holds with probability at least $1-ce^{-t}$ that
    \begin{align*}
        |\hatCpr_{\, \upr\eta}|
        \lesssim
        (|\Cpr| \lor |\Gamma|) 
        \inparen{
        \sqrt{\frac{r_2(\Cpr)}{N}} 
        \lor  
        \sqrt{\frac{r_2(\Gamma) }{N}} 
        \lor 
        \sqrt{\frac{t}{N}} 
        \lor  
        \frac{t}{N}
        }.
    \end{align*}
    Integrating the tail bound then yields the result.
\end{proof}

\begin{lemma} \label{lem:TraceConcentration2}
Let $X_1,\dots, X_n$ be i.i.d. copies of a $d$-dimensional Gaussian vector $X$ with $\E[X] =0$ and $\tvar[X]=\Sigma$. Let $\hatSigma = \frac{1}{n}\sum_{i=1}^n X_i X_i^\top$ be the sample covariance estimator. Then, for any $\delta \ge 1$, it holds with probability at least $1-2e^{-\delta}$ that 
\begin{align*}
    |\ttrace(\hatSigma)-\ttrace(\Sigma)| \le 
    c \ttrace(\Sigma) \inparen{\sqrt{\frac{\delta}{n}} \lor \frac{\delta}{n}}.
\end{align*}
Further, for any $q \ge 1,$
    \begin{align*}
        \normn{ |\ttrace(\hatSigma)-\ttrace(\Sigma)| }_q \lesssim_q \frac{\ttrace(\Sigma)}{\sqrt{n}}.
    \end{align*}
\end{lemma}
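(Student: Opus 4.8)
The plan is to reduce the statement to a concentration inequality for a sum of i.i.d.\ centred sub-exponential random variables, and to control this sum directly through its moment generating function. Writing $X_i = \Sigma^{1/2} Z_i$ with $Z_i \iid \Nc(0, I_d)$, we have
$$
\ttrace(\hatSigma) = \frac1n \sum_{i=1}^n |X_i|^2 = \frac1n\sum_{i=1}^n W_i, \qquad W_i := Z_i^\top \Sigma Z_i,
$$
where the $W_i$ are i.i.d.\ with $\E[W_i] = \ttrace(\Sigma)$. It therefore suffices to control the deviations of $\tfrac1n\sum_{i=1}^n(W_i - \ttrace(\Sigma))$. Throughout, let $\lambda_1,\dots,\lambda_d \ge 0$ denote the eigenvalues of $\Sigma$, and note the two elementary facts $|\Sigma| = \max_k \lambda_k \le \ttrace(\Sigma)$ and $\sum_{k=1}^d \lambda_k^2 \le |\Sigma|\,\ttrace(\Sigma) \le \ttrace(\Sigma)^2$, which are what make the final bound dimension-free.

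For the moment generating function, recall that for a standard Gaussian vector $Z$ and $s < (2|\Sigma|)^{-1}$ one has $\E[e^{sZ^\top \Sigma Z}] = \det(I - 2s\Sigma)^{-1/2} = \prod_{k=1}^d (1-2s\lambda_k)^{-1/2}$. Hence, by independence of the $W_i$, for $|s| \le (4\ttrace(\Sigma))^{-1}$,
$$
\log \E\Bigl[\exp\Bigl(s\sum_{i=1}^n (W_i - \ttrace(\Sigma))\Bigr)\Bigr] = -\frac{n}{2}\sum_{k=1}^d\bigl(\log(1-2s\lambda_k) + 2s\lambda_k\bigr) \le 2 n s^2 \ttrace(\Sigma)^2,
$$
where the inequality uses $-\log(1-x) - x \le x^2$ for $|x| \le 1/2$ (applied with $x = 2s\lambda_k$, which is admissible since $|s| \le (4|\Sigma|)^{-1}$) together with $\sum_k \lambda_k^2 \le \ttrace(\Sigma)^2$. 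A standard Chernoff argument --- optimizing over $s$ and distinguishing whether the unconstrained optimizer $s^* = t/(4\ttrace(\Sigma)^2)$ lies in the admissible range --- then yields, for every $t > 0$,
$$
\Prob\bigl(|\ttrace(\hatSigma) - \ttrace(\Sigma)| > t\bigr) \le 2\exp\Bigl(-c\, n\min\Bigl(\frac{t^2}{\ttrace(\Sigma)^2}, \frac{t}{\ttrace(\Sigma)}\Bigr)\Bigr).
$$
Substituting $t = c'\,\ttrace(\Sigma)\bigl(\sqrt{\delta/n} \lor \delta/n\bigr)$ for a suitable universal constant $c'$ gives the claimed high-probability bound for $\delta \ge 1$: in the regime $\delta \le n$ the minimum equals the quadratic term and the exponent becomes $\asymp -\delta$, while for $\delta > n$ it equals the linear term and the exponent is again $\asymp -\delta$.

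For the $L_q$ bound, integrate the tail estimate. Setting $Y := |\ttrace(\hatSigma) - \ttrace(\Sigma)|/\ttrace(\Sigma)$, the displayed bound reads $\Prob(Y > s) \le 2e^{-cns^2}$ for $0 \le s \le 1$ and $\Prob(Y > s) \le 2e^{-cns}$ for $s > 1$. Then $\E[Y^q] = \int_0^\infty q s^{q-1}\Prob(Y > s)\,ds$ splits into a contribution over $[0, n^{-1/2}]$ bounded crudely by $n^{-q/2}$; a Gaussian-type contribution over $[n^{-1/2}, 1]$ which, after the substitution $u = cns^2$, equals $c^{-q/2}n^{-q/2}q\,\Gamma(q/2)\lesssim_q n^{-q/2}$; and an exponential-type contribution over $[1,\infty)$ bounded by $\lesssim_q n^{-q}\Gamma(q)$, which is negligible. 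Altogether $\|Y\|_q \lesssim_q n^{-1/2}$, i.e.\ $\|\ttrace(\hatSigma)-\ttrace(\Sigma)\|_q \lesssim_q \ttrace(\Sigma)/\sqrt n$.

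All the computations are routine; the only point requiring a little care is the two-regime bookkeeping --- in the Chernoff step (feasibility of the optimal $s$) and correspondingly in splitting the tail integral at $s \asymp n^{-1/2}$ --- so that both the sub-Gaussian $\sqrt{\delta/n}$ behaviour and the sub-exponential $\delta/n$ behaviour are captured. (Alternatively, one may derive the tail bound by combining the Hanson--Wright inequality, which gives $\|W_i - \ttrace(\Sigma)\|_{\psi_1} \lesssim \ttrace(\Sigma)$, with Bernstein's inequality for sums of independent sub-exponential variables; the direct moment generating function computation above has the advantage of being self-contained and invoking only the standard formula for the Laplace transform of a Gaussian quadratic form.)
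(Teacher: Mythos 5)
Your proof is correct, and the key step is obtained by a genuinely different (though closely related) route from the paper's. The paper works coordinate-wise: writing $Z_{ij}=\Sigma_{jj}^{-1/2}X_{ij}$, it bounds $\bigl\|\sum_j \Sigma_{jj}(Z_{ij}^2-\E Z_{ij}^2)\bigr\|_{\psi_1}\le C\,\ttrace(\Sigma)$ by the triangle inequality for the $\psi_1$-norm (no independence across coordinates needed) and then invokes the off-the-shelf Bernstein inequality for sums of independent sub-exponential variables --- essentially the alternative you sketch in your closing parenthetical, except via a diagonal-entries triangle inequality rather than Hanson--Wright. You instead diagonalize, compute the Laplace transform of the Gaussian quadratic form exactly, and run Chernoff by hand, with the dimension-free character isolated in the elementary bound $\sum_k\lambda_k^2\le|\Sigma|\,\ttrace(\Sigma)\le\ttrace(\Sigma)^2$. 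Your version is self-contained and yields explicit constants; the paper's is shorter given the quoted Bernstein inequality and would extend verbatim to non-Gaussian sub-Gaussian coordinates. Both arrive at the identical two-regime tail bound $2\exp\bigl(-cn\min(t^2/\ttrace(\Sigma)^2,\,t/\ttrace(\Sigma))\bigr)$, and the subsequent integration to get the $L_q$ bound is the same in substance (the paper splits the integral at a threshold $\zeta$ rather than at $s\asymp n^{-1/2}$, but the resulting Gamma-function bookkeeping matches yours). One small notational slip on your side: you state the MGF bound for $|s|\le(4\ttrace(\Sigma))^{-1}$ but justify admissibility of $x=2s\lambda_k$ using $|s|\le(4|\Sigma|)^{-1}$; since $\ttrace(\Sigma)\ge|\Sigma|$ the former constraint is the stronger one, so nothing breaks, but the two should be stated consistently.
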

\begin{proof}
    Let $Z_{ij} = \Sigma^{-1/2}_{jj} X_{ij}$ and note that, for any $t>0,$
    \begin{align*}
        \P(|\ttrace(\hatSigma)-\ttrace(\Sigma)| > t)
        =
        \P(|\ttrace(\hatSigma-\Sigma)| > t)
        &=
        \P \inparen{
        \abs{\sum_{i=1}^n \inparen{\sum_{j=1}^d (X^2_{ij} - \E X^2_{ij})  }} > nt}\\
        &=
        \P \inparen{
        \abs{\sum_{i=1}^n \inparen{\sum_{j=1}^d \Sigma_{jj} (Z^2_{ij} - \E Z^2_{ij})  }} > nt}.
    \end{align*}
    Note that the random variables $\sum_{j=1}^d \Sigma_{jj} (Z^2_{ij} - \E Z^2_{ij})$ for $i=1,\dots,n$ are independent, mean-zero and sub-exponential with $\psi_1$ norm at most $C\ttrace(\Sigma),$ since  
    \begin{align*}
        \norm{\sum_{j=1}^d \Sigma_{jj} (Z^2_{ij} - \E Z^2_{ij}) }_{\psi_1}
        &\le \sum_{j=1}^d \Sigma_{jj} \norm{ Z^2_{ij} - \E Z^2_{ij}}_{\psi_1} 
        \le C \sum_{j=1}^d \Sigma_{jj} \norm{ Z^2_{ij}}_{\psi_1} \\
        &= C \sum_{j=1}^d \Sigma_{jj} \norm{Z_{ij}}^2_{\psi_2} 
       \le C  \sum_{j=1}^d \Sigma_{jj}
        = C \ttrace(\Sigma).
    \end{align*}
    The second inequality holds due to the Centering Lemma, \cite[Lemma 2.6.8]{vershynin2018high}. Therefore, by Bernstein's inequality we have 
    \begin{align*}
        \P \inparen{
        \abs{\sum_{i=1}^n \inparen{\sum_{j=1}^d \Sigma_{jj} (Z^2_{ij} - \E Z^2_{ij})  }} > nt}
        &\le 
        2 \exp \inparen{-c \min \inparen{
        \frac{nt^2}{(\ttrace(\Sigma))^2},
        \frac{nt}{\ttrace(\Sigma)}
        }}.
    \end{align*}
    For the expectation bound, we note that 
    \begin{align*}
        \normn{ |\ttrace(\hatSigma)-\ttrace(\Sigma)| }_q^q
        &= \int_0^\infty \P(|\ttrace(\hatSigma)-\ttrace(\Sigma)|^q > t) \, dt\\
        &\le \zeta^q + q \int_C^\infty t^{q-1} \P(|\ttrace(\hatSigma)-\ttrace(\Sigma)| > t) \, dt\\
        &\le \zeta^q + 2q \int_0^\infty t^{q-1} 
         \exp \inparen{-c \min \inparen{
        \frac{nt^2}{(\ttrace(\Sigma))^2},
        \frac{nt}{ \ttrace(\Sigma)}
        }}dt\\
        &= \zeta^q + 2qc 
        \max \inparen{
        \frac{\Gamma(q/2) (\ttrace(\Sigma))^q}{n^{q/2}},
        \frac{\Gamma(q) (\ttrace(\Sigma))^q}{n^{q}}
        }.
    \end{align*}
    Taking $\zeta=\ttrace(\Sigma)/n$, it then follows that 
    \begin{align*}
        \normn{ |\ttrace(\hatSigma)-\ttrace(\Sigma)| }_q
        & \lesssim 
        \zeta + c\ttrace(\Sigma) 
        \max \inparen{
        \frac{1}{\sqrt{n}},
        \frac{1}{n}
        } \lesssim \frac{ \ttrace(\Sigma)}{\sqrt{n}}.
    \end{align*}
\end{proof}

\begin{lemma} \label{lem:GaussianTraceLqBound}
    Let $X_1,\dots, X_n$ be i.i.d. copies of a $d$-dimensional Gaussian vector $X$ with $\E[X]=\mu$ and $\tvar[X]=\Sigma$. Let $\barX = \frac{1}{N}\sum_{n=1}^N X_n$. Then, for any $q \ge 1,$
    \begin{align*}
    \| | \bar{X} - \mu | \|_q
    \lesssim_q 
    \sqrt{\frac{\ttrace(\Cpost)}{N}} \,.
\end{align*}
\end{lemma}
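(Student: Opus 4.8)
The plan is to reduce the bound to a statement about the $L^q$-norm of the Euclidean norm of a single centered Gaussian vector, since $\barX - \mu$ is itself Gaussian. Concretely, because the $X_n$ are i.i.d.\ Gaussian with mean $\mu$ and covariance $\Sigma$, the sample mean satisfies $\barX - \mu \sim \Nc\bigl(0, \tfrac{1}{N}\Sigma\bigr)$. So it suffices to control $\||Z|\|_q$ for $Z \sim \Nc(0, N^{-1}\Sigma)$, and the claimed bound $\||Z|\|_q \lesssim_q \sqrt{\ttrace(\Sigma)/N}$ is exactly the $L^q$ version of the Gaussian norm concentration inequality, Theorem~\ref{thm:SubGaussianConcentration}, applied with $\Sigma^X = N^{-1}\Sigma$ (so $\ttrace(\Sigma^X) = \ttrace(\Sigma)/N$ and $|\Sigma^X| = |\Sigma|/N \le \ttrace(\Sigma)/N$).

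The key steps, in order, are as follows. First, observe $\barX - \mu \sim \Nc(0, N^{-1}\Sigma)$ by independence and the closure of Gaussians under linear combinations. Second, invoke Theorem~\ref{thm:SubGaussianConcentration}: for every $t \ge 1$, with probability at least $1 - ce^{-t}$ one has $|\barX - \mu| \lesssim \sqrt{|\Sigma^X|(r_2(\Sigma^X) \lor t)} \le \sqrt{\ttrace(\Sigma^X) + t|\Sigma^X|} \le \sqrt{\ttrace(\Sigma)/N}\,\sqrt{1 \lor t} \lesssim \sqrt{\ttrace(\Sigma)/N}\,(1 \lor \sqrt{t})$, using $|\Sigma^X| \le \ttrace(\Sigma^X) = \ttrace(\Sigma)/N$ and $r_2(\Sigma^X) \le \ttrace(\Sigma^X)/|\Sigma^X|$ is irrelevant here since we can just bound $r_2 \cdot |\Sigma^X| = \ttrace(\Sigma^X)$ directly. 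Third, integrate the resulting sub-Gaussian tail: writing $Y = |\barX - \mu|/\sqrt{\ttrace(\Sigma)/N}$, we have $\P(Y > a(1 \lor \sqrt{t})) \le ce^{-t}$ for a universal constant $a$, which gives $\P(Y > s) \le c'e^{-s^2/a^2}$ for $s \ge a$, and hence $\E[Y^q] = \int_0^\infty \P(Y^q > u)\,du \le a^q + q\int_a^\infty s^{q-1} c' e^{-s^2/a^2}\,ds \lesssim_q 1$. Taking $q$-th roots yields $\||\barX - \mu|\|_q \lesssim_q \sqrt{\ttrace(\Sigma)/N}$, as desired.

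The only mild subtlety---hardly an obstacle---is making the tail-integration step clean: Theorem~\ref{thm:SubGaussianConcentration} is stated for $t \ge 1$, so one must handle small deviations separately by simply bounding $\P(Y > s) \le 1$ for $s \le a$, which contributes the $a^q$ term above; the remaining integral is a standard Gaussian moment computation giving a constant of order $a^q \Gamma(q/2)$ or similar, depending only on $q$. No dependence on $d$ enters because $r_2(\Sigma^X)|\Sigma^X| = \ttrace(\Sigma^X)$ absorbs the intrinsic-dimension factor exactly. This is the kind of routine ``integrate the tail'' argument already used in the proofs of Lemma~\ref{lem:CrossCovarianceBound} and Lemma~\ref{lem:TraceConcentration2} in this paper, so it can be carried out in a line or two.
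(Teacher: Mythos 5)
Your proposal is correct and follows essentially the same route as the paper: observe that $\barX-\mu$ is Gaussian with covariance $\Sigma/N$, apply the Gaussian norm concentration bound of Theorem~\ref{thm:SubGaussianConcentration}, and integrate the resulting sub-Gaussian tail to obtain the $L^q$ bound. Your normalization via $Y=|\barX-\mu|/\sqrt{\ttrace(\Sigma)/N}$ is a cosmetic simplification of the paper's change-of-variables computation, not a different argument.
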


\begin{proof}
    Let $c_2 := c_1 \sqrt{\ttrace(\Cpost)/N}$ where $c_1$ is a sufficiently large positive constant, then 
\begin{align*}
    \E \insquare{| \bar{X} - \mu |^q}
    &= \int_0^\infty \P (| \bar{X} - \mu |^q > y ) \, dy
    \le c_2^q + \int_{c_2}^\infty \P (| \bar{X} - \mu |^q > y ) \, dy\\
    &= c_2^q+ \int_{c_2}^\infty q y^{q-1} \P (| \bar{X} - \mu | > y )\, dy\\
    &= c_2^q+ \int_{c_2- c\ttrace(\Cpost/N)}^\infty q \inparen{c\sqrt{\frac{\ttrace(\Cpost) }{N}} + t}^{q-1} 
    \P \inparen{  | \bar{X} - \mu | > c\sqrt{\frac{\ttrace(\Cpost) }{N}} + t } dt
    \end{align*}
    where the last equality holds by a change of variable. By Theorem~\ref{thm:SubGaussianConcentration} it follows that $\P(|\barX-\mu| \ge c\sqrt{\ttrace(\Sigma)} + t) \le \exp(-ct^2/|\Sigma|)$, and so the expression in the above display is bounded above by
    \begin{align*}
    &c_2^q
    + \int_{c_2- c\ttrace(\Cpost/N)}^\infty q 
    \inparen{ c\sqrt{\frac{\ttrace(\Cpost) }{N}} + t}^{q-1} 
    \exp \inparen{-\frac{c_2nt^2}{|\Cpost|} } dt\\
    &\lesssim
    c_2^q
    + \int_{0}^\infty q \inparen{
    \inparen{\frac{c\ttrace(\Cpost)}{N}}^{(q-1)/2} + t^{q-1} }
    \exp \inparen{-\frac{c_2nt^2}{|\Cpost|} } dt\\
    &=
    c_2^q
    + 
    q \inparen{
    \frac{1}{2} \Gamma(q/2) \inparen{\frac{|\Cpost|}{N}}^{q/2}
    + 
    \frac{1}{2} \inparen{\frac{c\ttrace(\Cpost) }{N}}^{(q-1)/2}
     \sqrt{\frac{\pi |\Cpost|}{N}}}\\
     &\lesssim
    c_2^q
    + 
    q \inparen{
    \frac{1}{2} \Gamma(q/2) \inparen{\frac{|\Cpost|}{N}}^{q/2}
    + 
    \frac{1}{2} \inparen{\frac{c\ttrace(\Cpost) }{N}}^{q/2}}\\
    & \lesssim 
    \inparen{\frac{\ttrace(\Cpost) }{N}}^{q/2}.
\end{align*}
Therefore, 
\begin{align*}
    \normn{ | \bar{X} - \mu | }
    \lesssim_q 
    c_2 + 
    \sqrt{\frac{|\Cpost|}{N}}
    + 
    c \sqrt{\frac{\ttrace(\Cpost) }{N}}
    \lesssim 
    \sqrt{\frac{\ttrace(\Cpost) }{N}},
\end{align*}
where the last inequality holds since $\ttrace(\Cpost) \ge |\Cpost|$ and the choice of $c_2$. 
\end{proof}

\end{document}